\documentclass[preprint]{elsarticle}

\usepackage{graphicx}
\usepackage{listings} 
\usepackage{amsmath}
\usepackage{listings} 
\usepackage{amsthm}
\usepackage{xspace}
\usepackage{latexsym,amssymb,stmaryrd}
\usepackage{color}
\usepackage{tikz}
\usepackage{prooftree}
\usepackage{proof}
\usetikzlibrary{calc}
\usetikzlibrary{shapes,snakes}
\usepackage{float}
\restylefloat{figure}
\usepackage{wrapfig,subfig}

\theoremstyle{plain}

\newtheorem*{theorem*}{Theorem}
\newtheorem{theorem}{Theorem}
\newtheorem{lemma}[theorem]{{\bf Lemma}}
\newtheorem{definition}[theorem]{{\bf Definition}}
\newtheorem{proposition}[theorem]{{\bf Proposition}}

\renewenvironment{proof}[1][Proof]{\begin{trivlist}
\item[\hskip \labelsep {\bfseries #1}]}{\end{trivlist}}

\newcommand*{\ttfamilywithbold}{\fontfamily{pcr}\selectfont}
\definecolor{darkRed}{RGB}{100,0,10}
\definecolor{darkBlue}{RGB}{10,0,100}
\lstset{language=Java,
  basicstyle=\ttfamily,
  keywordstyle=\ttfamilywithbold\bfseries\color{darkRed},
  showstringspaces=false,
  mathescape=true,
  xleftmargin=0pt,
  xrightmargin=0pt,
  breaklines=false,
  breakatwhitespace=false,
  breakautoindent=false,
  linewidth=4\textwidth,
morekeywords={%
  mut,imm,
  read,capsule,lent
  }%
 }

\newcommand{\Q}{\lstinline}

\newcommand{\refToFigure}[1]{Figure~\ref{fig:#1}}
\newcommand{\refToSection}[1]{Section~\ref{sect:#1}}
\newcommand{\refToLemma}[1]{Lemma~\ref{lemma:#1}}
\newcommand{\refToTheorem}[1]{Theorem~\ref{theo:#1}}

\newcommand{\Space}{\hskip 0.8em}
\newcommand{\HSep}{\hbox to \textwidth{\bf\hrulefill}}

\newcommand{\Tuple}[1]    {\langle{#1}\rangle}

\newcommand{\FourTuple}[4]     {\Tuple{{#1},{#2},{#3},{#4}}}
\newcommand{\SubstFun}[2]{#1[#2]}

\newcommand{\Rule}[3]{\displaystyle                  
\frac{#1}{#2}        
{#3}     
}

\newcommand{\NamedRule}[4]{\scriptstyle{\textsc{(#1)}}
\displaystyle                  
\frac{%
\begin{array}{l}#2\end{array}%
}{#3}        
\begin{array}{l}#4\end{array}     
}

\newcommand{\NamedRuleOL}[3]{\scriptstyle{\textsc{(#1)}}
\displaystyle                  
{%
\begin{array}{l}#2\end{array}}\           
\begin{array}{l}#3\end{array}     
}

\newcommand{\rn}[1]{{\scriptsize (\textsc{#1})}}					

\newenvironment{grammatica}{$\begin{array}{lcll}}{\end{array}$}
\newcommand{\produzione}[3]{#1&::=&#2&\mbox{#3}}
\newcommand{\produzioneinline}[2]{#1::=#2}
\newcommand{\seguitoproduzione}[2]{&&#1&\mbox{#2}}
\newcommand{\terminale}[1]{\texttt{#1}}
\newcommand{\metavariable}[1]{\mathit{#1}}


\newcommand{\cd}{\metavariable{cd}}

\newcommand{\fd}{\metavariable{fd}}
\newcommand{\fds}{\metavariable{fds}}
\newcommand{\md}{\metavariable{md}}
\newcommand{\mds}{\metavariable{mds}}
\newcommand{\x}{\metavariable{x}}
\newcommand{\y}{\metavariable{y}}
\newcommand{\z}{\metavariable{z}}
\newcommand{\xs}{\metavariable{xs}}
\newcommand{\ys}{\metavariable{ys}}
\newcommand{\zs}{\metavariable{zs}}
\newcommand{\m}{\metavariable{m}}
\newcommand{\e}{\metavariable{e}}
\newcommand{\es}{\metavariable{es}}
\newcommand{\C}{\metavariable{C}}
\newcommand{\D}{\metavariable{D}}
\newcommand{\f}{\metavariable{f}}
\newcommand{\dec}{\metavariable{d}}
\newcommand{\decs}{\metavariable{ds}}
\newcommand{\X}{\metavariable{X}} 

\newcommand{\this}{\terminale{this}}
\newcommand{\mutable}{\terminale{mut}}
\newcommand{\imm}{\terminale{imm}}
\newcommand{\lent}{\terminale{lent}}
\newcommand{\readable}{\terminale{read}}
\newcommand{\capsule}{\terminale{capsule}}

\newcommand{\Field}[2]{#1\ #2\terminale{;}}
\newcommand{\MethDec}[5]{{#1\ #2\ \terminale{(}#3,#4\terminale{)}\ \terminale{\{}#5\terminale{\}}}}
\newcommand{\Param}[2]{#1\ #2}
\newcommand{\FieldAccess}[2]{#1\terminale{.}#2}
\newcommand{\MethCall}[3]{#1\terminale{.}#2\terminale{(}#3\terminale{)}}
\newcommand{\FieldAssign}[3]{#1\terminale{.}#2\terminale{=}#3}
\newcommand{\ConstrCall}[2]{\terminale{new}\,#1\!\terminale{(}#2\terminale{)}}
\newcommand{\Block}[2]{\terminale{\{}#1\ #2\terminale{\}}}
\newcommand{\Dec}[3]{#1\,#2\terminale{=}#3\terminale{;}}
\newcommand{\Sequence}[2]{#1\terminale{;} #2}

\newcommand{\WellFormedTypeCtx}[1]{\vdash#1}

\newcommand{\MutGroup}{\metavariable{xs}}
\newcommand{\intType}{\terminale{int}}
\newcommand{\T}{\metavariable{T}}
\newcommand{\TPrime}{{\T}'}
\newcommand{\Type}[2]{#1\,#2}
\newcommand{\IsWellTyped}[1]{\vdash #1}
\newcommand{\TypeCheckGround}[2]{\vdash #1:#2}
\newcommand{\TypeCheck}[5]{#1;#2;#3\vdash #4:#5}
\newcommand{\AuxTypeCheck}[6]{#1;#2\,[#3];#4\vdash #5:#6}
\newcommand{\TypeCheckShort}[3]{#1\vdash #2:#3}
\newcommand{\TypeDec}[2]{#2{:}#1}
\newcommand{\LentLocked}{\metavariable{xss}}
\newcommand{\StronglyLocked}{\metavariable{ys}}
\newcommand{\Reduct}[2]{#1_{|#2}}
\newcommand{\Extends}[3]{#1\sqsubseteq_{#2}#3}
\newcommand{\LentVars}[1]{\aux{lentVars}(#1)}

\newcommand{\LessEq}[2]{#1\sqsubseteq#2}

\newcommand{\dv}{\metavariable{dv}}
\newcommand{\dvs}{\metavariable{dvs}}
\newcommand{\Ctx}[1]{\ctx[#1]}

\newcommand{\CtxP}[1]{\ctxP[#1]}

\newcommand{\ctx}{{\cal{E}}}
\newcommand{\genCtx}{{\cal{G}}}
\newcommand{\GenCtx}[1]{\genCtx[#1]}
\newcommand{\genCtxP}{{\cal{G'}}}
\newcommand{\GenCtxP}[1]{\genCtxP[#1]}
\newcommand{\Xctx}{{\cal{C}}}
\newcommand{\ctxP}{{\cal{E}'}}

\newcommand{\emptyctx}{[\ ]}
\newcommand{\val}{\metavariable{v}}
\newcommand{\stVal}{\metavariable{rv}}
\newcommand{\valPrime}{\metavariable{u}}
\newcommand{\vals}{\metavariable{vs}}
\newcommand{\vs}{\metavariable{vs}}

\newcommand{\reduce}[2]{#1\longrightarrow#2}
\newcommand{\Subst}[3]{#1[#2/#3]}
\newcommand{\Update}[4]{{#1[#2.#3{=}#4]}}

\newcommand{\congruence}[2]{{#1}\cong{#2}}

\newcommand{\extractMod}[1]{\mu^{#1}}
\newcommand{\aux}[1]{\textsf{#1}}
\newcommand{\notRef}[1]{{\aux{notVar}}(#1)}
\newcommand{\extractDec}[2]{\aux{dec}(#1,#2)}
\newcommand{\extractType}[2]{\aux{type}(#1,#2)}
\newcommand{\typeOf}[1]{\aux{type}(#1)}
\newcommand{\extractField}[3]{\aux{get}(#1,#2,#3)}
\newcommand{\fieldOf}[2]{\aux{get}(#1,#2)}
\newcommand{\fields}[1]{\aux{fields}(#1)}
\newcommand{\method}[2]{{\aux{method}(#1,#2)}}
\newcommand{\FV}[1]{\aux{FV}(#1)}
\newcommand{\HB}[1]{\aux{HB}(#1)}
\newcommand{\dom}[1]{\aux{dom}(#1)}
\newcommand{\domMut}[1]{\aux{dom}^\mutable(#1)}

\newcommand{\domGeqMut}{{\aux{dom}^{\geq\terminale{mut}}}}
\newcommand{\ImmClosed}[2]{#1\models\aux{imm-closed}(#2)}
\newcommand{\allImm}[1]{\forall^\imm(#1)}
\newcommand{\allMut}[1]{\forall^{\geq\mutable}(#1)}
\newcommand{\noCapture}[2]{#1{\parallel}#2}

\newsavebox{\saveScaled}
\newcommand*{\varScalableEnvironmentWidth}{1}
\newcommand*{\varScalableEnvironmentHeight}{1}
\newenvironment{Scaled}[2]{%
  \renewcommand*{\varScalableEnvironmentWidth}{#1}
  \renewcommand*{\varScalableEnvironmentHeight}{#2}
  \begin{lrbox}{\saveScaled}
  }{
  \end{lrbox}
  \noindent\scalebox{\varScalableEnvironmentWidth}[\varScalableEnvironmentHeight]{\usebox{\saveScaled}}
  }
%
%
%
\newcommand{\storableVal}{right-value}

\newcommand{\storableVals}{right-values}
\newcommand{\preRedex}{{\rho}}
\newcommand{\deriv}{{\cal D}}
\newcommand{\WFrv}[1]{\models#1}

\newcommand{\WFdv}[1]{\models_{st}#1}

\newcommand{\TypeEnv}[1]{\Gamma_{\!#1}}

\newcommand{\DecsOK}[4]{#1;#2;#3\vdash #4\,{\tt OK}}
\newcommand{\AuxDecsOK}[5]{#1;#2[#3];#4\vdash #5\,{\tt OK}}

\newcommand{\Capsulectx}[1]{\Decctx{\cx}{#1}}
\newcommand{\Immctx}[1]{\Decctx{\ix}{#1}}
\newcommand{\CapsulectxP}[1]{{\cal C}'_{\cx}[#1]}
\newcommand{\ImmctxP}[1]{{\cal C}'_{\ix}[#1]}
\newcommand{\ImmctxS}[1]{{\cal C}''_{\ix}[#1]}
\newcommand{\decctx}[1]{{\cal C}\!_{#1}}
\newcommand{\Decctx}[2]{{\cal C}\!_{#1}[#2]}
\newcommand{\DecEvCtx}[2]{#1_{#2}}

\newcommand{\mux}{\mu\hspace{.015cm}\x}
\newcommand{\cx}{{\tt c}\hspace{.015cm}\x}
\newcommand{\ix}{{\tt i}\hspace{.015cm}\x}

\newcommand{\Range}[3]{\forall#1\in#2..#3}

\newcommand{\cOrx}{w}

\newcommand{\connected}[3]{#2\stackrel{#1}{\longrightarrow}#3}

\begin{document}\newif\ifsubmit
\submitfalse
\ifsubmit
\newcommand{\EZ}[1]{#1}
\newcommand{\EZComm}[1]{}
\newcommand{\PG}[1]{#1}
\newcommand{\PGComm}[1]{}
\newcommand{\MS}[1]{#1}
\newcommand{\MSComm}[1]{}
\newcommand{\JC}[1]{#1}

\newcommand{\JCComm}[1]{}
\else
\newcommand{\EZ}[1]{\textcolor{blue}{#1}}

\newcommand{\EZComm}[1]{{\scriptsize \textcolor{blue}{[Elena{:} #1]}}}

\newcommand{\PG}[1]{\textcolor{blue}{#1}}

\newcommand{\PGComm}[1]{{\scriptsize \textcolor{magenta}{[Paola{:} #1]}}}
\newcommand{\MS}[1]{\textcolor{green}{#1}}
\newcommand{\MSComm}[1]{{\scriptsize \textcolor{green}
{[MS{:} #1]}}}
\newcommand{\JC}[1]{\textcolor{red}{#1}}
\newcommand{\JCComm}[1]{{\scriptsize \textcolor{red}
{[JC{:} #1]}}}

\fi

\begin{frontmatter}

\title{{Flexible recovery of uniqueness and immutability\\
(Extended Version)}}

\author[pg]{Paola Giannini\corref{cor1}}
\ead{giannini@di.unipmn.it}

\author[ms]{Marco Servetto}
\ead{marco.servetto@ecs.vuw.ac.nz}

\author[ez]{Elena Zucca}
\ead{elena.zucca@unige.it}

\author[ms]{James Cone}
\ead{james.cone@ecs.vuw.ac.nz}

\cortext[cor1]{Principal corresponding author}

\address[ms]{School of Engineering and Computer Science, Victoria University of Wellington, Gate 6, Kelburn Parade
Wellington, New Zealand}
\address[pg]{Computer Science Institute, DiSIT, Universit\`a del Piemonte Orientale, viale Teresa Michel 11,  Alessandria, Italy\footnote{This original research has the financial support of the Universit\`a  del Piemonte Orientale.}
}
\address[ez]{DIBRIS, Universit\`a di Genova, Via Dodecaneso 35, Genova, Italy}
 
\begin{abstract}
We present an imperative object calculus where types are annotated with qualifiers for aliasing {and mutation} control.  There are two key novelties with respect to similar proposals. {First, the type system is very expressive. Notably, it adopts the \emph{recovery} approach, that is, using the
type context to justify strengthening types, greatly improving its power by permitting to recover uniqueness and immutability properties even in presence of other references. This is achieved by rules which restrict the use of such other references in the portion of code which is recovered.  Second, execution is modeled by a non standard operational model, where} 
properties of qualifiers can be directly expressed on source terms, rather than as invariants on an auxiliary structure which mimics physical memory. Formally, this is achieved by the block construct, introducing local variable declarations, which, when evaluated, play the role of store.
\end{abstract}

\begin{keyword}
{Type systems; Imperative calculi; Immutability; Aliasing}
\end{keyword}
\end{frontmatter}

\section{Introduction}
In languages with state and  mutations,  keeping control of aliasing relations is a key issue for correctness. This is  {exacerbated} by  concurrency mechanisms, since side-effects in one thread can affect the behaviour of another thread, hence unpredicted aliasing can induce unplanned/unsafe communication.

For these reasons, the last few decades have seen considerable interest in type systems for controlling sharing and interference, to make programs easier to maintain and understand, notably using \emph{type qualifiers} to restrict the usage of references \cite{ZibinEtAl10,GordonEtAl12,NadenEtAl12,ClebschEtAl15}. 

{In particular, it is very useful for a programmer to be able to rely on the \emph{capsule} and \emph{immutability} properties of a reference $\x$. To informally explain their meaning, let us consider the store as a graph, where nodes contain records of fields, which may be
references to other nodes, and let $\x$ be a reference\footnote{{Equivalently, a variable naming a reference: in our formalism the two notions are identified.}} to a node in the graph.
Depending on the qualifier of $\x$,  restrictions are imposed {and} assumptions can be made on the whole subgraph reachable from $\x$, as detailed below. }

{If $\x$ is $\capsule$, then the subgraph reachable from 
$\x$ is an isolated portion of store, that is, all its (non immutable) nodes
can be reached only through this reference. This allows programmers (and static analysis)
to identify state that can be safely handled by a thread.  In this paper we will use the name \emph{capsule} for this property, to avoid confusion with many variants in literature \cite{ClarkeWrigstad03,Almeida97,ServettoEtAl13a,Hogg91,DietlEtAl07,GordonEtAl12}.}

{If $\x$ is $\imm$ (immutable), then the subgraph reachable from 
$\x$ is an immutable portion of store. That is, we impose the restriction that fields of the node cannot be modified\footnote{This corresponds to the \emph{read-only} notion in literature.} ($\x.\f{=}\e$ is not legal), and we can assume that the subgraph reachable from $\x$ will 
not be modified through any other reference. As a consequence, an immutable reference can be safely shared
in a multithreaded environment.}

{Note that the above properties are \emph{deep/full}, that is, related to the whole object graph.} {For this reason,} unlike early
 discussions of read-only
  \cite{Boyland06} and the Rust language\footnote{\texttt{rust-lang.org}}, we do not offer any kind of back door to support internal mutability, since this would destroy the immutability assumption on the whole reachable object graph.
  
{In addition to $\capsule$ and $\imm$, we will consider three other type qualifiers.
\begin{itemize}
\item If $\x$ is $\mutable$ (mutable), then no restrictions are imposed and no assumptions can be made on $\x$.  
\item If $\x$ is $\lent$ \cite{ServettoZucca15,GianniniEtAl16}, also called \emph{borrowed} in literature \cite{Boyland01,NadenEtAl12}, then the subgraph reachable from $\x$ can be manipulated, but not shared, by a client. More precisely, the evaluation of an expression which uses a lent reference $\x$ can neither connect $\x$ to any other external reference, nor to the result of the expression, unless this expression is, in turn, lent.
\item Finally, if $\x$ is $\readable$ (readable), then  neither modification nor sharing are permitted. That is, we impose both the \emph{read-only} and the $\lent$ {(\emph{borrowed})} restriction; however, note that there is no immutability guarantee.
\end{itemize}}
  The
last two qualifiers
ensure intermediate properties used to derive the  capsule and immutable properties. 


Whereas (variants of) such qualifiers have appeared in previous literature (see \refToSection{related} 
for a detailed discussion on related work), there are two key novelties in our approach. 

{Firstly and more importantly, the type system is very expressive, as will be illustrated by many examples in \refToSection{examples}. Indeed, it adopts the \emph{recovery} technique \cite{GordonEtAl12,ClebschEtAl15}, that is, references (more in general, expressions) originally typed as mutable or readable can be recovered to be capsule or immutable, respectively, provided that no other mutable/readable references are used. 
However, expressivity of recovery is greatly enhanced, that is, the type system permits recovery of $\capsule$ and $\imm$ properties in much more situations. Indeed,  other mutable/readable references are not hidden once and for all. Instead, they can be used in a controlled way, so that it can be ensured that they will be not referenced from the result of the expression for which we want to recover the capsule or immutability property. Notably, they can be possibly used in some subexpression, if certain conditions are satisfied, thanks to two rules for \emph{swapping} and \emph{unrestricting} which are the key of the type system's expressive power.}

Secondly, we adopt an innovative execution model \cite{ServettoLindsay13,CapriccioliEtAl15,ServettoZucca15} for imperative languages which, 
differently from traditional  ones, {is a \emph{reduction relation on language terms}. That is, we do not add an auxiliary structure which mimics 
physical memory, but such structure is encoded in the language itself.} Whereas this makes no difference from a programmer's point of view, it is important on the foundational side, since, as will be {informally illustrated in \refToSection{informal} and formalized in \refToSection{results}, it makes possible to express 
uniqueness and immutability properties in a simple and direct way. }

This paper is an improved and largely extended version of \cite{GianniniEtAl16}. The novel contributions include reduction rules, more examples and proofs of results.

The rest of the paper is organized as follows: we provide syntax and an informal introduction in \refToSection{informal}, type system in \refToSection{typesystem},  {programming} examples in \refToSection{examples}, reduction rules in \refToSection{calculus},  results in \refToSection{results}, 
related work in \refToSection{related}, summary of paper contribution and outline of further work in \refToSection{conclu}. 
The appendix contains some of the proofs omitted from \refToSection{results}.

\section{Informal presentation of the {calculus}}\label{sect:informal}
Syntax and types are given in \refToFigure{syntax}.  We assume sets of \emph{variables} $\x$, $\y$, $\z$, \emph{class names} $\C$, \emph{field names} $\f$, and \emph{method names} $\m$. 
We adopt the convention that a metavariable which ends by $\metavariable{s}$ is implicitly defined as a (possibly empty) sequence, {for example} {$\decs$ is defined by $\produzioneinline{\decs}{\epsilon\mid \dec\ \decs}$}, where $\epsilon$ denotes the empty sequence.

\begin{figure}
\framebox{{\small \begin{grammatica}
\produzione{\cd}{\terminale{class}\ \C\ \terminale{\{}\fds\ \mds \terminale{\}}}{class declaration}\\*
\produzione{\fd}{
\Field{\Type{\imm}{\C}}{\f}
\mid
\Field{\Type{\mutable}{\C}}{\f}\mid\Field{\intType}{\f}}{field declaration}\\*
\produzione{\md}{\MethDec{\T}{\m}{{\mu}}{\T_1\,\x_1,\ldots,\T_n\,\x_n}{\e}}{method declaration}\\*
\produzione{\e}{\x\mid\FieldAccess{\e}{\f}\mid\MethCall{\e}{\m}{\es}\mid\FieldAssign{\e}{\f}{\e}\mid\ConstrCall{\C}{\es}\mid\Block{\decs}{\e}}{expression}\\*
\produzione{\dec}{\Dec{\T}{\x}{\e}}{variable declaration}\\*
\\
\produzione{\T}{\Type{\mu}{\C}\mid\intType}{type}\\*
\produzione{\mu}{\mutable\mid\imm\mid\capsule\mid\lent\mid\readable}{{(type)} qualifier}\\*
\\
\produzione{\dv}{\Dec{\T}{\x}{\stVal}}{evaluated declaration}\\
\produzione{\stVal}{\ConstrCall{\C}{\xs}\mid\Block{\dvs}{\x}\mid\Block{\dvs}{\ConstrCall{\C}{\xs}}}{\storableVal}
\end{grammatica}
}}
\caption{Syntax and types}\label{fig:syntax}
\end{figure}

The syntax mostly follows Java and Featherweight Java (FJ) \cite{IgarashiEtAl01}. A class declaration consists of a class name, a sequence of field declarations and a sequence of method declarations. A field declaration consists of a field type and a field name.
 A method declaration consists, as in FJ, of a return type, a method name, a list of parameter names with their types, and a body which is an expression. {However, there is an additional component: the type qualifier for $\this$, which is {written as additional (first) element of the parameter list}.}
  As in FJ, we assume for each class a canonical constructor whose parameter list exactly corresponds to the class fields, and we assume no multiple declarations of classes in a class table, fields and methods in a class declaration. 

For expressions, in addition to the standard constructs of imperative object-oriented languages, we have 
\emph{blocks}, which are sequences of variable declarations, followed by a \emph{body} which is an expression.
Variable declarations consist of a type, a variable and an initialization expression. 
Types are class names decorated by a \emph{{(type)} qualifier}. We also include $\intType$ as an example of primitive type, but we do not formally model related operators used in the examples, such as integer constants and sum. We assume no multiple declarations for variables in a block, that is, $\decs$ can be seen as a map from variables into declarations, and we use the notations $\dom{\decs}$ and $\decs(\x)$.

Blocks are a fundamental construct of our {calculus}, since sequences of local variable declarations, when evaluated, are used to directly represent store  in the language itself. A declaration is evaluated if its initialization expression is a \emph{right-value}, that is, either an \emph{object state} (a constructor invocation where arguments are references) or a block in which declarations are evaluated and the body is a reference or an object state.

For instance\footnote{In the examples, we omit for readability the brackets of the outermost block.}, assuming that class \lstinline{B} has a $\mutable$ field of type \lstinline{B}{, in the block}:
\begin{lstlisting}
mut B x= new B(y); mut B y= new B(x); x 
\end{lstlisting}
the two declarations can be seen as a store where \lstinline{x} denotes an object of class \lstinline{B} whose field is \lstinline{y}, and conversely, as shown in \refToFigure{es1}(a).
\begin{figure}[ht]
\begin{center}
\includegraphics
[width=1\textwidth]
{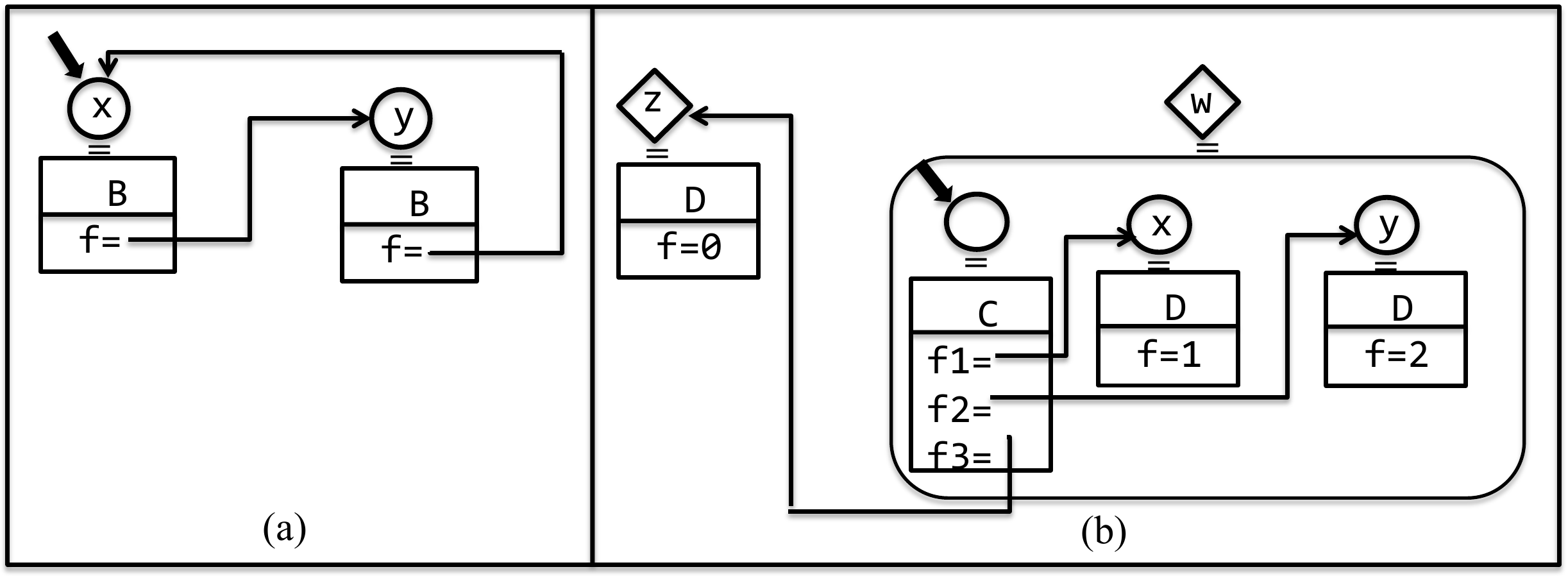}
\end{center}
\caption{Graphical representation of the store} \label{fig:es1}
\end{figure}
The whole block denotes a store with an entry point (graphically represented by a thick arrow). {In our graphical representation circles denote mutable references {(\lstinline{x}{} and \lstinline{y}{} in \refToFigure{es1})}, diamonds denote immutable references (\lstinline{z}{} and \lstinline{w}{} in \refToFigure{es1}(b)) and squares denote {capsule references}  (\lstinline{z}{} in \refToFigure{esRed2})}.
 
Stores are \emph{hierarchical}, rather than flat as it usually happens in models of imperative languages. 
For instance, assuming that class \lstinline{C}{} has two $\mutable$ \lstinline{D}{} and one $\imm$ \lstinline{D}{} fields,  and class \lstinline{D}{} has an integer field, the following is a store:
\begin{small}
\begin{lstlisting}
imm D z= new D(0);
imm C w= {mut D x= new D(1); mut D y= new D(2); new C(x,y,z)}  
\end{lstlisting}
\end{small}

Here, the {right-value} associated to \lstinline{w}{} is a block introducing local declarations, that is, in turn a store {with an entry point}, as shown in \refToFigure{es1}(b). 
The advantage of this hierarchical shape is that it models in a simple and natural way constraints about aliasing among objects, notably:
\begin{itemize}
\item the fact that an object is not referenced from outside some enclosing object is directly modeled by the block construct: for instance, the objects denoted by \lstinline{x}{} and \lstinline{y}{} can only be reached through \lstinline{w}{}
\item conversely, the fact that an object does not refer to the outside is modeled by the fact that the corresponding block is closed, that is, has no free variables\footnote{In other words, our calculus smoothly integrates memory representation with shadowing and $\alpha$-conversion.}: for instance, the object denoted by \lstinline{w}{} is not closed, since it refers to the external object \lstinline{z}{}.
\end{itemize}
In the graphical representation the reference corresponding to \lstinline{new C(x,y,z)}{} is anonymous.
Note also that, in this example, mutable variables in the local store of \lstinline{w}{} are not visible from the outside. This models in a natural way the fact that the portion of store denoted by \lstinline{w}{} is indeed immutable, as will be detailed in the sequel.

We illustrate now the meaning of the qualifiers $\mutable$, $\imm$, and $\capsule$.
A mutable variable refers to a portion of store that can be modified during execution. For instance, the block 
\begin{lstlisting}
mut B x= new B(y); mut B y= new B(x); x.f=x
\end{lstlisting}
reduces to
\begin{lstlisting}
mut B x= new B(x); mut B y= new B(x); x
\end{lstlisting}
We give a graphical representation of this reduction in \refToFigure{esRed1}
\begin{figure}[ht]
\begin{center}
\includegraphics[width=1\textwidth]{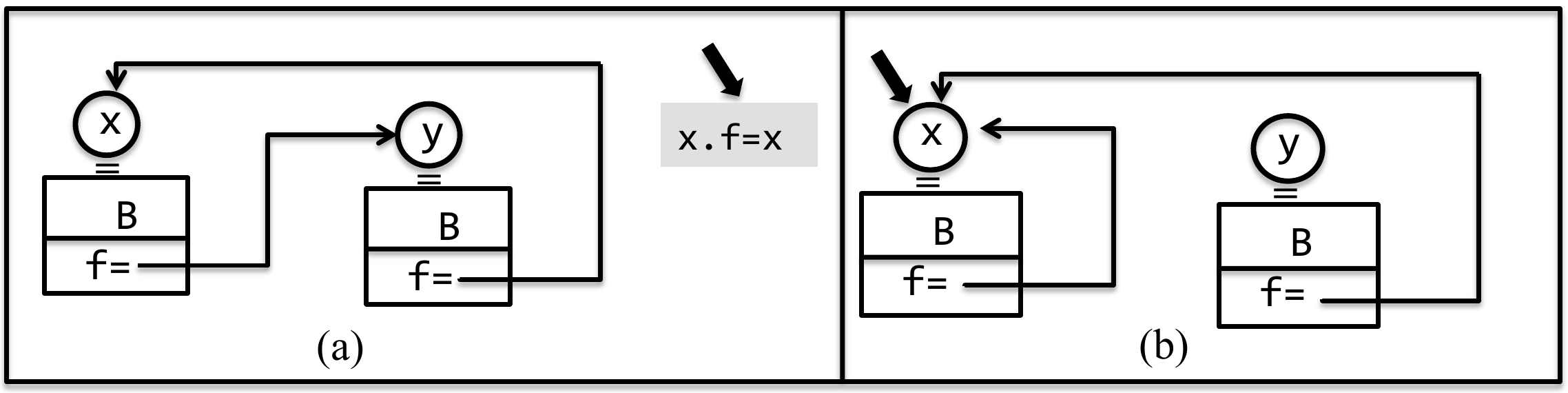}
\end{center}
\caption{Example of reduction (1)} \label{fig:esRed1}
\end{figure}
 where we
highlight in grey expressions which {are not reduced yet}. So in  (a) the body of the block is the expression \lstinline{x.f=x}{}, whose evaluation
modifies the field \lstinline{f}{} of \lstinline{x}{}, and returns \lstinline{x}{}. The result of the reduction is shown in (b). 

Variables declared immutable, instead, {refer to a portion of store which cannot be modified.}
 Immutability is \emph{deep/full}, that is, all the nodes in the reachable object graph of an immutable reference are immutable themselves.
Therefore, in the enclosing scope of the declaration
\begin{lstlisting}
imm C w= {mut D x= new D(1); mut D y= new D(2); new C(x,y,z)}
\end{lstlisting}
the variable \Q@z@ must be declared \Q@imm@, and we cannot have an assignment to a field of \Q@w@. 

A variable declared $\capsule$ refers to an \emph{isolated} portion of store, where local objects can freely reference
each other but for which the variable is the only external reference. For instance in:
\begin{lstlisting}
capsule B z = { mut B x= new B(y); mut B y= new B(x); x } 
\end{lstlisting}
the internal objects denoted by \Q@x@ and \Q@y@ can be only be accessed  through \Q@z@. A capsule variable can be used 
once and for all to ``move'' an isolated portion of store 
to another node in the store. To get more flexibility, external immutable 
references are freely allowed.  For instance, in the example above of the declaration of \lstinline{w}{}, the inizialization expression has a $\capsule$ type.
In our type system, capsule types are subtypes of both mutable and immutable types. Hence,
capsule expressions can initialize both mutable and immutable references. To preserve the capsule property, we need a 
{\em linearity constraint}: that is, a \emph{syntactic} constraint that in well-formed expressions capsule references 
can occur at most once in their scope (see more comments at page \pageref{linearity}). 

Consider the term\label{capsule-example-1}
\begin{lstlisting}
mut D y=new D(0); 
capsule C z={mut D x=new D(y.f=y.f+1); new C(x,x)}
\end{lstlisting}
In \refToFigure{esRed2}(a) we have a graphical representation of this term. 
\begin{figure}[ht]
\begin{center}
\includegraphics[width=1\textwidth]{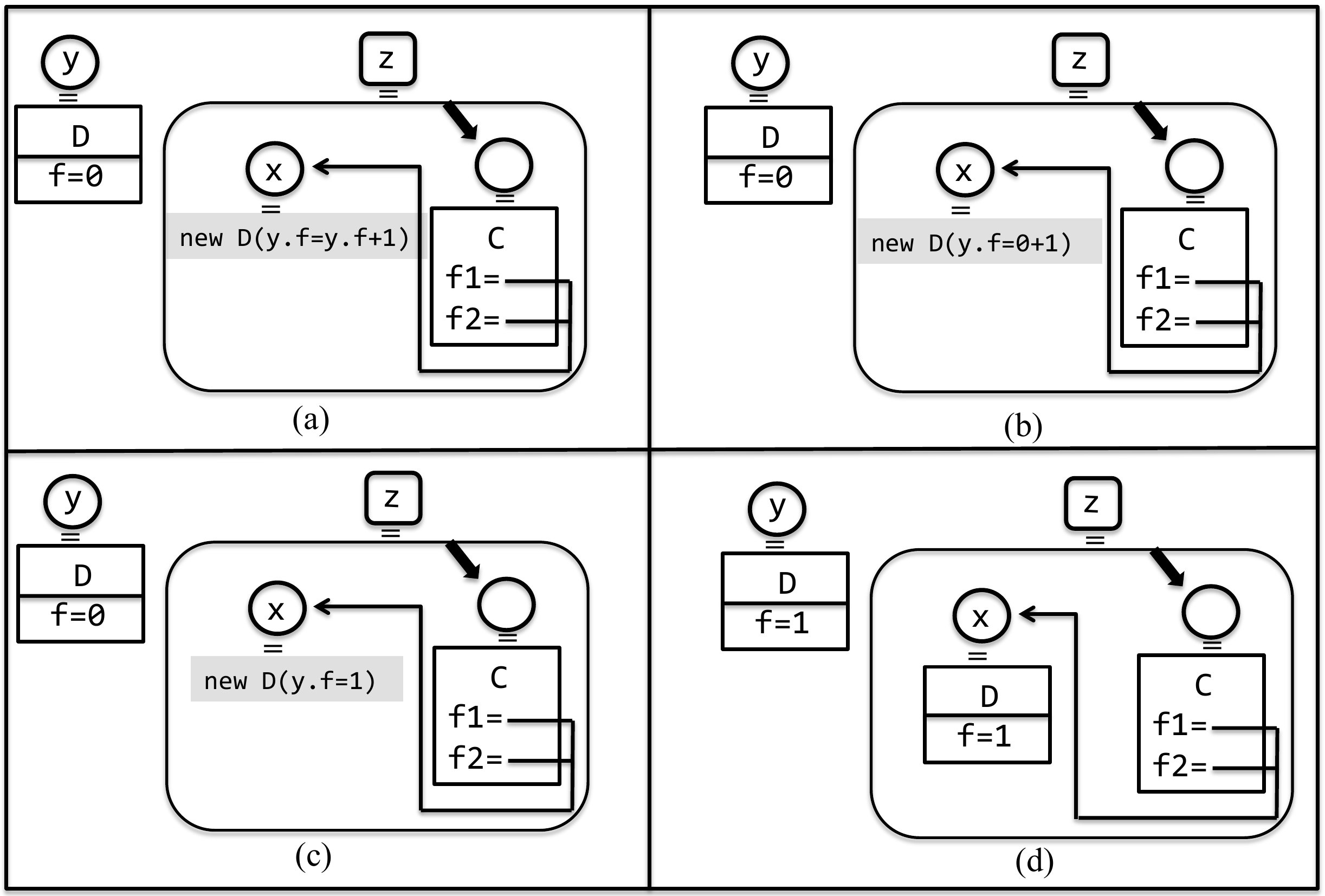}
\end{center}
\caption{Example of reduction (2)} \label{fig:esRed2}
\end{figure}
The evaluation of the expression on the right-hand side of \Q@x@ starts by evaluating 
\Q@y.f+1@, which triggers the evaluation of \Q@y.f@. The result is shown in (b),
then the sum \Q@0+1@ is evaluated, returning \Q@1@, as shown in (c). The evaluation of 
the field assignment 
\Q@y.f=1@ updates the field \Q@f@ of \Q@y@ to \Q@1@, and \Q@1@ is returned.
Since \Q@new D(1)@ is a value, the whole term is fully evaluated, and 
it is shown in (d). 

To be able to typecheck more expressions as $\capsule$ or $\imm$, 
we introduce the $\lent$ and $\readable$ qualifiers.
References with such qualifiers can be used in a restricted way. That is, no aliasing can be introduced between a $\lent$ reference and another reference, and a $\lent$ reference cannot be part of the result of the expression where it is used, unless this expression is, in turn, $\lent$. A $\readable$ reference is $\lent$ and, moreover, cannot be modified.

The relation $\mu\leq\mu'$ intuitively means that qualifier $\mu'$ imposes more restrictions than $\mu$, hence a $\mu$ reference can be safely used where a $\mu'$ is required.
Notably,  $\lent$ and $\imm$ references can be used where a $\readable$ is required, a mutable reference where a lent is required, and a $\capsule$ reference can be used everywhere. 

{Altogether, the subtyping relation is the reflexive and transitive relation on types induced by
\begin{quote}
{$\intType\leq\intType$}\\
$\Type{\mu}{\C}\leq\Type{\mu'}{\C}$ if $\mu\leq\mu'$\\
$\capsule\leq\mutable\leq\lent\leq\readable$\\
$\capsule\leq\imm\leq\readable$
\end{quote}
Note that capsules can be used as mutable or immutable, with the constraint that capsule references can be used only once.
So, a capsule reference can be seen as a reference whose destiny has not been decided yet.

In some cases it is possible to move the type of an expression against the subtype hierarchy. Notably, {we can recover the $\capsule$ property for a $\mutable$ expression, and the $\imm$ property for a $\readable$ expression}, provided that some of the free variables in the expression are used in a controlled way. Recovery will be described in the following section.}

{The situation is graphically depicted in \refToFigure{hierarchy}.}
\begin{figure}
\framebox
{
\begin{minipage}[H]{0.5\textwidth}
\begin{Scaled}{0.7}{0.7}
\begin{tikzpicture}[shorten <=0.4em,shorten >=0.4em]
\node [circle,draw,shift={(1ex,15ex)}] (M){M};
\node [circle,draw,shift={(15ex,0ex)}] (C){C};
\node [circle,draw,shift={(29ex,15ex)}] (I){I};
\node [circle,draw,shift={(8ex,22.5ex)}] (L){L};
\node [circle,draw,shift={(15ex,30ex)}] (R){R};
\draw [arrows={-latex}] (I) -- (R);
\draw [arrows={-latex}] (M) -- (L);
\draw [arrows={-latex}] (L) -- (R);
\draw [arrows={-latex}] (C) -- (M);
\draw [arrows={-latex}] (C) -- (I);
\draw[->,double] (M) to[bend right=20] (C);
\draw[->,double](R) to[bend right=40] (I); 
\end{tikzpicture}
\end{Scaled}
\end{minipage}
\begin{minipage}[H]{0.5\textwidth}
\begin{Scaled}{0.6}{0.6}
$
\begin{array}{|l}
\mbox{Nodes:}
\\
\begin{array}{l l}
\begin{tikzpicture}[shorten <=0.4em,shorten >=0.4em]
\node [circle,draw,shift={(0,0)}] (M){M};
\end{tikzpicture}
&\raisebox{1.5ex}{Mutable: alias, write}
\\
\begin{tikzpicture}[shorten <=0.4em,shorten >=0.4em]
\node [circle,draw,shift={(0,0)}] (I){${\ }$I${\ }$ };
\end{tikzpicture}
&\raisebox{1.5ex}{Immutable: alias, no write}
\\
\begin{tikzpicture}[shorten <=0.4em,shorten >=0.4em]
\node [circle,draw,shift={(0,0)}] (C){C};
\end{tikzpicture}
&\parbox{50ex}{Capsule: unique access\\*
Reference used only once}
\\

\begin{tikzpicture}[shorten <=0.4em,shorten >=0.4em]
\node [circle,draw,shift={(0,0)}] (L){L};
\end{tikzpicture}
&
\parbox{50ex}{Lent: no alias, write}
\\
\begin{tikzpicture}[shorten <=0.4em,shorten >=0.4em]
\node [circle,draw,shift={(0,0)}] (C){R};
\end{tikzpicture}
&
\parbox{50ex}{Readable: no alias, no write}
\end{array}
\\
\mbox{Arrows:}
\\
\begin{array}{l l}
\begin{tikzpicture}
\draw [arrows={-latex}] (0,0) -- (1,0);
\end{tikzpicture}
&\mbox{Subtype}
\\
\begin{tikzpicture}
\draw[->,double] (0,0) to (1,0);
\end{tikzpicture}
&\mbox{{Recovery}}
\end{array}
\end{array}
$
\end{Scaled}
\end{minipage}
}
\caption{Type qualifiers and their relationships}
\label{fig:hierarchy}
\end{figure}

\section{Type system}\label{sect:typesystem}
Type contexts are defined by:
\begin{center}
\begin{grammatica}
\produzione{\Delta}{\Gamma;\LentLocked;\StronglyLocked}{type context}\\*
\produzione{\Gamma}{\TypeDec{\T_1}{\x_1},\ldots,\TypeDec{\T_n}{\x_n}}{type assignment}
\end{grammatica}
\end{center} 
{In a type context $\Delta=\Gamma;\LentLocked;\StronglyLocked$, $\Gamma$ is a usual {\em assignment of types} to variables. We write $\dom{\Gamma}$ for the set of variables in $\Gamma$, and  $\SubstFun{\Gamma}{\Gamma'}$ for the concatenation of $\Gamma$ and $\Gamma'$ where, for the variables occurring in both domains, $\Gamma'$ takes precedence.  We will also use $\domMut{\Gamma}$ for the set of mutable variables in $\Gamma$, and other analogous notations.}

{According to our convention, $\LentLocked$ is a sequence $\xs_1\ldots\xs_n$ of sequences of variables, and $\StronglyLocked$ is a sequence of variables. All such sequences are assumed to be sets (that is, order and repetitions are immaterial). The sets  $\xs_1\ldots\xs_n$ are assumed to be pairwise disjoint, and are called \emph{{lent group}s}, whereas the $\mutable$ variables in $\Gamma$ which do not belong to any lent group form the \emph{{mutable group}}. Finally, variables in $\ys$ are called \emph{restricted} variables.}

{Lent groups and restricted variables are the key novelty of our type system.
The property they ensure is that, if $\TypeCheck{\Gamma}{\LentLocked}{\StronglyLocked}{\e}{\T}$, then the final result of $\e$ will be not connected to variables which are in $\LentLocked$ or $\StronglyLocked$. In this way, to ensure that the final result of $\e$ is an isolated portion of store, it is enough to typecheck $\e$ in a type context where all its free mutable variables are in a {lent group} or restricted.  This is a crucial improvement with respect to previous mechanisms of recovery \cite{GordonEtAl12,ClebschEtAl15}, where this could be only ensured if  $\e$ had \emph{only} immutable or isolated ($\capsule$ in our teminology) free variables. 
We have distinct {lent group}s $\xs_1\ldots\xs_n$ since there can be nested recoveries, hence, to ensure that all are safe, we need also the auxiliary property that no aliasing is introduced between different groups, as will be illustrated in detail by the last example of \refToSection{examples}. That is, if the reachable graphs of $\xs_i$ and $\xs_j$ were disjoint before the evaluation of the expression, then after the evaluation they should be still disjoint.\footnote{Note that this is different from \emph{regions} as in, e.g., \cite{BocchinoEtAl09,HallerOdersky10}, which are sets of references \emph{assumed} to be disjoint.}} 

{The type system ensures the above properties by imposing that variables in $\LentLocked$ and $\StronglyLocked$ are used in controlled way:
\begin{itemize}
\item A $\mutable$ variable in $\Gamma$ which belongs to a {lent group} $\xs_i$ can only be used as $\lent$. Notably, write access is not directly allowed since it could introduce aliasing.
\item A $\readable$, $\lent$ or $\mutable$ variable in $\Gamma$ which is restricted cannot be used at all, that is, is hidden.\footnote{{However, we use the terminology ``restricted'' since this hiding is non permanent.}}
\end{itemize}
The important point is that such constraints are not imposed once and for all when typechecking an expression. That is, a subexpression can be typechecked with different {lent group}s and restricted variables, more precisely:
\begin{itemize}
\item In a subexpression we can \emph{swap} one of the {lent group}s $\xs_i$ with the {mutable group}, weakening to $\lent$ the type of the subexpression if it was mutable. 
In this way, write access to variables in $\xs_i$ is allowed, but this is safe since the result of the subexpression is in turn $\lent$, so no aliasing can be introduced with the result of the main expression. 
\item In a subexpression we can \emph{unrestrict} variables in $\ys$, that is, freely use such variables, provided that the type of the subexpression is $\imm$ or $\capsule$.  Again this guarantees that no aliasing can be introduced with the result of the main expression.
\end{itemize} }

Lent groups and restricted variables are introduced as an effect of applying \emph{recovery} rules \rn{t-capsule} and \rn{t-imm}, whereas swapping and unrestricting are obtained by applying rules \rn{{t-swap}} and \rn{t-unrst} rules, as will be explained in detail later.

Contexts $\Delta=\Gamma;\LentLocked;\StronglyLocked$ are \emph{well-formed}, written $\WellFormedTypeCtx{\Delta}$, 
if, for $\LentLocked=\xs_1\ldots\xs_n$, the following conditions hold:
\begin{itemize}
\item no variable is $\lent$  in $\Gamma$
\item if  $\x\in\xs_i$ for some $i\in 1..n$,  then $\x$ is $\mutable$ in $\Gamma$
\item if $\y\in\StronglyLocked$, then  $\y$ is $\mutable$ or $\readable$  in $\Gamma$
\item if $\y\in\StronglyLocked$ and is $\mutable$ in $\Gamma$, then $\y\in\xs_i$ for some $i\in 1..n$.
\end{itemize}
We do not consider $\lent$ variables in $\Gamma$ since assigning the type $\Type{\lent}{\C}$ to a variable $\x$  is encoded by assigning to $\x$ the type $\Type{\mutable}{\C}$ and having $\x$ in a lent group in $\LentLocked$ (see the explanation of rule \rn{t-block} in the following). The  last condition requires \emph{coherency} between $\LentLocked$ and $\StronglyLocked$, that is, restricted mutable variable are in a lent group. It is easy to check that well-formedness of contexts is preserved in type derivations. That is, if $\Delta\TypeCheckGround{\e}{\T}$ and ${\Delta}$ is well-formed, then in all sub-derivations the contexts are well-formed. In the following when we write 
$\Delta\TypeCheckGround{\e}{\T}$ we assume that ${\Delta}$ is well-formed.

In the rules we use information extracted from the class table, which is modelled, as usual, by the following functions:{
\begin{itemize}
\item $\fields{\C}$ gives, for each declared class $\C$, the sequence of its fields declarations
\item {$\method{\C}{\m}$ gives, for each method $\m$ declared in class $\C$, the tuple\\
 $\FourTuple{\T}{\mu}{\Param{\T_1}{\x_1}\ldots\Param{\T_n}{\x_n}}{\e}$ consisting of its return type, type qualifier for $\this$, parameters, and body. }
\end{itemize}
We assume method bodies to be well-typed w.r.t.\ {the type annotations in the method declaration}. More precisely, if
{\begin{quote}
$\method{\C}{\m}=\FourTuple{\T}{\mu}{\Param{\T_1}{\x_1}\ldots\Param{\T_n}{\x_n}}{\e}$
\end{quote}
then 
\begin{quote}
$\TypeCheck{\Gamma}{\LentLocked}{\emptyset}{\e}{\T}$
\end{quote}
where $\LentLocked$ {consists of} the singletons of the parameters declared $\lent$, including the implicit parameter $\this$, expressing the requirement that they should not be aliased by the method, and $\Gamma=\TypeDec{\Type{\mu'}{\C}}{\this}, \TypeDec{\T'_1}{\x_1},\ldots,\TypeDec{\T'_n}{\x_n}$, where $\T_i=\Type{\mutable}{\C_i}$ if $\T_i=\Type{\lent}{\C_i}$, $\T'_i=\T_i$ otherwise, and $\mu'=\mutable$ if $\mu=\lent$, $\mu'=\mu$ otherwise. 
}

Typing rules are given in \refToFigure{typing}. {In the typing rules, when we need to make explicit the mutable group, we use the  auxiliary judgment $\AuxTypeCheck{\Gamma}{\LentLocked}{\MutGroup}{\StronglyLocked}{\e}{\T}$, which stands for the judgment 
$\TypeCheck{\Gamma}{\LentLocked}{\StronglyLocked}{\e}{\T}$ with the side condition $\MutGroup=\domMut{\Gamma}{\setminus}\LentLocked$, meaning that $\MutGroup$ are the $\mutable$ variables in $\Gamma$ which do not belong to {any lent} group in $\LentLocked$.}

\begin{figure}[ht!]
\framebox{
\begin{footnotesize}
\begin{math}
\begin{array}{l}
\NamedRule{t-capsule}{\AuxTypeCheck{\Gamma}{\LentLocked\ {\xs}}{{\emptyset}}{\StronglyLocked}{\e}{\Type{\mutable}{\C}}}{{\AuxTypeCheck{\Gamma}{\LentLocked}{\MutGroup}{\StronglyLocked}{\e}{\Type{\capsule}{\C}}}}\\
\Space
\NamedRule{t-imm}{
\AuxTypeCheck{\Gamma}{\LentLocked\ \xs}{{\emptyset}}{\domGeqMut(\Gamma) }{\e}{\Type{\readable}{\C}}
}{
{\AuxTypeCheck{\Gamma}{\LentLocked}{\xs}{\StronglyLocked}{\e}{\Type{\imm}{\C}}}
}{
}
\\[5ex]
\NamedRule{{t-swap}}{\AuxTypeCheck{\Gamma}{\LentLocked\ \xs'}{\xs}{\StronglyLocked}{\e}{{\T}}}{{\AuxTypeCheck{\Gamma}{\LentLocked\ \xs}{\xs'}{\StronglyLocked}{\e}{{\TPrime}}}}{
{\xs\cap\ys}=\emptyset\\
{\TPrime=
\begin{cases}
\Type{\lent}{\C}&\mbox{if}\ \T=\Type{\mutable}{\C}\\
\T&\mbox{otherwise}
\end{cases}}
}
\\[5ex]
\NamedRule{t-{unrst}}{
\TypeCheck{\Gamma}{\LentLocked}{\emptyset}{\e}{{\T}}
}{
\TypeCheck{\Gamma}{\LentLocked}{\StronglyLocked}{\e}{{\T}}
}{{\T=\Type{\mu}{\C}\Longrightarrow\mu\leq\imm}}
\Space
\NamedRule{t-sub}{\TypeCheckShort{\Delta}{\e}{\T}
}{
\TypeCheckShort{\Delta}{\e}{\TPrime}}{
\T\leq\TPrime
}
\\[5ex]
{\NamedRule{t-var}{}{\TypeCheck{\Gamma}{\LentLocked}{\StronglyLocked}{\x}{\TPrime}}{
\Gamma(\x)=\T\ \wedge \ 
\x\notin\StronglyLocked\\
\TPrime=
\begin{cases}
\Type{\lent}{\C}&\mbox{if}\ \T{=}\Type{\mutable}{\C}\,\wedge\,\x\in\LentLocked\\
\T&\mbox{otherwise}
\end{cases}
}}
\\[6ex]
{\NamedRule{t-field-access}{\TypeCheckShort{\Delta}{\e}{\Type{\mu}{\C}}}{\TypeCheckShort{\Delta}{\FieldAccess{\e}{\f_i}}{\TPrime_i}}{
\fields{\C}=\Field{\T_1}{\f_1}\ldots\Field{\T_n}{\f_n}\\
\TPrime_i=\begin{cases}
\Type{\mu}{\C_i}\ \mbox{if}\ \T_i=\Type{\mutable}{\C_i}\\
\T_i\ \mbox{otherwise}
\end{cases}
}}
\\[5ex]
\NamedRule{t-meth-call}{\TypeCheckShort{\Delta}{\e_i}{\T_i}\Space\forall i\in 0..n}{\TypeCheckShort{\Delta}{\MethCall{\e_0}{\m}{\e_1,\ldots,\e_n}}{\T}}{
\begin{array}{l}
\T_0=\Type\mu\C\\
\method{\C}{\m}=\FourTuple{\T}{\mu}{\Param{\T_1}{\x_1}\ldots\Param{\T_n}{\x_n}}{\e}
\end{array}
}
\\[5ex]
\NamedRule{t-field-assign}{
  \TypeCheckShort{\Delta}{\e}{\Type{\mutable}{\C}}
  \Space
  \TypeCheckShort{\Delta}{\e'}{\T_i}
  }{
  \TypeCheckShort{\Delta}{\FieldAssign{\e}{\f_i}{\e'}
  }{\T_i}
  }{
  {\fields{\C}=\Field{\T_1}{\f_1}\ldots\Field{\T_n}{\f_n}}
  }
\\[5ex]
\NamedRule{t-new}{\TypeCheckShort{\Delta}{\e_i}{\T_i}\Space\forall i\in 1..n}{\TypeCheckShort{\Delta}{\ConstrCall{\C}{\e_1,\ldots,\e_n}}{\Type{\mutable}{\C}}}{
\fields{\C}=\Field{\T_1}{\f_1}\ldots\Field{\T_n}{\f_n}\\
}
\\[5ex]
{\NamedRule{t-block}{
{\AuxTypeCheck{\SubstFun{\Gamma}{\TypeEnv{\decs}}}{\LentLocked_i}{\MutGroup_i}{\StronglyLocked'}{\e_i}{\TypeEnv{\decs}(\x_i)}\ \ {\forall i\in 1..n}}\\
\AuxTypeCheck{\SubstFun{\Gamma}{\TypeEnv{\decs}}}{\LentLocked'}{{\MutGroup'}}{\StronglyLocked'}{\e}{\T}}
{\AuxTypeCheck{\Gamma}{\LentLocked}{\MutGroup}{\StronglyLocked}{\Block{\decs}{\e}}{\T}}{
\begin{array}{l}
\decs=\Dec{\T_1}{\x_1}{\e_1}\ldots\Dec{\T_n}{\x_n}{\e_n}\\
\LessEq{(\LentLocked{\setminus}\dom{\decs})}{\LentLocked'}\\
(\MutGroup{\setminus}\dom{\decs})\subseteq\MutGroup'\\
\forall i\in 1..n\ \LentLocked_i\ \MutGroup_i = \LentLocked'\ \MutGroup'\\
\Space \mbox{and}\ \x_i\in\domMut{\TypeEnv{\decs}}\Rightarrow\x_i\in \MutGroup_i\\
\StronglyLocked'={\StronglyLocked}{\setminus}{\dom{\decs}}
\end{array}}}
\end{array}
\end{math}
\end{footnotesize}
}
\caption{Typing rules}\label{fig:typing}
\end{figure}
 
Rules \rn{t-{capsule}} and \rn{t-{imm}} 
 model \emph{recovery},
 that is, can be used to recover a more specific type for an expression,
 under the conditions that the use of some free variables in the expression is} {controlled}.
There are two kinds of recovery:
\begin{itemize}
\item $\mutable\Rightarrow\capsule$
\\* As shown in rule \rn{t-{capsule}}, an expression can be typed $\capsule$ in $\Gamma;\LentLocked;\StronglyLocked$ if it can be typed $\mutable$ by turning in lent the mutable group ($\xs$), which becomes empty.
Formally, this group is added to $\LentLocked$.  
\item $\readable\Rightarrow\imm$ 
\\*As shown in rule \rn{t-{imm}}, an expression can be typed $\imm$ in $\Gamma;\LentLocked;\StronglyLocked$  if it can be typed $\readable$ by {turning in lent the mutable group ($\xs$)}, as in the {recovery} above, and, moreover, restricting currently available mutable, lent, and readable variables ($\domGeqMut(\Gamma)$).
\end{itemize}

Along with {recovery} rules which introduce {lent group}s and restricted variables, we have two corresponding elimination rules.
In the detail:
\begin{itemize}
\item\rn{{t-swap}}
\\* An expression can be typed in $\Gamma;\LentLocked\ \xs;\StronglyLocked$ if it can be typed by turning into mutable 
some {lent group} {($\xs$)}, by swapping this group with the {current mutable group}  ($\xs'$). {The side condition $\xs\cap\ys=\emptyset$ prevents to swap a lent group including restricted variables.}
The type obtained in this way is weakened to $\lent$, if it was $\mutable$.
\item
\rn{t-unrst} 
\\* An expression can be typed in $\Gamma;\LentLocked;\StronglyLocked$ if it can be typed by unrestricting all restricted variables {$\StronglyLocked$}, provided that the type obtained in this way is $\capsule$ or $\imm$ {or a primitive type}.
\end{itemize}

{Note that, without these last two rules, recovery rules would be essentially equivalent to those in prior work \cite{GordonEtAl12,ClebschEtAl15}. Rules \rn{t-swap} and \rn{t-unrst}  add power to the type system, and they are also the reason {it} requires the $\LentLocked$ and $\StronglyLocked$ information during typechecking.
Prior work \cite{GordonEtAl12,ClebschEtAl15} can afford to simply locally using weakening and subtyping to make references inaccessible
or convert $\mutable$ references to $\lent$.}

{Note also that recovery rules, \rn{t-swap}, and \rn{t-unrst}  are not syntax-directed, analogously to the subsumption rule. 
In other words, their functionality does not  restrict how code is written: they are \emph{transparent} to the programmer, in the sense that they are applied when needed.  The programmer can simply rely on the fact that expressions are $\capsule$ or $\imm$, respectively, in situations where this is intuitively expected, as we illustrate by the following examples (other will be provided in next section).}

{We explain now in detail how recovery, swapping and unrestricting work, then comment the other rules.}

\paragraph{Capsule {recovery}} Let us discuss, first, when an expression can be safely typed $\capsule$. {The evaluation of the expression should produce a portion of store which is isolated, apart from external immutable references, formally a right-value where all free variables are immutable. Obviously, this is safe for an expression which has itself no free variables, or where all the free variables are {$\imm$ or $\capsule$}, and, indeed, this was the requirement needed for obtaining {recovery} in previous work \cite{GordonEtAl12}.} However, this requirement is too strong. 
Consider the following sequence of declarations:\label{capsule-example-2}

\begin{lstlisting}
mut D y=new D(0); 
capsule C z={mut D x=new D(y.f); new C(x,x)};  
\end{lstlisting}

The inner block (right-hand side of the declaration of \Q@z@) {can} be typed $\capsule$, even though there is a mutable free variable \Q@y@, since this variable is only used in a field access, hence, intuitively, no aliasing is introduced between \lstinline{y}{} and the final result of the block. 
Indeed, the block reduces\footnote{As formalized in \refToSection{calculus}.} to 
\begin{lstlisting} 
{mut D x=new D(0); new C(x,x)};  
\end{lstlisting}
which is closed.
To allow such typing, the inner block is {typed} applying rule \rn{t-capsule}, since it can be typechecked $\mutable$ in a type context where variable \lstinline{y}{} is $\lent$. In \refToFigure{TypingOne} we show the type derivation for this example,
{where, for clarity, we always write the mutable group even when it is empty.}
\begin{figure}[t]
\framebox{
\begin{footnotesize}
\begin{math}
\begin{array}{l}
\\
\infer[\scriptstyle{\textsc{(t-capsule)}}]
{\TypeCheck{\Gamma}{\emptyset\ [\texttt{y}]}{\emptyset}{\texttt{\{mut D x=new D(y.f); new C(x,x)\}}}{\Type{\capsule}{\C}}}
{
\infer[\scriptstyle{\textsc{(t-block)}}]
{\TypeCheck{\Gamma}{\texttt{y}\ [\emptyset]}{\emptyset}{\texttt{\{mut D x=new D(y.f); new C(x,x)\}}}{\Type{\mutable}{\C}}}
{
{
\infer[\scriptstyle{\textsc{(t-new)}}]
{\TypeCheck{\Gamma'}{\texttt{y}\ [\texttt{x}]}{\emptyset}{\texttt{new D(y.f)}}{\Type{\mutable}{\texttt{D}}}}
{
\infer[\scriptstyle{\textsc{(t-field-access)}}]
{\TypeCheck{\Gamma'}{\texttt{y}\ [\texttt{x}]}{\emptyset}{\texttt{y.f}}{\intType}}
{
\infer[\scriptstyle{\textsc{(T-var)}}]
{\TypeCheck{\Gamma'}{\texttt{y}\ [\texttt{x}]}{\emptyset}{\texttt{y}}{\Type{\lent}{\texttt{D}}}}{} 
}
}
}
&
\infer[\scriptstyle{\textsc{(T-New)}}]
{\TypeCheck{\Gamma'}{\texttt{y}\ [\texttt{x}]}{\emptyset}{\texttt{new C(x,x)}}{\Type{\mutable}{\texttt{C}}}}
{
\infer[\scriptstyle{\textsc{(T-var)}}]
{\TypeCheck{\Gamma'}{\texttt{y}\ [\texttt{x}]}{\emptyset}{\texttt{x}}{\Type{\mutable}{\texttt{D}}}}
{} 
}
}
}
\\
\\
\Gamma=\texttt{y}{:}\Type{\mutable}{\texttt{D}},\texttt{z}{:}\Type{\capsule}{\texttt{C}}\\
\Gamma'=\SubstFun{\Gamma}{\texttt{x}{:}\Type{\mutable}{\texttt{D}}}=\texttt{y}{:}\Type{\mutable}{\texttt{D}},\texttt{z}{:}\Type{\capsule}{\texttt{C}},\texttt{x}{:}\Type{\mutable}{\texttt{D}}
\end{array}
\end{math}
\end{footnotesize}
}
\caption{Example of type derivation (1)}\label{fig:TypingOne}
\end{figure}

{Note that in an analogous example where the field of class \lstinline{D} has a non primitive type, e.g., \lstinline{String}:}
\begin{lstlisting}
mut D y=new D("hello"); 
capsule C z={mut D x=new D(y.f); new C(x,x)};  
\end{lstlisting}
{the qualifier of the field should be $\imm$, since, otherwise, by introducing a singleton {lent group} \lstinline{y} we would get a $\lent$ type for \lstinline{y.f} as well, see rule \rn{t-field-access}, and a $\lent$ type is not accepted for a constructor argument.}

As a counterexample, consider the following sequence of declarations:
\begin{lstlisting}
mut D y=new D(0); 
capsule C z={mut D x=y; new C(x,x)};  
\end{lstlisting}
Here the inner block cannot be typed $\capsule$, since \Q@y@ is internally aliased. Indeed, the block reduces to \Q@new C(y,y)@ which contains a free mutable variable.
Formally, we cannot apply rule \rn{t-capsule} on the block, since
we should typecheck the block with \Q@y@ in a singleton {lent group}, while the initialization expression of \lstinline{x} should be mutable, see rule \rn{t-block}.

Rule \rn{t-field-assign} requires a mutable type for the receiver.
So, how is it possible to modify (the portion of store denoted by) a $\lent$ reference?
Consider the following simple example:
\begin{lstlisting}
lent D y= new D(0);
y.f=y.f+1;
\end{lstlisting}

This code should be well-typed, since the assignment does not introduce any alias. {To get such typing, we use rule \rn{t-swap} to type the expression
\Q@y.f=y.f+1@. {Indeed, we can} swap the singleton {lent group} \Q@y@ with the empty set.

Moreover, swapping can be applied to achieve {recovery}. Take
the example already considered at page \pageref{capsule-example-1}:
\begin{lstlisting}
mut D y=new D(0); 
capsule C z={mut D x=new D(y.f=y.f+1); new C(x,x)}
\end{lstlisting}
Let $\e$ be the inner block (right-hand side of the declaration of \Q@z@). As in the first example, $\e$ can be typed $\capsule$ if it can be typed \Q@mut@ in a context with type assignment $\texttt{y}{:}\Type{\mutable}{\texttt{D}},\texttt{z}{:}\Type{\capsule}{\texttt{C}}$ and
the {lent group} \Q@y@. However, the assignment \Q@y.f=y.f+1@
is not well-typed in this type context, since the variable \Q@y@ has type \Q@lent D@. However, intuitively, we can see that the assignment does not introduce any alias between \Q@y@ and the final result of $\e$, since it involves only variables which are in the same group (the
singleton \Q@y@), and produces {a result which is not mutable}. {In other words, the result of the evaluation of $\e$ is a capsule, as it has been shown in \refToFigure{esRed2}, so it should be possible to
{type} the expression $\e$ {as} $\capsule$.} 

To {get} such typing, we can apply rule \rn{t-swap} when deriving the type for the subexpression \Q@y.f=y.f+1@, by swapping the group \Q@y@ with the group \Q@x@. 
This ensures that the evaluation of the subexpression typed by this rule will not introduce any alias between the
variables in the swapped group and the mutable group. {The type derivation for the example is given in \refToFigure{TypingTwo}.}
 \begin{figure}[t]
 \framebox{
\begin{footnotesize}
\begin{math}
\begin{array}{l}
\infer[\scriptstyle{\textsc{(t-capsule)}}]
{\TypeCheck{\Gamma}{\emptyset\ [\texttt{y}]}{\emptyset}{\texttt{\{mut D x=new D(y.f=y.f+1); new C(x,x)\}}}{\Type{\capsule}{\texttt{C}}}}
{
\infer[\scriptstyle{\textsc{(t-block)}}]
{\TypeCheck{\Gamma}{\texttt{y}\ [\emptyset]}{\emptyset}{\texttt{\{mut D x=new D(y.f=y.f+1); new C(x,x)\}}}{\Type{\mutable}{\texttt{C}}}
}
{
\infer[\scriptstyle{\textsc{(T-New)}}]
{\TypeCheck{\Gamma'}{\texttt{y}\ [\texttt{x}]}{\emptyset}{\texttt{new C(x,x)}}{\Type{\mutable}{\texttt{C}}}}
{
\infer[\scriptstyle{\textsc{(T-var)}}]
{\TypeCheck{\Gamma'}{\texttt{y}\ [\texttt{x}]}{\emptyset}{\texttt{x}}{\Type{\mutable}{\texttt{D}}}}{}
}
&
\hspace{-1cm}
\infer[\scriptstyle{\textsc{(T-New)}}]
{\TypeCheck{\Gamma'}{\texttt{y}\ [\texttt{x}]}{\emptyset}{\texttt{new D(y.f=y.f+1)}}{\Type{\mutable}{\texttt{D}}}}
{
\infer[\scriptstyle{\textsc{(t-swap)}}]
{\TypeCheck{\Gamma'}{\texttt{y}\ [\texttt{x}]}{\emptyset}{\texttt{y.f=y.f+1}}{\intType}}
{
\infer[\scriptstyle{\textsc{(t-fld-ass)}}]
{\TypeCheck{\Gamma'}{\texttt{x}\ [\texttt{y}]}{\emptyset}{\texttt{y.f=y.f+1}}{\intType}}
{
\infer[\scriptstyle{\textsc{(T-var)}}]
{\TypeCheck{\Gamma'}{\texttt{x}\ [\texttt{y}]}{\emptyset}{\texttt{y}}{\Type{\mutable}{\texttt{D}}}}{}
&
\deduce{\TypeCheck{\Gamma'}{\texttt{x}\ [\texttt{y}]}{\emptyset}{\texttt{y.f+1}}{\intType}}{\vdots}
}
}
}
}
}\\
\\
\Gamma=\texttt{y}{:}\Type{\mutable}{\texttt{D}},\texttt{z}{:}\Type{\capsule}{\texttt{C}}\\
\Gamma'=\SubstFun{\Gamma}{\texttt{x}{:}\Type{\mutable}{\texttt{D}}}=\texttt{y}{:}\Type{\mutable}{\texttt{D}},\texttt{z}{:}\Type{\capsule}{\texttt{C}},\texttt{x}{:}\Type{\mutable}{\texttt{D}}
\end{array}
\end{math}
\end{footnotesize}
}
\caption{Example of type derivation (2)}\label{fig:TypingTwo}
\end{figure}

Note that, when using rule \rn{t-swap} to typecheck a subexpression of an expression {for which we want the capsule or immutability property}, no alias should be introduced between the variables in the group $\xs$ and the result of the expression.  Indeed, in this case the result of the subexpression could contain references to the variables in group $\xs$, which was lent in the original context.To ensure this, the type obtained in this way is weakened to $\lent$, if it was $\mutable$. 
 This is shown by the following example:
\begin{lstlisting}
mut D y=new D(x1,x2);  
mut x1=new A(0); 
mut x2=new A(1);
capsule C z={mut A x=(y.f1=y.f2); new C(x,x)};
\end{lstlisting}
If we apply rule \rn{{t-swap}} when deriving the
type for \Q@y.f1=y.f2@, therefore swapping the group \Q@y@ with \Q@x@, then we derive type
\Q@mut A@, and rule \rn{{t-swap}} would assign type \Q@lent A@ to the expression.
Therefore, the declaration \Q@mut A x=(y.f1=y.f2)@ and the whole expression would be ill-typed.
Indeed, the expression reduces to
\begin{lstlisting}
mut D y=new D(x2,x2);  
mut x1=new A(0); 
mut x2=new A(1);
capsule C z=new C(x2,x2);
\end{lstlisting}
in which the value of \Q@z@ is not a capsule.

\paragraph{{Immutability recovery}} 
{Note that $\capsule$ recovery can only happen for $\mutable$ expressions. In other words, $\mutable$ expressions which reduce to a portion of store with no external mutable references can be safely used where (either a mutable or) an immutable is required. Indeed, every expression which can be typed $\capsule$ can be typed $\imm$ by subtyping. }

{Consider now an expression for which $\capsule$ recovery cannot happen, that is, which can be typed $\lent$ or $\readable$, but cannot be typed $\mutable$, hence should not be used where a $\mutable$ is required. We can \emph{directly}\footnote{That is, not by subtyping.} recover immutability for such an expression, if we can guarantee that the result of the expression will be not connected to external mutable references.   This can be ensured as for the case of $\capsule$ recovery, with one difference. For $\capsule$ recovery, $\lent$ and $\readable$ references can be freely used, since in any case they will be not connected to the final result of the expression. However, if the expression is in turn $\lent$ or $\readable$, its result \emph{could} be connected to $\lent$ or $\readable$ references, hence this should be explicitly prevented by the type system. This is achieved by \emph{restricting} such references, that is, allowing their use only to typecheck subexpressions of $\imm$ $\capsule$ or primitive type.}

Consider the following variant of the first example of capsule {recovery}:
\begin{lstlisting}
mut D y=new D(0); 
imm C z={lent D x=new D(y.f); new C(x,x)};  
\end{lstlisting}
As in the original version, the inner block (right-hand side of the declaration of \lstinline{z}) uses the mutable free variable \lstinline{y} only in a field access, and indeed reduces to the block \Q@{lent D x= new D(0);  new C(x,x)}@ which is closed. However, this block {cannot be typed $\capsule$}, since it cannot be safely assigned to a mutable reference. On the other hand, the block can be safely typed  $\imm$, since, intuitively, it reduces to a portion of store which cannot be modified through any other reference. Formally, the inner block can be {typed $\imm$} by rule \rn{t-imm}, since it can be typechecked $\lent$ in a type context where variable \lstinline{y}{} is in  singleton {lent group} and, moreover, restricted, that is, can be only used to typecheck subexpressions which are $\imm$, $\capsule$, or of a primitive type. 
The type derivation for the example is given in \refToFigure{TypingThree}. 
{Note that, instead of putting \texttt{x} is a sigleton group we could have put \texttt{x} and \texttt{y} in the same 
group in the typing of \lstinline{new C(x,x)}{} and \lstinline{new D(y.f)}{}. That is, replacing}
\verb!{x} {y}! with \verb!{x,y}!  the derivation would still be correct. (Clearly the rule \rn{T-Swap} would swap \verb!{x,y}!
with the empty mutable group.)

\begin{figure}[th]
\framebox{
\begin{footnotesize}
\begin{math}
\begin{array}{l}
\\
\infer[\scriptstyle{\textsc{(t-imm)}}]
{\TypeCheck{\Gamma}{\emptyset\ [\texttt{y}]}{\emptyset}{\texttt{\{lent D x=new D(y.f); new C(x,x)\}}}{\Type{\imm}{\texttt{C}}}}
{
\infer[\scriptstyle{\textsc{(t-block)}}]
{\TypeCheck{\Gamma}{\texttt{y}\ [\emptyset]}{\texttt{y}}{\texttt{\{lent D x=new D(y.f); new C(x,x)\}}}{\Type{\lent}{\texttt{C}}}
}
{
\infer[\scriptstyle{\textsc{(t-new)}}]
{
\TypeCheck{\Gamma'}{\{\texttt{y}\}\ \{\texttt{x}\}\ [\emptyset]}{\texttt{y}}{\texttt{new D(y.f)}}{\Type{\mutable}{\texttt{D}}}
}
{
\infer[\scriptstyle{\textsc{(t-unrst)}}]
{\TypeCheck{\Gamma'}{\{\texttt{y}\}\ \{\texttt{x}\}\ [\emptyset]}{\texttt{y}}{\texttt{y.f}}{\intType}}
{
\infer[\scriptstyle{\textsc{(t-field-access)}}]
{\TypeCheck{\Gamma'}{\{\texttt{y}\}\ \{\texttt{x}\}\ [\emptyset]}{\emptyset}{\texttt{y.f}}{\intType}}
{
\infer[\scriptstyle{\textsc{(T-var)}}]
{\TypeCheck{\Gamma'}{\{\texttt{y}\}\ \{\texttt{x}\}\ [\emptyset]}{\emptyset}{\texttt{y}}{\Type{\lent}{\texttt{D}}}}{} 
}
}
}
&
\infer[\scriptstyle{\textsc{(t-swap)}}]
{\TypeCheck{\Gamma'}{\{\texttt{y}\}\ \{\texttt{x}\}\ [\emptyset]}{\texttt{y}}{\texttt{new C(x,x)}}{\Type{\lent}{\texttt{C}}}}
{
\infer[\scriptstyle{\textsc{(t-new)}}]
{\TypeCheck{\Gamma'}{\texttt{y}\ [\texttt{x}]}{\texttt{y}}{\texttt{new C(x,x)}}{\Type{\mutable}{\texttt{C}}}}
{
\infer[\scriptstyle{\textsc{(T-var)}}]
{\TypeCheck{\Gamma'}{\texttt{y}\ [\texttt{x}]}{\texttt{y}}{\texttt{x}}{\Type{\mutable}{\texttt{D}}}}{}
}
}
}
}
\\  \\
\Gamma=\texttt{y}{:}\Type{\mutable}{\texttt{D}},\texttt{z}{:}\Type{\imm}{\texttt{C}}\\
\Gamma'=\SubstFun{\Gamma}{\texttt{x}{:}\Type{\mutable}{\texttt{D}}}=\texttt{y}{:}\Type{\mutable}{\texttt{D}},\texttt{z}{:}\Type{\imm}{\texttt{C}},\texttt{x}{:}\Type{\mutable}{\texttt{D}}
\end{array}
\end{math}
\end{footnotesize}
}
\caption{Example of type derivation (3)}\label{fig:TypingThree}
\end{figure}

{Restricting $\y$} prevents typechecking examples like the following:
\begin{lstlisting}
mut D y=new D(0); 
imm C z={lent D x=y; new C(x,x)};  
\end{lstlisting}

The significance of the {immutability recovery} is more clearly shown by considering method calls, as will be illustrated in \refToSection{examples}.

\paragraph{Blocks}
{A block $\Block{\decs}{\e}$, where $\decs=\Dec{\T_1}{\x_1}{\e_1}\ldots\Dec{\T_n}{\x_n}{\e_n}$, is well-typed if the right-hand sides of declarations and the body are well-typed, as detailed below.
\begin{itemize}
\item All the expressions are typechecked w.r.t.\ the type assignment $\SubstFun{\Gamma}{\TypeEnv{\decs}}$ where $\TypeEnv{\decs}$  is the same of $\TypeDec{\T_1}{\x_1},\ldots,\TypeDec{\T_n}{\x_n}$, apart that local variables declared $\lent$ have type $\mutable$ (indeed the fact that they are $\lent$ is encoded by  including them in lent groups, see next point).
\item The body is typechecked w.r.t.\ lent groups $\LentLocked'$ and mutable group $\MutGroup'$ which extend those of the enclosing scope, modulo hiding (second and third side conditions). More precisely: variables which are $\mutable$ in $\TypeEnv{\decs}$ can be possibly added to a lent group of the enclosing scope, or can form a new lent group, or can be added to the mutable group $\xs$ of the enclosing scope.
\item The right-hand side of each declaration $\e_i$ is typechecked w.r.t.\ to lent groups $\LentLocked_i$ and mutable group $\xs_i$ obtained from those of the body by swapping $\xs'$ with the group which contains $\x_i$, if $\x_i$ is $\mutable$ in $\TypeEnv{\decs}$ (fourth side condition). Recall that sequences of groups are considered as sets, hence the notation $\LentLocked_i\ \MutGroup_i = \LentLocked'\ \MutGroup'$ means that the two sides are the same set of groups.  This swapping models the fact that the initialization expression of a variable $\x_i$ in a lent group is typechecked considering as mutable group that containing $\x_i$.
For variables in the mutable group, or declared with other modifiers, no swapping is needed.
\item All the expressions are typechecked w.r.t.\ restricted variables which are exactly those of the enclosing scope (modulo hiding).
\end{itemize}
We use the following auxiliary notations.
\begin{itemize}
\item {The {\em type assignment extracted} from a sequence of declarations $\decs$, denoted $\TypeEnv{\decs}$, is defined by: $\TypeEnv{\Dec{\T_1}{\x_1}{\e_1}\ldots\Dec{\T_n}{\x_n}{\e_n}}=\TypeDec{\T'_1}{\x_1},\ldots,\TypeDec{\T'_n}{\x_n}$ where $\T'_i=\Type{\mutable}{\C}$ if  $\T_i=\Type{\lent}{\C}$, $\T'_i=\T_i$ otherwise.}
\item $(\xs_1\ldots\xs_n){\setminus}\xs=(\xs_1{\setminus}\xs)\ldots(\xs_n{\setminus}\xs)$  
\item $\dom{\xs_1\ldots\xs_n}=\{\x\mid \x\in\xs_i\mbox{ for some}\ i\}$
\item $\LessEq{\xs_1\ldots\xs_n}{\ys_1\ldots\ys_m}$ if $\dom{\xs_1\ldots\xs_n}\subseteq\dom{\ys_1\ldots\ys_m}$, and {for all $\x,\y\in\dom{\xs_1\ldots\xs_n}$, $\x,\y\in\xs_i$ if and only if 
$\x,\y\in\ys_j$.}
\end{itemize} }

{Note that local variables declared $\lent$ can be arbitrarily assigned to {lent group}s, to improve expressivity. For instance, it can be necessary to assign a $\lent$ local variable to the same {lent group} {as} some variables of the enclosing scope. This is shown by the following example, where the local variable \lstinline{z1} is used to modify the external reference \lstinline{x}, rather than to construct the block result.}
\begin{lstlisting}
mut D z=new D(0);  
mut C x=new C(z,z);
capsule C y= {
  lent D z1=new D(1); 
  lent D z2=(x.f1=z1);  
  mut D z3 = new D(2);
  new C(z3,z3)
};
\end{lstlisting}
Since 
we need to {recover the capsule property for} the
block on the right-hand side of the declaration of \Q@y@, applying rule \rn{T-capsule} to the block, the context of the typing of such block 
must be \\
{\small $\ \ \ \ {\tt z:\mutable\ D,x:\mutable\ C,y:\capsule\ C; {\{z, x\}}; \emptyset}$}\\
that is, the variables
\Q@z@ and \Q@x@ must be in the same {lent group}. However, assuming that field \lstinline{f1}{} is mutable, to apply rule \rn{t-field-assign} to the expression \Q@x.f1=z1@,  both \Q@x@ and \Q@z1@ have to be mutable. Therefore, we have to apply rule
\rn{t-swap}, and it must be the case that  \Q@x@ and \Q@z1@ are in the same {lent group}. This is possible, with rule \rn{t-block}, by adding \Q@z1@ to the group \verb!{z,x}! in typing the right-hand sides of the declarations and the body.

Note that the following variant of the example, where the result of the block is connected to \lstinline{z1} instead,}
\begin{lstlisting}
mut D z=new D(0);  
mut C x=new C(z,z);
capsule D y= {
  lent D z1=new D(1); 
  lent D z2=(x.f1=z1);  
  new C(z1,z1)
};
\end{lstlisting}
is ill-typed.  Indeed, as before, \Q@z1@ must be in the same group {as} \Q@x@ in order to {recover the $\capsule$ property of} the block, but in this case \Q@z1@ would be $\lent$, hence the whole block.

\paragraph{Other typing rules} 
Other rules are mostly standard, except that they model the expected behaviour of type qualifiers.

{In} rule \rn{t-var}, a variable is weakened to $\lent$ if it belongs to some group in $\LentLocked$, and cannot be used at all if it belongs to $\StronglyLocked$. 

In rule \rn{t-field-access}, in case the field is $\mutable$, the type qualifier of the receiver is propagated to the field. For instance, {mutable} fields referred through a $\lent$ reference are $\lent$ as well. If the field is immutable {(or of a primitive type), instead, then the expression has the field type,} regardless of the receiver type.  

In rule \rn{t-field-assign}, the receiver should be mutable, and the right-hand side must have the field type. Note that this implies the right-hand side to be either $\mutable$ or $\imm$ {(or of a primitive type)}. Hence, neither the left-hand nor the right-hand sides can be $\lent$ or $\readable$. {This prevents the introduction of aliases for such references. However, recall that for $\lent$ references the constraint can be escaped by using, before this rule, rule \rn{t-swap}, at the price of weakening to $\lent$ the type of the expression. }
  
In rule \rn{t-new}, expressions assigned to fields should be either $\mutable$ or $\imm$ (or of a primitive type).  {Again, for $\lent$ references the constraint can be escaped by swapping, getting a $\lent$ expression. }
 Note that an object is created with no restrictions, that is, as $\mutable$.

Finally, note that primitive types are used in the standard way. For instance, in the premise of rule \rn{t-new} the types of constructor arguments could be primitive types, whereas in rule \rn{t-meth-call} the type of receiver could not.

\section{Programming examples}\label{sect:examples}

In this section we illustrate the {expressivity of the type system by more significant} examples. 
For sake of readability, we use additional constructs, such as {operators of} primitive types, static methods{, private fields,} and while loops. Moreover, we generally omit the
brackets of the outermost block, and abbreviate $\Block{\Dec{\T}{\x}{\e}}{\e'}$ by $\Sequence{\e}{\e'}$ when $\x\not\in\FV{\e'}$, with $\FV{\e}$ the set of the free variables
 of $\e$. 
  
\paragraph{Capsules and swapping}
{The following example illustrates} how a programmer can declare lent references to achieve {recovery}.
The class \Q@CustomerReader@ below models reading information about customers out of a text file formatted as shown in the example:

\begin{lstlisting}
Bob
1 500 2 1300
Mark
42 8 99 100
\end{lstlisting}

In even lines we have customer names, in odd lines we have a shop history: a sequence of product codes.

\begin{lstlisting}
class CustomerReader {
  static capsule Customer readCustomer(lent Scanner s){
      mut Customer c=new Customer(s.nextLine())
      while(s.hasNextNum()){
        c.addShopHistory(s.nextNum())
      }
      return c //ok, capsule recovery
  }
}
\end{lstlisting}
The method \Q@CustomerReader.readCustomer@ takes a \Q@lent Scanner@, assumed to be a class similar to the one in Java,
for reading a file and extracting different kinds of data.
A \Q@Customer@ object is {created reading its name} from the file, and then its shop history is added.
Since the scanner is declared $\lent$, and there are no other parameters, by  {recovery} the result can be declared $\capsule$. {That is, we can statically ensure that the data of the scanner are not mixed with the result.}
Previous work offers cumbersome solutions requiring the programmer to manually handle multiple initialization phases like ``raw'' and ``cooked''~\cite{ZibinEtAl10}.

{The following method \Q@update@ illustrates how we can ``open'' capsules, modify their values and then recover the original \Q@capsule@ guarantee. The} 
method  takes an old customer (as capsule) and a \Q@lent Scanner@ as before.
{
\begin{lstlisting}
class CustomerReader {...//as before
  static capsule Customer update(capsule Customer old,
                                  lent Scanner s){
    mut Customer c=old//we open the capsule `old'
    while(s.hasNextNum()){
      c.addShopHistory(s.nextNum());
    }recovery
    return c; //ok, capsule recovery
  }
}
\end{lstlisting}
}
Every method {with no} mutable parameters can use the pattern illustrated above: one (or many) capsule parameters are opened  (that is, assigned to mutable local variables) and, in the end, the result is guaranteed to be again a capsule. {This mechanism is not possible in previous work \cite{Almeida97,ClarkeWrigstad03,DietlEtAl07}. The type systems of Gordon et al.~\cite{GordonEtAl12} and Pony \cite{ClebschEtAl15} lack borrowed references in the formalization, but could support the variant where the scanner is isolated ($\capsule$ in our teminology) without losing isolation of the scanner or customer (the
 scanner could remain isolated throughout, relying on their support for dispatching appropriate
 methods on isolated references without explicit unpacking)}.

{In the former example, explicit \capsule\ opening and recovery take place \emph{inside the method body}.
Consider the following alternative version:}
\begin{lstlisting}
class CustomerReader {...//as before
  static mut Customer update(mut Customer c,lent Scanner s){
    while(s.hasNextNum()){
      c.addShopHistory(s.nextNum());
    }
    return c;
  }
}
\end{lstlisting}
{This method {takes now a \mutable\ object and returns} another \mutable\ one, as it usually happens in imperative programming.
Surprisingly enough, this method turns out to be just a more flexible version of the former one. Indeed:
\begin{itemize}
\item If called on \mutable\ data, then it returns \mutable\ data, while a call of the former method was ill-typed.
\item If called on \capsule\ data, then capsule opening implicitly takes place during method call execution, and recovery can be achieved for the method call expression, returning a \capsule\ as for the former method.
\end{itemize}
That is, recovery happens \emph{on the client side}\footnote{{The type systems  of Gordon et al.~\cite{GordonEtAl12} and Pony \cite{ClebschEtAl15} could support a variant analogously to the case above.}}. However, a client does not need to worry about this, and can simply call the method.
In a sense, this version of \Q@update@ is polymorphic: it can be used as either
$\mutable\rightarrow\mutable$ or as $\capsule\rightarrow\capsule$.}

{This pattern can be used in combination with function composition. That is, a chain of $\mutable\rightarrow\mutable$ methods can be called, and, if we start from a \capsule, we will get a \capsule\ in the end.
As shown in our example, these methods can have additional (non \mutable) parameters.}

{Moreover, this method can be trasparently used as $\lent\rightarrow\lent$. That is, if called on  $\lent$ data, then it returns $\lent$ data, by applying rule \rn{t-swap} to the method call expression. Again this can be extended to chains of methods which may have additional (non \mutable) parameters. }

Finally, we show the code of the \Q@Scanner@ itself, and how swapping can be used to update {the
fields of a $\lent$ scanner in a safe way.}

\begin{lstlisting}
class Scanner{
  mut InputStream stream;
  imm String nextLine(mut/*=this qualifier*/){...}
  int nextNum(mut/*=this qualifier*/){...}
  bool hasNextNum(read/*=this qualifier*/){...}
}

lent Scanner s=...
mut InputStream stream1=...
capsule InputStream stream2 = ... 
//s.stream=stream1 //(a)wrong
s.stream=new InputStream("...")//(b)ok, swapping 
s.stream=stream2 //(c)ok, swapping 
\end{lstlisting}

In our type system,  a $\lent$ reference can be regarded as mutable if all the mutable references are regarded as $\lent$, as formally modeled by rule \rn{t-swap}.
This can be trivially applied to line (b), where \Q@s@ is the only free variable, and to  line (c), where the other free variable is declared $\capsule$. 
In line (a), instead, swapping would assign a $\lent$ type to \lstinline{stream1}.

{The $\this$ qualifiers reflect the fact that the first two methods modify, whereas the third only reads, the scanner's state. Note that, as in the previous example, the first two methods can be safely applied to $\lent$ scanners by swapping (in this case the result type, being not mutable, remains the same). Note also that such methods (as any method with no parameters and non mutable result type) could be equivalently have $\this$ qualifier $\lent$, since, intuitively, no aliasing can be introduced for $\this$ (formally, we can apply rule \rn{t-swap} to the method body). On the other hand, the first two methods can be invoked on $\lent$ scanners by by applying rule \rn{t-swap} to the method call expression.}

\paragraph{Readable and immutable references}
We illustrate now the \Q@read@ and \Q@imm@ qualifiers by the example of a \Q@Person@ class with a list of friends.
\begin{lstlisting}
class Person{  
  private mut PersonList friends;  
  read PersonList readFriends (read/*=this qualifier*/) {
    return this.friends;
  }
}
\end{lstlisting}
To give access to the private field, we declare a method {which is like a usual getter, except} that it returns a \Q@read PersonList@ reference. 
In this way, a client can only read the list of friends obtained through an invocation \lstinline{person.readFriends()}{}, with \lstinline{person} of class \lstinline{Person}.
Note that, since the $\this$ qualifier of the method is \lstinline{read}{}, which is the top of the subtyping hierarchy, it can be invoked whichever is the qualifier of \lstinline{person}.
Moreover, in the case \lstinline{person} is \lstinline{imm}, we get an \Q@imm PersonList@ back, by {recovering immutability for the \Q@read@ result}.
Indeed, in this case, we can apply rule \rn{t-imm} to the invocation \lstinline{person.readFriends()}, since the only free variable \Q@person@ is immutable, so no variable needs to be {restricted}.
{This is another case where the method is polymorphic: it can be used as either
$\readable\rightarrow\readable$ or as $\imm\rightarrow\imm$, and a client does not need to worry about, and can simply call the method.}

Assume now that we want, instead, to give permission to a client to modify the list of friends.
In this case, we can declare a getter method with different type annotations:\label{exposer}
\begin{lstlisting}
mut PersonList getFriends(mut/*=this qualifier*/) {
  return this.friends;
}
\end{lstlisting}
This method takes a mutable \Q@Person@ and returns a mutable \Q@PersonList@ reference.
Hence, it cannot be invoked on \Q@read@ or \Q@imm@ objects.
However, this $\mutable$ getter can be invoked on a lent \lstinline{person}{}:
in \Q@person.getFriends()@  the only free variable \Q@person@ can be seen as \Q@mut@.
The result of the method is turned in \Q@lent PersonList@ by the \rn{t-swap} rule {(formally, swapping the singleton {lent} group \Q@person@ with the empty set).}

{Our approach forces the programmer to explicitly choose either $\readable$ or $\mutable$ getters.}
Other works permits the developer to use
either multiple getters or polymorphic type qualifiers, for instance
the type annotation \lstinline{PolyRead} of Javari~\cite{TschantzErnst05} allows one
to keep a single method, providing an additional design choice for programmers.
However, we argue that forcing programmers to consider the two operations as logically different  can be a 
simpler and more explicit programming pattern.
 (In \refToSection{related} we discuss also a third variant, with return type $\lent$.)
In a language supporting many levels of visibility (like protected, package, friend, ...) a programmer may choose a restricted visibility for $\mutable$ getters and a more permissive visibility for $\readable$ getters.
Collection classes also can declare $\readable$ and $\mutable$ getters, as in the following example.
\begin{lstlisting}
class PersonList{...
 void add(mut, mut Person elem){...}
 read Person readElem (read, int index){...}
 mut Person getElem(mut, int index){...}
}
\end{lstlisting}
Finally, we show how we can create mutable objects, mutate them for a while, and then {recover their immutabiity}:
\begin{lstlisting}
class C{
  static mut Person lonelyPersonFactory(){
    return new Person(new PersonList());
  }
  static imm Person happyPersonFactory(){
    mut Person fred=C.lonelyPersonFactory();
    mut Person barney=C.lonelyPersonFactory();
    fred.getFriends().add(barney);
    barney.getFriends().add(fred);
    return fred; //mut Person recovered to be imm 
    //now fred and barney are friends forever!
  }
}
\end{lstlisting}
Here \Q@lonelyPersonFactory()@  creates lonely \Q@Person@s, that have no friends.
However, there is still hope, since they are mutable:
\Q@happyPersonFactory@ creates two lonely people, \Q@fred@ and \Q@barney@, and makes them friends.
Then the function returns \Q@fred@ (and, indirectly, also \Q@barney@ that is now in the reachable object graph of \Q@fred@).
The function body does not use any free variable, so we can {recover the capsule property of} the result, hence return it as \Q@imm@.

\paragraph{Qualifiers are deep} Note that recovery work since qualifiers have a \emph{deep/full} interpretation, in the sense that
they are propagated to the whole reachable object graph. In
a shallow interpretation, instead, it is possible, for instance,
to reach a mutable object from an immutable object. The
advantage of such interpretation is that libraries can declare strong intentions in a coherent and uniform way, independently of the concrete representation of the user input
(that, with the use of interfaces, could be unknown to the
library). On the other side, providing (only) deep/full qualifiers
means that we do not offer any language support for (as an
example) an immutable list of mutable objects.

\paragraph{Nested recovery and groups}
We conclude the section by an example showing that {recoveries} can be nested, {and, to ensure that all are safe, distinct lent groups must be kept.}
Consider the following code, where implementation of \lstinline{A}{} is omitted to emphasize that only type information provided by qualifiers is significant.
\begin{lstlisting}
class A{...
  mut A mix(mut, mut A a){...}
  //inserts a  in the reachable object graph 
    //of the receiver and returns a
  capsule A clone (read){...} 
  //returns a capsule clone of the receiver
  static mut A parse(){...} //reads an A from input
}

mut A a1= A.parse() //outside of recovery
capsule A outerA={//outer recovery
  mut A a2= A.parse()//inside outer recovery
  capsule A nestedA={//nested recovery
    mut A a3= A.parse()//inside nested recovery
    mut A res= ...
    res.mix(a3)
    //this is promoted and assigned to nestedA
  }
  nestedA.mix(a2)
  //this is promoted and assigned to outerA
}
//program continues, using outerA as capsule
\end{lstlisting}

The question is, what can we write instead of the dots, and why.
Clearly, (1) \Q@a3@ is allowed, {since the result of the inner block will be only connected to a local reference}, while
(2) \Q@a1@ and \Q@a2@ are not, since it will be connected to an external mutable reference. 
However, (3) \Q@a1.clone()@ and \Q@a2.clone()@ are allowed, since the method \lstinline{clone} returns a capsule.
In the same way,
(4) \Q@a2.mix(a2).clone()@ is allowed, as well as
 \Q@a1.mix(a1).clone()@.

However, when we start mixing different variables, things become trickier.
 For example, (5) \lstinline{a2.mix(a1).clone()}{} is not well-typed in our type system.
  Indeed, even though, thanks to cloning, mixing \Q@a2@
 and \Q@a1@ does not compromise the capsule well-formedness of \lstinline{nestedA}{} (that is, the nested recovery can be safely applied), the fact that \Q@a2@ and 
 \Q@a1@ are mixed could compromise the capsule well-formedness of \Q@outerA@ when \Q@outerA@ is computed (that is, the outer recovery would be unsafe).

In summary, mixing \Q@a@s {lent groups introduced} for different {recoveries} must be avoided.
Rule \rn{t-swap} swaps one group with another,
thus keeping them distinct.

This last example \label{comparison} illustrates many of the differences w.r.t. {the type system proposed in \cite{GordonEtAl12}, whose notion of \emph{recovery} is less expressive}.
Their system allows (1), and rejects (2) and (5), as ours. However, they conservatively rejects (3) and (4), since 
 the flow is not tracked at a fine enough granularity.
Depending on the concrete application, programmers may need to work around the
limitations of \cite{GordonEtAl12} by reordering local variables,
 introducing stricter type qualifiers or, in general, re-factoring their code.
 However, there may be cases where there is no possible reordering.
\section{Calculus}\label{sect:calculus}
{The calculus has a simplified syntax where we assume that, except from right-hand sides of declarations, subterms of a compound expression are only {values}. This simplification can be easily obtained by a (type-driven) translation of the syntax of \refToFigure{syntax} generating for each subterm {which is not a value} a local declaration of the appropriate type.} 

Simplified syntax and values are defined in \refToFigure{calculus}. 
\begin{figure*}
\framebox{
{\small \begin{grammatica}
\produzione{\e}{\x\mid\FieldAccess{{\val}}{\f}\mid\MethCall{{\val}}{\m}{{\vals}}\mid\FieldAssign{{\val}}{\f}{{\val}}
\mid\ConstrCall{\C}{{\vals}}\mid\Block{\decs}{\val}
}{expression}\\*
\produzione{\dec}{\Dec{\T}{\x}{\e}}{variable declaration}\\*
\\
\produzione{\T}{\Type{\mu}{\C}}{type}\\*
\produzione{\mu}{\mutable\mid\imm\mid\capsule\mid\lent\mid\readable}{type qualifier}\\*
\\
\produzione{{\valPrime,}\val}{{\x\mid\ConstrCall{\C}{\vals}\mid\Block{\dvs}{\val}}}{{value}}\\*
\\
\produzione{\dv}{\Dec{\T}{\x}{\stVal}}{evaluated declaration}\\
\produzione{\stVal}{\ConstrCall{\C}{\xs}\mid\Block{\dvs}{\x}\mid\Block{\dvs}{\ConstrCall{\C}{\xs}}}{\storableVal}
\end{grammatica} }
}
\caption{Simplified syntax and values}\label{fig:calculus}
\end{figure*}
{Syntax of evaluated declarations and right-values from \refToFigure{syntax} is reported for reader's convenience.}

A {\it value} is either a variable (reference), {or a constructor invocation}, or a {value} enclosed in a local store.

{We denote by $\FV{\e}$ the set of the free variables of expression $\e$ (the standard formal definition is in Appendix A).}

We assume the following well-formedness constraints on expressions:
\begin{enumerate}
\item In a block $\Block{\Dec{\T_1}{\x_1}{\e_1}\ldots\Dec{\T_n}{\x_n}{\e_n}}{\val}$, mutual recursion, that is, $\x_j\in\FV{\e_i}$ and $\x_i\in\FV{\e_j}$, is only allowed if both declarations are {evaluated declarations, formally defined in \refToFigure{calculus}}. Hence, as expected, cyclic stores are allowed, such as
\begin{lstlisting}
{mut C y=new C(x); mut C x=new C(y); x}
{mut C x=new C(x); x}
\end{lstlisting}
whereas cyclic expressions such as
\begin{lstlisting}
{mut C y=x.f; mut C x=new C(y); x}, 
{mut C x= x.f; x}
{mut C x=y; mut C y = x; x}
\end{lstlisting}
are ill-formed. Allowing general recursion would require a sophisticated type system, 
as in \cite{ServettoEtAl13}, but this is not the focus of this paper.
\item As already mentioned in \refToSection{typesystem}, variables of $\capsule$ types can occur at most once in their scope. \label{linearity}
This is simply a \emph{syntactic} constraint, that is, we do not 
deal with linear types and the resultant context splitting, or flow-sensitive typing judgments. For instance, the following expression, which clearly
violates the intent of the capsule and immutable qualifiers, is ill-formed:
\begin{lstlisting}
capsule C c= {new C(0)}; 
capsule D d1= {new D(c)}; 
imm D d2 = {new D(c)}; 
...
\end{lstlisting}
Note that a capsule variable is not yet determined to
become mutable or immutable when it is declared, and that determination is made at the time of its unique occurrence.
\end{enumerate}

{Evaluated declarations associate a \emph{right-value} to a variable, which plays the role of a \emph{reference}.
{Hence, they correspond to} the \emph{store} in conventional models of imperative languages.}
In \refToFigure{wellformed} we define {\em {well-formed} right-values and store}. 
\begin{figure}[ht]
\framebox{
\begin{small}
\begin{math}
\begin{array}{c}
\Rule{}{\WFrv{\ConstrCall{\C}{\xs}}}{}\Space\Space
\Rule
{\forall \Dec{\T}{\y}{\stVal}\in\dvs\ \ \WFrv{\stVal}\ }
{{\WFrv{\Block{\dvs}{{\val}}}}}
{\begin{array}{l}
\dvs\neq\epsilon\\
\dvs=\Reduct{\dvs}{{\FV{{\val}}}}
\end{array}}
\\ \\
\Rule{}{\WFdv{\Dec{\Type{\mu}{\C}}{\x}{\ConstrCall{\C}{\xs}}}}{\mu\neq\capsule}
\Space\Space\Space\Space\Space
\Rule{\WFrv{\Block{\dvs}{{\val}}}\ \ \WFdv{\dvs}}
{\WFdv{\Dec{\Type{\imm}{\C}}{\x}{\Block{\dvs}{{\val}}}}}
{\begin{array}{c}
{\allMut{\dvs}}
\end{array}
}
\\ \\
\Rule{\WFdv{\dv}\Space\Space\WFdv{\dvs}}{\WFdv{\dv\ \dvs}}{}
\end{array}
\end{math}
\end{small}
}
\caption{Well-formed right-values and store}\label{fig:wellformed}
\end{figure}

{For a sequence of declarations $\decs=\Dec{\T_1}{\x_1}{\e_1}\ldots\Dec{\T_n}{\x_n}{\e_n}$, and a set of variables $\X$, we write $\connected{\decs}{\X}{\x}$ if $\x$ is (transitively) connected to $\X$ through $\decs$. This relation  is inductively defined as follows:
\begin{center}
$\connected{\decs}{\X}{\x}$ if $\x\in\X$\\
$\connected{\decs}{\X}{\x}$ if $\x\in\FV{\e_i}$, for some $i\in 1..n$, and $\connected{\decs}{\X}{\x_i}$.
\end{center}
The subsequence $\Reduct{\decs}{\X}$ of the declarations that are (transitively) used by $\X$ is defined by: for all $i\in 1..n$, $\Dec{\T_i}{\x_i}{\e_i}\in\Reduct{\decs}{\X}$ if  $\connected{\decs}{\X}{\x_i}$.
}
\\
We write $\allMut{\dvs}$ if, for each $\Dec{\Type{\mu}{\C}}{\x}{{\stVal}}\in\dvs$, $\mu\geq\mutable$, and $\allImm{\dvs}$ if, for each $\Dec{\Type{\mu}{\C}}{\x}{{\stVal}}\in\dvs$, $\mu=\imm$. 

Rules in the first line define well-formed right-values. 
They state that a right-value should not contain garbage and useless blocks, and that, in case it is a block, 
{the right-hand sides of declarations are, in turn, well-formed right-values.}

Rules in the second and third line define {well-formed}  evaluated declarations or stores. The first rule {states} that
a right-value which is an object state can be associated to any reference which is not $\capsule$.
The second rule {states} that a right-value which is a block can only be associated to an $\imm$ reference,
its local store should not contain $\imm$ references, and, in addition to be a well-formed
right-value, the block should contain a well-formed local store.    
The last rule states that a (non empty) sequence of evaluated declarations is a well-formed store if each one is a well-formed. 

These rules {altogether} imply that in a well-formed store:
\begin{itemize}
\item There are no $\capsule$ references. Indeed, $\capsule$ declarations are ``temporary'', that is, are expected to be consumed as soon as their right-hand side has been evaluated, as will be formalized by reduction rule \rn{capsule-elim} in \refToFigure{reduction}.
\item  There are at most \emph{two} levels, that is, the right-value of an $\imm$ reference can contain in turn a local store of non $\imm$ references. Indeed, additional levels can safely be ``flattened'', as will be formalized by reduction rules \rn{mut-move} and \rn{imm-move} in \refToFigure{reduction}.
\end{itemize}
For instance, assuming that class \lstinline{C} has two $\mutable$  fields of class \lstinline{D} and one $\imm$ of class \lstinline{C}{}, and class  \lstinline{D} has an integer field, the following is a store:
\begin{small}
\begin{lstlisting}
mut D x = new D(0);
imm D y = new D(1);
imm C z = { 
  mut D x = new D(0);
  mut D y = new D(1);
  new C(x,y,z) 
  };
\end{lstlisting}
\end{small}
Note that mutable variables in the local store of \lstinline{z}{} are not visible from the outside. 
This models in a natural way the fact that the portion of store denoted by  \lstinline{z}{} is indeed immutable

Expressions are equivalent modulo the congruence $\cong$ defined by the rules of \refToFigure{congruence}. 
\begin{center}
\begin{figure}[ht]
\framebox{
\begin{small}
\begin{math}
\begin{array}{l}
\\
\Space\Space\Space{\NamedRuleOL{alpha}{\congruence{\Block{\decs\ \Dec{\T}{\x}{\e}\ \decs'}{\val}}{{\Block{\Subst{\decs}{\y}{\x}\ \Dec{\T}{\y}{\Subst{\e}{\y}{\x}}\ \Subst{\decs'}{\y}{\x}}{\Subst{\val}{\y}{\x}}}}}{}}\Space\Space\Space
\\ \\
\Space\Space\Space\Space\Space\NamedRuleOL{reorder}{\congruence{
\Block{
\decs\ {\dv}\ \decs'
}{\val}
}{
\Block{
{\dv}\ \decs\ \decs'\ 
}{\val}
} }{}\Space\Space\Space
\\
\end{array}
\end{math}
\end{small}
}
\caption{Congruence on expressions}
\label{fig:congruence}
\end{figure}
\end{center}
Rule \rn{alpha} is the usual $\alpha$-conversion. We write $\Subst{\e}{\e'}{\x}$
for the expression obtained by replacing all (free)
occurrences of $\x$ in $\e$ by $\e'$ {(the standard formal definition is in the Appendix)}. The condition $\x,\y\not\in\dom{\decs\,\decs'}$ is implicit by well-formedness of blocks. Rule \rn{reorder} states that we can move evaluated declarations in an arbitrary order.  
On the other hand, $\decs$ and $\decs'$ cannot be swapped, since this could change 
the order of side effects. 

Values are equivalent modulo the congruence $\cong$ defined by the rules of \refToFigure{congruenceVal}. 
\begin{figure}[ht]
\framebox{
\begin{small}
\begin{math}
\begin{array}{l}
\NamedRuleOL{new}
{
\begin{array}{l}
{\ConstrCall{\C}{
  \val_1,.., \val_{i-1},\val_i,\val_{i+1},..,\val_n}}\\
  \cong{\Block{\Dec{\T_i}{\x}{{\val_i}}}{\ConstrCall{\C}{\val_1,.., \val_{i-1},\x,\val_{i+1},..,\val_n}}}
\end{array}
}
{\begin{array}{l}
\fields{\C}{=}\Field{\T_1}{\f_1}..\Field{\T_n}{\f_n}\\
\notRef{\val_i}\\
\x\not\in\FV{\val_j}\ \ (1\leq j\leq n)
\end{array}}
\\[3ex]
\NamedRuleOL{body}{\congruence{\Block{\dvs}{\Block{\dvs'\ \dvs''}{\val}}}{\Block{\dvs\ \dvs'}{\Block{\dvs''}{\val}}}}
{\begin{array}{l}
{\noCapture{\dvs}{\dom{\dvs'}}}\\
{\noCapture{\dvs'}{\dom{\dvs''}}}
\end{array}
}
\\[3ex]
{\NamedRuleOL{garbage}{\congruence{\Block{\dvs\ \dvs' }{\val}}{\Block{\dvs'}{{\val}}}}
{\noCapture{\Block{\dvs'}{\val}}{\dom{\dvs}}}}
\\[3ex]
\NamedRuleOL{block-elim}{\congruence{\Block{}{{\val}}}{\val}}{}
\end{array}
\end{math}
\end{small}
}
\caption{Congruence on values}
\label{fig:congruenceVal}
\end{figure}
By rule \rn{new} we can assume that arguments of a constructor invocation are references, by introducing local declarations of the appropriate type. The notation $\notRef{\val}$ means that $\val$ is not a variable. 
The following three rules deal with block values.
By rule \rn{body} we can move a sequence of evaluated declarations
from a block to the directly enclosing block, and conversely. The notation $\noCapture{\e}{\xs}$ means that free variables in $\e$ are not captured by $\xs$, formally: $\FV{\e}\cap\xs=\emptyset$, and analogously for $\noCapture{\dvs}{\xs}$. 
The two side conditions ensure that moving the declarations {$\dvs'$} does not cause either scope extrusion 
or capture of free variables. More precisely: the first condition prevents capturing with $\dvs'$ some free 
variables of the enclosing block, whereas the second prevents moving outside a declaration in $\dvs'$ which 
depends on local variables of the inner block (declared in $\dvs''$). The first side condition can be 
satisfied by $\alpha$-conversion of the inner block, whereas the second cannot. 
By rule \rn{garbage} we can remove useless local store from a block.
Finally, rule \rn{block-elim} states the obvious fact that a block with no declarations is equivalent 
to its body.

{Congruence preserves types. We have to assume that the congruent expressions be typable, since rule
\rn{garbage} eliminates declarations from blocks that without this assumption could be not typable.
\begin{proposition}\label{lemma:congruenceTypes}
Let $\TypeCheck{\Gamma}{\LentLocked}{\StronglyLocked}{\e}{{\T}}$. If $\e\cong\e'$ and for some $\Gamma'$, $\LentLocked'$, $\StronglyLocked'$ and $\T'$ we have  that $\TypeCheck{\Gamma'}{\LentLocked'}{\StronglyLocked'}{\e'}{{\T'}}$, then $\TypeCheck{\Gamma}{\LentLocked}{\StronglyLocked}{\e'}{{\T}}$
\end{proposition}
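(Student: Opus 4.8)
The plan is to argue by induction on the derivation that $\e\cong\e'$, reading $\cong$ as the least congruence (an equivalence relation closed under all term-forming contexts) generated by the axioms of \refToFigure{congruence} and \refToFigure{congruenceVal}. To avoid duplicating work for the symmetry rule of $\cong$, I would prove the statement in its symmetric form: for every well-formed $\Delta=\Gamma;\LentLocked;\StronglyLocked$ and every $\T$, if $\Delta\vdash\e:\T$ and $\e'$ is typable (in some context, with some type) then $\Delta\vdash\e':\T$, together with the same implication with $\e$ and $\e'$ exchanged. The point of each axiom case is to rebuild, in the \emph{same} context $\Delta$ and with the \emph{same} type $\T$, a derivation for the other side; the extra hypothesis that $\e'$ be typable is genuinely consumed only in one case (\rn{garbage}), as explained below. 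With this formulation the symmetry rule is free and reflexivity is immediate.

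The easy axioms come first. For \rn{alpha} the derivation is transported along the renaming of the locally bound variable, which touches neither $\Gamma$, $\LentLocked$, $\StronglyLocked$ nor $\T$. For \rn{reorder} and \rn{block-elim} the claim is immediate from the shape of \rn{t-block}: its premises and side conditions mention $\decs$ only through the order-insensitive map $\TypeEnv{\decs}$ and through set-level conditions on groups, so permuting evaluated declarations or erasing an empty declaration list leaves the derivation valid. For \rn{new} one matches a single use of \rn{t-new} (type $\Type{\mutable}{\C}$) against a use of \rn{t-block} whose body is typed by \rn{t-new}: the extracted declaration $\Dec{\T_i}{\x}{\val_i}$ receives exactly the field type $\T_i$, a mutable $\x$ is placed in the mutable group so that \rn{t-var} returns $\T_i$ in the body, and the block type coincides with the body type $\Type{\mutable}{\C}$.

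The two delicate axioms are \rn{body} and \rn{garbage}. For \rn{body} both sides are typed by nested applications of \rn{t-block}. Up to $\alpha$-conversion the domains of $\dvs$, $\dvs'$, $\dvs''$ are disjoint, so $\TypeEnv{\dvs\,\dvs'\,\dvs''}$ is one and the same map however the declarations are split across the two block levels; the type-assignment part therefore matches automatically and the work is to reorganize the group data of one derivation into admissible group data for the other nesting. Concretely, one must transfer the lent/mutable groups $\LentLocked',\MutGroup'$ of the body, the per-declaration swaps $\LentLocked_i\,\MutGroup_i=\LentLocked'\,\MutGroup'$, and the restricted set $\StronglyLocked'=\StronglyLocked{\setminus}\dom{\decs}$, checking the conditions $\LessEq{(\LentLocked{\setminus}\dom{\decs})}{\LentLocked'}$ and $(\MutGroup{\setminus}\dom{\decs})\subseteq\MutGroup'$ at both levels. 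The non-capture side conditions $\noCapture{\dvs}{\dom{\dvs'}}$ and $\noCapture{\dvs'}{\dom{\dvs''}}$ are what make this possible: no declaration of $\dvs'$ depends on a local of the innermost block, so the group chosen for $\dvs'$ inside remains admissible outside, and conversely. Carrying this bookkeeping faithfully through the $\leqLL$ and $\subseteq$ conditions is, I expect, the main obstacle. For \rn{garbage}, the elimination direction $\Block{\dvs\,\dvs'}{\val}\to\Block{\dvs'}{\val}$ is handled by weakening: since $\dom{\dvs}$ captures no free variable of $\Block{\dvs'}{\val}$, deleting the $\dom{\dvs}$ entries from $\Gamma$ and from every group leaves a valid derivation for $\dvs'$ and $\val$. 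The introduction direction is exactly where the standing hypothesis is indispensable: the reintroduced $\dvs$ may be unrelated garbage that is not typable on its own, so we read a derivation for those declarations off the assumed typing of $\e'$ and graft it onto the derivation for $\Block{\dvs'}{\val}$.

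For the contextual-closure rule there is no difficulty: if $\e=\GenCtx{\e_0}\cong\GenCtx{\e'_0}=\e'$, inverting the rule applied at the hole gives $\Delta_0\vdash\e_0:\T_0$, inverting the assumed typing of $\e'$ gives typability of $\e'_0$, the induction hypothesis yields $\Delta_0\vdash\e'_0:\T_0$, and re-applying the same rule rebuilds $\Delta\vdash\e':\T$. The last subtlety is transitivity, $\e\cong\e''\cong\e'$, for which the induction hypotheses would need $\e''$ typable; this cannot be assumed, since (as the \rn{garbage} counterexample shows) typability is genuinely not preserved by $\cong$. I would dispatch this by normalization: every term has a garbage-free form obtained by exhaustively applying \rn{garbage}-elimination, and on garbage-free terms the remaining axioms preserve typability in both directions, so any congruence between two typable garbage-free terms can be routed through typable intermediates. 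Transporting $\T$ across the garbage-free cores and restoring, at each endpoint, the typing of the garbage it carries (supplied by the outer hypothesis) then closes the argument. The interplay of the \rn{body} bookkeeping with this normalization step for transitivity is where I expect the real effort to lie.
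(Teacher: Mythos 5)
Your plan is, at its core, the same as the paper's: the paper's entire proof is the single sentence ``by cases on the congruence rule used'', i.e.\ exactly the case analysis on the axioms of \refToFigure{congruence} and \refToFigure{congruenceVal} that you carry out. Your treatment of \rn{alpha}, \rn{reorder}, \rn{block-elim}, \rn{new} and \rn{body} is sound (the \rn{body} group bookkeeping is indeed where the honest work is), and you correctly locate the only place where the typability hypothesis on $\e'$ is consumed. Where you go genuinely beyond the paper is in taking the structural rules of $\cong$ seriously: the symmetric strengthening that makes symmetry free, and above all the observation that naive rule induction breaks on transitivity, because a chain $\e\cong\e''\cong\e'$ may pass through untypable intermediates (garbage introduction does not preserve typability). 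Your proposed repair --- routing any congruence between typable terms through garbage-free normal forms --- is plausible and fills a hole that the paper's one-line proof does not even acknowledge.

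There is, however, one step that fails as written, and it is precisely the step the whole proposition hinges on: the graft in the \rn{garbage}-introduction case. The derivation you ``read off the assumed typing of $\e'$'' for the reintroduced declarations lives in the type assignment $\Gamma'$, which the statement leaves completely unrelated to $\Gamma$, while the conclusion demands a derivation in $\Gamma$; nothing lets you transport the typing of the garbage from one to the other. Concretely, let class $\C$ have one field $\Field{\Type{\mutable}{\C}}{\f}$ and class $\D$ have none, and take $\dvs=\Dec{\Type{\mutable}{\C}}{\x}{\ConstrCall{\C}{\y}}$, $\e=\Block{}{\ConstrCall{\D}{}}$, $\e'=\Block{\dvs}{\ConstrCall{\D}{}}$. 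Then $\e\cong\e'$ by \rn{garbage}, $\e$ is typable in $\Gamma=\emptyset$, and $\e'$ is typable in $\Gamma'=\TypeDec{\Type{\mutable}{\C}}{\y}$, yet $\e'$ is not typable in $\Gamma$ at all, since $\y\notin\dom{\Gamma}$; so the literal statement is refuted, and no grafting argument can succeed without further hypotheses. The fix is to make explicit the assumption the paper itself uses silently: $\Gamma$ and $\Gamma'$ must agree on (at least) the free variables of the reintroduced garbage --- in every application the paper makes of this proposition the type assignment is in fact the same on both sides, only the term changes. Under that assumption your graft goes through, using \refToLemma{weakening} to embed the garbage's typing into $\Gamma[\TypeEnv{\dvs\ \dvs'}]$; without it, this is the one genuinely missing link in your argument (and in the paper's).
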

\begin{proof}
By cases on the congruence rule used. \qed
\end{proof}}

Let a {\em  well-formed value} be either a variable or a well-formed right-value.
We can show that any value is congruent to a well-formed value.
\begin{proposition}\label{lemma:congruenceValue}
Let $\val$ be a value, then either $\congruence{\val}{\x}$ for some variable $\x$, or $\congruence{\val}{\stVal}$ for some 
\storableVal\  $\stVal$ {such that $\WFrv{\stVal}$}. 
\end{proposition}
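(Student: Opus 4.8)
The plan is to argue by structural induction on the value $\val$, following the grammar $\val\,{::=}\,\x\mid\ConstrCall{\C}{\vals}\mid\Block{\dvs}{\val}$, and to use the congruence rules of \refToFigure{congruenceVal} to drive $\val$ into one of the two required shapes. The base case $\val=\x$ is immediate, since $\x$ is already a variable, so the work is concentrated in the two compound cases. In each case the induction hypothesis is applied to the proper subterms (arguments, body, right-hand sides), and the rules \rn{new}, \rn{body}, \rn{garbage} and \rn{block-elim} are used to reassemble a normal form; since $\cong$ is a congruence I may freely rewrite subterms in place.

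For $\val=\ConstrCall{\C}{\val_1,\ldots,\val_n}$ I would first apply the induction hypothesis to each argument, obtaining $\congruence{\val_i}{\x_i}$ for a variable, or $\congruence{\val_i}{\stVal_i}$ with $\WFrv{\stVal_i}$. A well-formed right-value is never a bare variable, so $\notRef{\stVal_i}$ holds and rule \rn{new} is applicable. I replace each argument by its normal form and then hoist every non-variable argument with \rn{new}, introducing a fresh evaluated declaration $\Dec{\T_i}{\y_i}{\stVal_i}$; repeated use of \rn{body} merges the resulting nested blocks into a single one. If nothing had to be hoisted, the result is $\ConstrCall{\C}{\xs}$, which is a well-formed right-value by the first rule of \refToFigure{wellformed}. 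Otherwise the result is $\Block{\dvs}{\ConstrCall{\C}{\xs}}$ with $\dvs\neq\epsilon$; each right-hand side is well-formed by the induction hypothesis, and there is no garbage because every hoisted $\y_i$ occurs in the body, so $\dvs=\Reduct{\dvs}{\FV{\ConstrCall{\C}{\xs}}}$ and the block is a well-formed right-value.

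For $\val=\Block{\dvs}{\val'}$ I would first normalise the body and then clean up the declarations. If $\congruence{\val'}{\stVal'}$ is itself a block, I flatten it into the enclosing block with \rn{body} (after \rn{alpha} to avoid capture), so the body becomes either a variable or a constructor applied to variables; if it is already of this shape nothing is needed. I then use \rn{reorder} and \rn{garbage} to discard every declaration not (transitively) connected to the body via $\connected{\dvs}{\X}{\x}$, and \rn{block-elim} when no declaration survives: in that case $\val$ is congruent to its body, which is a variable or a well-formed $\ConstrCall{\C}{\xs}$. When declarations survive, the intended outcome is a well-formed right-value $\Block{\dvs'}{b}$, and here lies the main obstacle: I must guarantee that each surviving right-hand side is itself a well-formed right-value. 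Applying the induction hypothesis to a right-hand side can collapse it to a bare variable — for instance when $\stVal_i$ is a block whose body is a free variable $\z$ and whose local declarations are garbage, so that $\congruence{\stVal_i}{\z}$ — and $\Dec{\T_i}{\x_i}{\z}$ is not a legal evaluated declaration. The delicate point is thus to control these degenerate right-hand sides: when the aliased variable is unused the whole declaration is removed by \rn{garbage}, but when it is reachable from the body one must exploit the connectivity bookkeeping of $\Reduct{\dvs}{\X}$ together with the well-formedness constraints on blocks to show the configuration can still be presented in the required form. I expect this reabsorption of right-hand sides that garbage-collect to a single variable to be the real content of the argument, whereas the flattening and garbage-collection manipulations are routine and terminate since each strictly decreases the size of $\val$.
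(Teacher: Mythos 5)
Your strategy coincides with the paper's: the paper's entire proof is the one line ``by induction on values using the congruence rules of \refToFigure{congruenceVal}'', and your proposal is a good-faith expansion of exactly that induction. Your constructor case is correct, and in the block case the normalisation of the body, the flattening by \rn{body}, and the removal of unconnected declarations by \rn{reorder} and \rn{garbage} are sound. But the case you single out and then defer --- a surviving declaration whose right-hand side is congruent to a bare variable reachable from the body --- is not just ``the real content of the argument'': it is a step that fails. Consider $\val=\Block{\Dec{\Type{\mutable}{\C}}{\x}{\stVal_x}}{\x}$ with $\stVal_x=\Block{\Dec{\Type{\mutable}{\D}}{\y}{\ConstrCall{\D}{}}}{\z}$, where $\D$ has no fields and $\z\notin\{\x,\y\}$ is free. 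This is a syntactically legal value, and $\congruence{\stVal_x}{\z}$ by \rn{garbage} and \rn{block-elim}. However, the outer declaration of $\x$ can never be eliminated: it is referenced by the body, so \rn{garbage} does not apply, and no congruence rule replaces the occurrences of $\x$ by $\z$ and drops the declaration --- that is precisely the job of the \emph{reduction} rule \rn{alias-elim}, which is not available here. Nor can that declaration's right-hand side ever be made well formed: up to $\alpha$-renaming and insertion of empty blocks or fresh garbage (which only make matters worse), it is congruent only to $\stVal_x$ itself (which contains garbage), to $\Block{}{\z}$ (which violates the side condition $\dvs\neq\epsilon$ for well-formed right-values), and to $\z$ (which is not a \storableVal\ at all, so the enclosing block is then not even syntactically a \storableVal). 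Hence this $\val$ is congruent to no variable and to no $\stVal$ with $\WFrv{\stVal}$, and no amount of connectivity bookkeeping will change that.

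So the missing idea is not a cleverer use of the congruence but an extra hypothesis: the statement goes through when the evaluated declarations occurring in $\val$ are themselves well-formed stores (the judgement $\WFdv{\dvs}$), which is in effect how the proposition is used later in the paper, where every $\dv$ and $\dvs$ is assumed well formed and alias-like declarations are removed by \rn{alias-elim} during reduction rather than by $\cong$. Under that hypothesis the degenerate case evaporates, because every right-hand side is either $\ConstrCall{\C}{\xs}$ or a garbage-free block, hence already a well-formed \storableVal, and your induction closes using only the body-normalisation and garbage-collection steps you describe. In short: same route as the paper, constructor case fine, but the block case has a genuine hole --- one which, to your credit, you located, and which the paper's one-line proof silently glosses over as well.
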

\begin{proof}
By induction on values using the congruence rule of \refToFigure{congruenceVal}.\qed
\end{proof}

The reduction relation is defined by the rules in \refToFigure{reduction}, where $\ctx$ is an {\em evaluation context} 
defined by:
\begin{center}
$\ctx ::=\emptyctx\mid\Block{\dvs\ \Dec{\T}{\x}{\ctx}\ \decs}{\val}
$
\end{center}
{In this definition, and in the following we assume that the metavariables $\val$, $\stVal$, $\dv$ and $\dvs$ denote values,  right-values, evaluated declarations, and stores which are well-formed. The assumption on $\val$ and $\stVal$ can be achieved by Proposition~\ref{lemma:congruenceValue}.}

To lighten notations, here and in what follows we sometimes use the wildcard $\_$ when a metavariable 
is not significant.

\begin{figure}[ht]
\framebox{
\begin{small}
\begin{math}
\begin{array}{l}
\NamedRuleOL{field-access}{\reduce{\Ctx{\FieldAccess{{\val}}{\f}}}
{\Ctx{\extractField{\ctx}{\val}{i}}}}{
\begin{array}{l}
{\extractType{\ctx}{\val}=\Type{\mu}{\C}}\\
\fields{\C}=\Field{\T_1}{\f_1}\ldots\Field{\T_n}{\f_n}\ \mbox{and}\  \f=\f_i
\end{array}
}
\\[4ex]
\NamedRuleOL{invk}
{
\begin{array}{l}
\Ctx{\MethCall{\val}{\m}{\val_1,\ldots,\val_n}}\\
\longrightarrow
\Ctx{
\Block{
\Dec{\Type{\mu}{\C}}{\,\this}{\val}\
\Dec{\T_1}{\x_1}{\val_1}
\ldots
\Dec{\T_n}{\x_n}{\val_n}\
\Dec{\T}{\z}{\e}
}{\z}
}
\end{array}
}
{
\begin{array}{l}
{\extractType{\ctx}{\val}=\Type{\_}{\C}}\\
\method{\C}{\m}=\\
\Space \FourTuple{\T}{\mu}{\Param{\T_1}{\x_1}\ldots\Param{\T_n}{\x_n}}{\e}
\end{array}
}
\\[4ex]
{\NamedRuleOL{field-assign-prop}{\reduce{\Ctx{\FieldAssign{{\val}}{\f}{\valPrime}}}{\Ctx{\Block{\Dec{\Type{\mu}{\C}}{\x}{{\val}}\
\Dec{\T_i}{\z}{(\FieldAssign{\x}{\f}{\valPrime})}}{\z}}}}
{
\begin{array}{l}
{\notRef{\val}}, 
\typeOf{\val}=\Type{\mu}{\C}\\
\fields{\C}=\Field{\T_1}{\f_1}\ldots\Field{\T_n}{\f_n}\\
\f=\f_i
\end{array}
}}
\\[4ex]
\NamedRuleOL{field-assign}{
\begin{array}{l}
{
\Ctx{\Block{\dvs\ \Dec{\T}{\z}{\CtxP{\FieldAssign{\x}{\f}{{\valPrime}}}}\ \decs}{\val}}
}\\
\longrightarrow
{
\Ctx{\Block{{\Update{\dvs}{\x}{i}{{\valPrime}}}\ \Dec{\T}{\z}{\CtxP{{\valPrime}}}\ \decs}{\val}}
}
\end{array}
}
{
\begin{array}{l}
\dvs(\x)=\Dec{\mu}{\x}{\ConstrCall{\C}{\_}},{\mu\geq\mutable}\\
{\noCapture{\x}{\HB{\ctxP}},\noCapture{\valPrime}{\HB{\ctxP}}}\\
\fields{\C}=\Field{\T_1}{\f_1}\ldots\Field{\T_n}{\f_n}\ \mbox{and}\  \f=\f_i
\end{array}
}
\\[4ex]
\NamedRuleOL{field-assign-move}
{
\begin{array}{l}
\Ctx{\Block
	{\dvs'\
		\Dec{\T'}{\z'}
		{\Block
			{\dvs\
			\Dec{\T}{\z}
			{\CtxP{\x.\f{=}{\valPrime}}
			}\
			\decs
			}
			{\val}
		}\
		\decs'
	}
	{\val'}}\\
{\longrightarrow}
	\Ctx{\Block{\dvs'\
		\dvs\
		\Dec{\T'}{\z'}
		{
		\Block{
			\Dec{\T}{\z}
			{\CtxP{\x.\f{=}{\valPrime}}
			}\
			\decs
		}
		{\val}
		}\ 
		\decs'
	}
	{\val'}	}
\end{array}
}{
\begin{array}{l}
\hspace{-.2cm}\FV{\valPrime}{\cap}\dom{\dvs}{=}\xs{\neq}\emptyset\\
\hspace{-.2cm}\noCapture{\x}{\HB{\ctxP}\cup\dom{\dvs}}\\
\hspace{-.2cm}\noCapture{\valPrime}{\HB{\ctxP}}\\
\hspace{-.2cm}{\noCapture{\Block{\dvs'\ \decs'}{\val'}}{\dom{\dvs}}}\\
\hspace{-.2cm}{\Reduct{(\dvs\ \decs)}{{\xs}}=\dvs}
\end{array}
}
\\[4ex]
\NamedRuleOL{alias-elim}{\reduce{\Ctx{\Block{\dvs\ \Dec{\T}{\x}{\y}\ \decs }{\val}}}{\Ctx{\Subst{\Block{\dvs\ \decs}{\val}}{\y}{\x}}}}
{}
\\[4ex]
\NamedRuleOL{capsule-elim}{
\begin{array}{l}
{\Ctx{\Block{\dvs\ \Dec{\Type{\capsule}
{\C}}{\x}{{\val}}\ \decs }{\val'}}}
\\ \longrightarrow
{\Ctx{\Subst{\Block{\dvs\ \decs}{\val'}}{{\val}}{\x}}}
\end{array}
}
{
}
\\[4ex]
%
\NamedRuleOL{mut-move}{
\begin{array}{l}
{
\Ctx{\Block{\dvs'\ \Dec{\Type{\mu}{\C}}{\x}{\Block{\dvs\ \dvs''}{\val}}\ \decs'}{\val'}}
}
\\ \longrightarrow
{
\Ctx{\Block{\dvs'\ \dvs\ \Dec{\Type{\mu}{\C}}{\x}{\Block{\dvs''}{\val}}\ \decs'}{\val'}}
}
\end{array}
}{\begin{array}{l}
\mu \geq \mutable\\
{\noCapture{\Block{\dvs'\ \decs'}{\val'}}{\dom{\dvs}}}\\
{\noCapture{\dvs}{\dom{\dvs''}}}
\end{array}}
\\[4ex]
\NamedRuleOL{imm-move}{
\begin{array}{l}
{
\Ctx{\Block{\dvs'\ \Dec{\Type{\mu}{\C}}{\x}{\Block{\dvs\ \dvs''}{\val}}\ \decs'}{\val'}}
}
\\ \longrightarrow
{
\Ctx{\Block{\dvs'\ \dvs\ \Dec{\Type{\mu}{\C}}{\x}{\Block{\dvs''}{\val}}\ \decs'}{\val'}}
}
\end{array}
}{\begin{array}{l}
\mu \leq \imm, {\allImm{\dvs}}\\
{\noCapture{\Block{\dvs'\ \decs'}{\val'}}{\dom{\dvs}}}\\
{\noCapture{\dvs}{\dom{\dvs''}}}
\end{array}}
\end{array}
\end{math}
\end{small}
}
\caption{Reduction rules}
\label{fig:reduction}
\end{figure}
In rule \rn{field-access}, the type $\Type{\mu}{\C}$ of the receiver $\val$ in the context is found, fields of the class $\C$ are retrieved from the class table, it is checked that $\f$ is actually the name of a field of $\C$, say, the $i$-th field, and the field access is reduced to the $i$-th field of the receiver.

The auxiliary functions $\aux{type}$ and $\aux{get}$ extract the type, and the $i$-th field, respectively, of a value in a context (only needed when the value is a reference). In the definitions, by Proposition~\ref{lemma:congruenceValue}, we assume that values are either references or \storableVals.
{\begin{small}
\begin{quote}
$\extractType{\ctx}{\x}=\T$ if $\extractDec{\ctx}{\x}=\Dec{\T}{\x}{{\_}}$ \\
$\extractType{\ctx}{\stVal}=\typeOf{\stVal}$\\
$\typeOf{\ConstrCall{\C}{\xs}}=\typeOf{\Block{\dvs}{\ConstrCall{\C}{\xs}}}=\Type{\mutable}{\C}$\\
$\typeOf{\Block{\dvs}{\x}}=\T$ if $\dvs(\x)=\Dec{\T}{\x}{{\_}}$\\
\\
$\extractField{\ctx}{\x}{i}=\fieldOf{{\stVal}}{i}$ if $\extractDec{\ctx}{\x}=\Dec{\_}{\x}{{\stVal}}$\\
$\extractField{\ctx}{\stVal}{i}=\fieldOf{\stVal}{i}$\\
{$\fieldOf{\ConstrCall{\C}{\x_1,\ldots,\x_n}}{i}=\x_i$}\\
{$\fieldOf{\Block{\dvs}{\ConstrCall{\C}{\x_1,\ldots,\x_n}}}{i}=\Block{\dvs}{\x_i}$}\\
{$\fieldOf{\Block{\dvs}{\x}}{i}=\Block{\dvs}{\val}$ if $\dvs(\x)=\Dec{\_}{\x}{\stVal}$ and $\fieldOf{\stVal}{i}=\val$}
\\
$\extractDec{\Block{\dvs\ \Dec{\T}{\_}{\ctx}\ \decs}{\val}}{\x}=\begin{cases}
\extractDec{\ctx}{\x}&\mbox{if}\ \extractDec{\ctx}{\x}\ \mbox{defined, otherwise}\\
\dvs(\x)&\mbox{if}\ \dvs(\x)\ \mbox{defined}
\end{cases}$
\end{quote}
\end{small}}
Note that a field access $\FieldAccess{\x}{\f}$, if $\x$ has qualifier $\geq\mutable$,  always returns a reference, since the right-value of $\x$ is necessarily an object state.
If $\x$ is immutable, instead, the field access could return a (copy of) the value stored in the field. This duplication preserves the expected semantics in the case of an immutable reference $\x$, whereas it would be wrong for a mutable reference, since a modification of the object denoted by $\x$ is expected to affect $\FieldAccess{\x}{\f}$ as well, and conversely.

For instance, given the value:
\begin{quote}
$\val=$
\Q@{mut C x=new C(x,y,z); mut D y=new D(0); new C(x,y,z) }@
\end{quote}
we have:
\begin{enumerate}
\item $\fieldOf{\val}{1}=$
\Q@{mut C x=new C(x,y,z); mut D y=new D(0); x }@
\item $\fieldOf{\val}{2}=$
\Q@{mut C x=new C(x,y,z); mut D y=new D(0); y}@
\item $\fieldOf{\val}{3}=$
\Q@{mut C x=new C(x,y,z); mut D y=new D(0); z}@
\end{enumerate}
where 1 is a well-formed value, 2 is congruent by rule \rn{garbage} to the  well-formed value
\Q@{mut D y=new D(0)   y}@, and 3 is congruent by rules \rn{garbage} and \rn{block-elim}, to the well-formed value \Q@z@.

If the value $\val$ above is the right-value of a reference, then such reference is necessarily $\imm$. In this case, $\val$ 
was expected to be a capsule, the variable $\z$ should be declared $\imm$ in the enclosing context.

In rule \rn{invk}, the class $\C$ of the receiver $\val$ is found, and method $\m$ of $\C$ is retrieved from the class table. The call is reduced to a block where declarations of the appropriate type for $\this$, the parameters, and the result are initialized with the receiver, the arguments, and the method body, respectively. The last declaration, variable $\z$, is
 needed to preserve the simplified syntax.

Rule \rn{field-assign-prop} handles the case of a field assignment where the receiver is not a reference (denoted $\notRef{\val}$). In this case, a local reference initialized with the receiver is introduced, and the field access is propagated to such reference. 

In rule \rn{field-assign}, given a field assignment where the receiver is a reference $\x$, the first enclosing declaration for $\x$ is found (the side condition {$\noCapture{\x}{\HB{\ctxP}}$} ensures that it is actually the first), it is checked that the qualifier of the type of $\x$ is $\geq\mutable$, and fields of the class $\C$ of $\x$ are retrieved from the class table. If $\f$ is the name of a field of $\C$, say, the $i$-th, then the $i$-th field of the right-value of $\x$ is updated to $\valPrime$, which is also the result of the field assignment. 
We write $\HB{\ctx}$ for the \emph{hole binders} of $\ctx$, that is, the variables declared in blocks enclosing the
context hole (the standard formal definition is in the Appendix) and $\Update{\dvs}{\x}{i}{{\valPrime}}$ for the sequence of evaluated declarations obtained from $\dvs$ by replacing the $i$-th field of the right-value of $\x$ with ${\valPrime}$ (the obvious formal definition is omitted).

The side condition {$\noCapture{\valPrime}{\HB{\ctxP}}$}, requiring that there are no inner declarations for some free variable in $\valPrime$, prevents scope extrusion. For instance, without this side condition, the term

\begin{small}
\begin{lstlisting}
mut C x= new C(...);
imm C z= { 
  mut D y1= new D(0);
  mut D y2= ( x.f = y1);
  mut D y3= new D(1);
  new C(y3) };
x
\end{lstlisting}
\end{small}

would erroneously reduce to 

\begin{small}
\begin{lstlisting}
mut C x= new C(y1); 
imm C z= { 
  mut D y = new D(0);
  mut D y2= y1;
  mut D y3= new D(1);
  new C(y3) };
x
\end{lstlisting}
\end{small}
Thanks to the side condition, instead, rule \rn{field-assign} is not applicable. However, 
rule \rn{field-assign-move} can be applied.

Rule \rn{field-assign-move} moves store out of a block when this is needed to safely perform field-assignment (that is, to avoid scope extrusion).
In this rule,  given a field assignment of shape $\FieldAssign{\x}{\f}{\valPrime}$, the first enclosing block containing (evaluated) declarations for some free variables $\xs$  of $\valPrime$ is found (the side condition {$\noCapture{\valPrime}{\HB{\ctxP}}$} ensures that it is actually the first). If a declaration for $\x$ can only be found in an outer scope (side condition {$\noCapture{\x}{\HB{\ctxP}\cup\dom{\dvs}}$}), then the store formed by the $\xs$ references, together with all the other references they (recursively) depend on (last side condition) is moved to the directly enclosing block.

For the example above, by applying rule \rn{field-assign-move} we get:

\begin{small}
\begin{lstlisting}
mut C x = new C(...)  
mut D y1 = new D(0)
imm C z = { 
  mut D y2 = ( x.f = y1)
  mut D y3 = new D(1)
  new C(y3) }  
x
\end{lstlisting}
\end{small}
Now, rule \rn{field-assign} can be safely applied to the term.
In general, we may need to apply rule \rn{field-assign-move} many times before reaching the declaration of $\x$.

The remaining rules handle blocks.

The first two rules eliminate a single declaration of shape $\Dec{\T}{\x}{\val}$ which is \emph{not} well-formed store. 

 In rule \rn{alias-elim}, a reference $\x$ which is initialized as an alias\footnote{{An analogous rule would handle variables of primitive types in an extension of the calculus including such types.}} of another reference $\y$ is eliminated by replacing all its occurrences.
 
In rule \rn{capsule-elim}, a $\capsule$ reference $\x$ is eliminated by replacing the occurrence of $\x$ (unique by the well-formedness constraint) by its value. 
Note that, in rule \rn{alias-elim}, $\y$ cannot be a $\capsule$ reference (since we would have applied rule \rn{capsule-elim} before), hence  duplication of $\capsule$ references cannot be introduced{, and well-formedness is preserved.} 

Rule \rn{mut-move} moves store out of a block associated to a {reference with qualifier $\geq\mutable$}. 
This is always safe, provided that no variables of the outer scope are captured (second side condition, which can be obtained by $\alpha$-renaming), and that the moved declarations do not refer to other declarations of the inner block (last side condition). 

Rule \rn{imm-move} moves store out of a block associated to a {reference with qualifier $\leq\imm$}. In this case, this is only safe for a store of $\imm$ references. The same side conditions of the previous rule are needed.

\section{ {Type safety and properties of the type system}}\label{sect:results}
 {In this section we present  {the results}. We first give a characterisation of the values in terms of the
properties of their free variables, which  {correspond to} their reachable graph. Then we show the soundness
of the type system for the operational semantics, and finally we formalize the expected behaviour of capsule and
immutable references.}

 {In the following, we denote by $\deriv:\TypeCheckShort{\Delta}{\e}{\T}$ a derivation tree for the judgement 
 $\TypeCheckShort{\Delta}{\e}{\T}$. 
Moreover, we call the rules where expressions in the premises are the direct subterms of the expression in the consequent,  \emph{structural}; the others, that is, \rn{t-capsule}, \rn{t-imm}, \rn{t-swap}, \rn{t-unrst}, and \rn{t-sub},  \emph{non-structural}.

\subsection{Canonical Forms}
In a type derivation, given a construct, 
in addition to the corresponding structural rule we can have applications of non-structural rules.  
In this section we first present some results exploring the effect of non-structural rules on the lent groups and the mutable groups of variables, and on the modifier derived for
the expression, then we give an inversion lemma for blocks (the construct which is relevant
for the analysis of right-values), and finally we present the Canonical Form theorem with its proof. 

Given a type judgement $\AuxTypeCheck{\Gamma}{\LentLocked}{\MutGroup}{\StronglyLocked}{\e}{\Type{\mu}{\C}}$,  {application of non-structural rules can only modify the lent groups and the mutable group by swapping, hence leading to a permutation $\LentLocked'\, \MutGroup$ of $\LentLocked\, \MutGroup$. In other terms,} the equivalence relation on $\domMut{\Gamma}$
 {they induce} is preserved,
as the following lemma shows.
\begin{lemma} [Non-structural rules]\label{lemma:nonStructural}
If $\deriv:\AuxTypeCheck{\Gamma}{\LentLocked}{\MutGroup}{\StronglyLocked}{\e}{\Type{\mu}{\C}}$, then 
there is a sub-derivation $\deriv'$ of $\deriv$ such that $\deriv':\AuxTypeCheck{\Gamma}{\LentLocked'}{\MutGroup'}{\StronglyLocked'}{\e}{\Type{\mu'}{\C}}$ ends with the application of a structural rule,  $\LentLocked\ \MutGroup=\LentLocked'\ \MutGroup'$, and $\mu\not=\imm$, $\mu\not=\capsule$ implies $\mu'\leq\mu$.
\end{lemma}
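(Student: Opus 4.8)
The plan is to prove the statement by induction on the height of the derivation $\deriv$, but first to strengthen the claim about the modifier so that it becomes a usable induction hypothesis. Define the map $f$ on qualifiers by $f(\capsule)=\mutable$, $f(\imm)=\readable$, and $f(\mu)=\mu$ for $\mu\in\{\mutable,\lent,\readable\}$; one checks directly that $f$ is monotone, i.e.\ $\mu_1\leq\mu_2$ implies $f(\mu_1)\leq f(\mu_2)$ (the only nontrivial instances are $\capsule\leq\imm$, for which $f(\capsule)=\mutable\leq\readable=f(\imm)$, and $\capsule\leq\mutable$, for which $f(\capsule)=\mutable\leq\mutable$). I will then prove the strengthened statement that the structural subderivation $\deriv'$ satisfies $\mu'\leq f(\mu)$, together with $\LentLocked\ \MutGroup=\LentLocked'\ \MutGroup'$ and the fact that $\Gamma$ and $\C$ are unchanged. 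Since $f(\mu)=\mu$ whenever $\mu\not=\imm$ and $\mu\not=\capsule$, this immediately yields the lemma.

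The base case is when $\deriv$ already ends with a structural rule: take $\deriv'=\deriv$, so that $\mu'=\mu$, and $\mu\leq f(\mu)$ holds in every case ($\capsule\leq\mutable$, $\imm\leq\readable$, and $\mu\leq\mu$ otherwise). For the inductive step $\deriv$ ends with one of the unary non-structural rules \rn{t-sub}, \rn{t-swap}, \rn{t-unrst}, \rn{t-capsule}, \rn{t-imm}. In each case I apply the induction hypothesis to the unique premise derivation $\deriv_0$, whose conclusion has modifier $\mu_0$, lent groups $\LentLocked_0$ and mutable group $\MutGroup_0$, obtaining a structural subderivation $\deriv'$ (hence also a subderivation of $\deriv$) with $\mu'\leq f(\mu_0)$ and $\LentLocked_0\ \MutGroup_0=\LentLocked'\ \MutGroup'$.

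Two facts must then be checked for each rule. First, the group equality $\LentLocked\ \MutGroup=\LentLocked_0\ \MutGroup_0$: this is immediate for \rn{t-sub} and \rn{t-unrst}, since the context, and hence $\domMut{\Gamma}\setminus\LentLocked$, is unchanged between premise and conclusion; and for \rn{t-swap}, \rn{t-capsule}, \rn{t-imm} it holds because these rules only move a group between the lent groups and the mutable group (possibly introducing or deleting an empty group), leaving the underlying partition of $\domMut{\Gamma}$ fixed. Composed with the induction hypothesis this gives $\LentLocked\ \MutGroup=\LentLocked'\ \MutGroup'$. Second, the modifier bound $\mu'\leq f(\mu)$: from $\mu'\leq f(\mu_0)$ it suffices to show $f(\mu_0)\leq f(\mu)$. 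For \rn{t-sub} we have $\mu_0\leq\mu$, so monotonicity of $f$ gives it; for \rn{t-swap} either $\mu=\mu_0$, or $\mu_0=\mutable$ and $\mu=\lent$, where $f(\mu_0)=\mutable\leq\lent=f(\mu)$; for \rn{t-unrst} the modifier is unchanged; for \rn{t-capsule} we have $\mu_0=\mutable$, $\mu=\capsule$ with $f(\mu_0)=\mutable=f(\capsule)$; and for \rn{t-imm} we have $\mu_0=\readable$, $\mu=\imm$ with $f(\mu_0)=\readable=f(\imm)$.

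The main obstacle is precisely the choice of this strengthening. If one tried to carry only the bare statement ``$\mu\not=\imm,\capsule\Rightarrow\mu'\leq\mu$'' through the induction, the \rn{t-sub} step would break: a premise typed $\capsule$ or $\imm$ (exactly the cases where the bare hypothesis constrains nothing) can be widened by subtyping to some $\mu\notin\{\imm,\capsule\}$ for which a bound on $\mu'$ is now demanded. Bounding $\mu'$ by $f(\mu_0)$ instead of $\mu_0$ is what records enough information about the $\capsule$ and $\imm$ cases to survive widening, and the monotonicity of $f$ with respect to the qualifier ordering is the single algebraic fact that makes every inductive case close.
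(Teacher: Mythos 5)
Your proof is correct, and its skeleton---induction on the typing derivation, case analysis on the last non-structural rule, and the observation that \rn{t-swap}, \rn{t-capsule}, \rn{t-imm} merely permute the lent/mutable groups (modulo an empty group) while \rn{t-sub} and \rn{t-unrst} leave them untouched---is exactly the paper's. The genuine difference is your strengthened invariant $\mu'\leq f(\mu)$ with $f(\capsule)=\mutable$ and $f(\imm)=\readable$. The paper carries only the bare implication ``$\mu\not=\imm,\capsule\Rightarrow\mu'\leq\mu$'' through the induction, and in its \rn{t-sub} case just invokes the induction hypothesis plus transitivity of $\leq$; as you point out, that step is incomplete precisely when the premise of \rn{t-sub} is typed $\capsule$ or $\imm$ (where the bare hypothesis asserts nothing) and subsumption widens the type out of $\{\capsule,\imm\}$, e.g.\ $\capsule$ widened to $\mutable$: the conclusion then demands a bound on $\mu'$ that the unstrengthened hypothesis cannot supply. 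The missing fact is true---a $\capsule$ (resp.\ $\imm$) judgement can only sit above a structural core of modifier $\leq\mutable$ (resp.\ $\leq\readable$), because \rn{t-capsule} requires a $\mutable$ premise, \rn{t-imm} a $\readable$ one, and \rn{t-var} can only yield $\capsule$/$\imm$ directly---but recovering it inside the paper's induction would need a secondary case analysis on how the $\capsule$/$\imm$ premise was derived. Your map $f$ records this fact once and for all, and its monotonicity makes every case close uniformly. So your argument is not merely a stylistic variant: it repairs a small gap in the paper's own proof of the \rn{t-sub} case, at the modest cost of introducing and checking the auxiliary map.
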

\begin{proof}
The proof is in~\ref{sect:proof-cf}.\qed
\end{proof}
The application of non-structural rules is finalized to the recovery of capsule and immutable properties
as expressed by the following lemma.
\begin{lemma}\label{lemma:typeStruct} 
Let $\deriv:\TypeCheck{\Gamma}{\LentLocked}{\StronglyLocked}{{\e}}{\Type{\mu}{\C}}$. 
\begin{enumerate}
  \item If $\mu=\mutable$, then the last rule applied in $\deriv$ cannot be \rn{t-swap} or \rn{t-unrst} or \rn{t-capsule} or  \rn{t-imm}.
  \item If $\mu=\imm$ or $\mu=\capsule$ and the last rule applied in $\deriv$ is  \rn{t-swap} or \rn{t-unrst}, then
  $\TypeCheck{\Gamma}{\LentLocked'}{\StronglyLocked'}{{\e}}{\Type{\mu}{\C}}$, for some $\LentLocked'$ and 
  $\StronglyLocked'$.
\end{enumerate}
\end{lemma}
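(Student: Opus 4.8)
The plan is to prove both parts by a direct case analysis on the last rule applied in $\deriv$, exploiting the fact that each of the four non-structural rules records precisely how it acts on the type qualifier.

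First I would establish part~1. Since the conclusion of $\deriv$ carries the qualifier $\mutable$, it suffices to check that none of \rn{t-capsule}, \rn{t-imm}, \rn{t-swap}, \rn{t-unrst} can produce a $\mutable$ type in its consequent. For \rn{t-capsule} and \rn{t-imm} this is immediate, since their conclusions are fixed to be $\Type{\capsule}{\C}$ and $\Type{\imm}{\C}$ respectively. For \rn{t-swap}, the definition of $\TPrime$ shows that a premise qualifier $\mutable$ is weakened to $\lent$, while any other premise qualifier is left unchanged; hence the conclusion qualifier is never $\mutable$. For \rn{t-unrst}, the side condition forces the conclusion qualifier $\mu$ to satisfy $\mu\leq\imm$, i.e.\ $\mu\in\{\capsule,\imm\}$, again excluding $\mutable$. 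Therefore, if $\mu=\mutable$, the last rule of $\deriv$ cannot be any of the four listed rules, which is exactly the claim.

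Next I would handle part~2, where $\mu\in\{\imm,\capsule\}$ and the last rule is \rn{t-swap} or \rn{t-unrst}. In each case the required judgment is read off directly from the premise of that rule. If the last rule is \rn{t-unrst}, its premise is already $\TypeCheck{\Gamma}{\LentLocked}{\emptyset}{\e}{\Type{\mu}{\C}}$, so taking $\StronglyLocked'=\emptyset$ and $\LentLocked'=\LentLocked$ suffices. If the last rule is \rn{t-swap}, then since the conclusion qualifier $\mu$ is $\imm$ or $\capsule$ and hence not $\mutable$, the definition of $\TPrime$ gives $\T=\TPrime=\Type{\mu}{\C}$, so the premise is $\TypeCheck{\Gamma}{\LentLocked\,\xs'}{\StronglyLocked}{\e}{\Type{\mu}{\C}}$; here $\LentLocked'=\LentLocked\,\xs'$ and $\StronglyLocked'=\StronglyLocked$ give the desired judgment.

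The argument is essentially mechanical, and the only point requiring care is the behaviour of \rn{t-swap} on the qualifier: one must use that a $\mutable$ premise is the \emph{only} case that gets weakened, so that whenever the conclusion is $\imm$ or $\capsule$ the premise carries exactly the same qualifier. This is precisely what makes part~2 go through, since it lets us keep the conclusion modifier unchanged while merely replacing the lent and restricted components of the context.
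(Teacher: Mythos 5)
Your proof is correct and follows essentially the same route as the paper's: part 1 by checking that none of the four non-structural rules can produce a $\mutable$ qualifier in its conclusion, and part 2 by reading the required judgment off the premise of \rn{t-unrst} or of \rn{t-swap} (using that a non-$\mutable$ conclusion forces $\T=\TPrime$, so the premise carries the same type). The paper dismisses part 2 as ``Immediate''; your write-up simply makes that step explicit.
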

\begin{proof}\
\begin{enumerate}
  \item Rule \rn{t-unrst} is only applicable when the type derived for the expression in the premise of the rule, $\Type{\mu'}{\C}$, is such that $\mu'\leq \imm$. For the rule \rn{t-swap}, when the type derived for the expression in the premise of the rule is $\Type{\mutable}{\C}$, then the the type of the expression in the consequent is  $\Type{\lent}{\C}$. Moreover, the two  {recovery} rules do not derive types with the $\mutable$ modifier.
  \item Immediate.
\end{enumerate}\qed
\end{proof}
From the previous lemma we derive that, if $\TypeCheck{\Gamma}{\LentLocked}{\StronglyLocked}{{\e}}{\Type{\mu}{\C}}$ with $\mu\not\geq\lent$, then without loss of generality we can assume that the last rule applied in the derivation is either a structural rule followed by a \rn{t-sub}, or a  {recovery rule} in case $\mu=\capsule$ or $\mu=\imm$. 
Moreover, rules \rn{t-swap} and \rn{t-unrst} cannot be used to derive the premise of rule \rn{t-capsule}.

 {If we can derive a type for a variable, then such type depends on the type assignment and the lent groups (if the variable is restricted no type can be derived).
If the variable is declared with modifier $\imm$ or $\capsule$, then its type depends only on the type assignment.}
\begin{lemma}\label{lemma:typeVars}
If $\TypeCheck{\Gamma}{\LentLocked}{\StronglyLocked}{{\x}}{\T}$ then $\Gamma(x)\leq\T$. Moreover, if 
$\T=\Type{\mu}{\C}$ with $\mu\leq\imm$, then $\TypeCheck{\Gamma}{\LentLocked'}{\StronglyLocked'}{{\x}}{\T}$ for all 
$\LentLocked'$ and $\StronglyLocked'$ such that  $\WellFormedTypeCtx{\Gamma};{\LentLocked'};\StronglyLocked'$.
\end{lemma}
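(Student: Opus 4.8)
The plan is to prove the inequality $\Gamma(\x)\leq\T$ by induction on the size of the derivation $\deriv$, carried out simultaneously with the auxiliary claim (R): \emph{if $\deriv:\TypeCheck{\Gamma}{\LentLocked}{\StronglyLocked}{\x}{\Type{\mu}{\C}}$ and $\x\in\StronglyLocked$, then $\Gamma(\x)\leq\Type{\imm}{\C}$}, which is what lets me handle immutability recovery. Throughout I use that all contexts occurring in a derivation are well-formed, together with the normalization remark following \refToLemma{typeStruct}: for a type $\Type{\mu}{\C}$ with $\mu\not\geq\lent$ (that is, $\mu\in\{\capsule,\mutable,\imm\}$) we may assume the last rule is either a structural rule followed by \rn{t-sub}, or a recovery rule (when $\mu=\capsule$ or $\mu=\imm$), and that \rn{t-swap}, \rn{t-unrst} are not used to derive the premise of \rn{t-capsule}.

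First the inequality. When $\mu\in\{\lent,\readable\}$, \refToLemma{nonStructural} yields a sub-derivation ending in the only structural rule for a variable, \rn{t-var}, with modifier $\mu'\leq\mu$; since \rn{t-var} always gives $\Gamma(\x)\leq\Type{\mu'}{\C}$, transitivity closes the case. When $\mu\in\{\capsule,\mutable,\imm\}$ I use the normalization. In the structural-plus-\rn{t-sub} case, \rn{t-var} assigns a type $\Type{\mu_0}{\C}$ with $\Gamma(\x)\leq\Type{\mu_0}{\C}\leq\Type{\mu}{\C}$, and we are done. For a final \rn{t-capsule}, the premise types $\x$ as $\mutable$ with \emph{empty} mutable group, so by well-formedness every $\mutable$ variable of $\Gamma$ lies in some lent group; in particular, were $\Gamma(\x)=\Type{\mutable}{\C}$, then $\x\in\dom{\LentLocked}$, and, since the premise cannot use \rn{t-swap}/\rn{t-unrst}, \rn{t-var} would assign $\x$ the type $\lent$, which cannot be subsumed to $\mutable$. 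Hence $\Gamma(\x)=\Type{\capsule}{\C}$, as required.

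The hard case is a final \rn{t-imm}, whose premise is $\TypeCheck{\Gamma}{\LentLocked\ \xs}{\domGeqMut(\Gamma)}{\x}{\Type{\readable}{\C}}$: here the induction hypothesis for the inequality only yields the vacuous $\Gamma(\x)\leq\Type{\readable}{\C}$, so I appeal to (R). If $\Gamma(\x)$ had a qualifier $\geq\mutable$, then $\x\in\domGeqMut(\Gamma)$, and (R) applied to the smaller premise would force $\Gamma(\x)\leq\Type{\imm}{\C}$, contradicting $\Gamma(\x)\geq\mutable$; thus $\Gamma(\x)\in\{\imm,\capsule\}\,\C$ and $\Gamma(\x)\leq\Type{\imm}{\C}$. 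It remains to prove (R) by the same induction. The case \rn{t-var} is impossible since it requires $\x\notin\StronglyLocked$; the cases \rn{t-sub}, \rn{t-swap}, \rn{t-capsule} leave the restricted set unchanged and conclude by the induction hypothesis for (R); for \rn{t-imm}, well-formedness gives $\Gamma(\x)\in\{\mutable,\readable\}\,\C$ from $\x\in\StronglyLocked$, hence $\x\in\domGeqMut(\Gamma)$, so (R) on the premise applies. The only rule that can remove $\x$ from the restricted set is \rn{t-unrst}, whose side condition forces the derived type $\T$ to satisfy $\T\leq\Type{\imm}{\C}$; the induction hypothesis for the inequality then gives $\Gamma(\x)\leq\T\leq\Type{\imm}{\C}$. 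This mutual dependence of the inequality and (R) through the pair \rn{t-imm}/\rn{t-unrst} is precisely why they must be established together, and is the main obstacle.

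Finally, for the ``moreover'' part, suppose $\TypeCheck{\Gamma}{\LentLocked}{\StronglyLocked}{\x}{\Type{\mu}{\C}}$ with $\mu\leq\imm$. By the first part $\Gamma(\x)\leq\Type{\mu}{\C}$, so $\Gamma(\x)\in\{\imm,\capsule\}\,\C$. Let $\Gamma;\LentLocked';\StronglyLocked'$ be any well-formed context. Its well-formedness conditions state that restricted variables are $\mutable$ or $\readable$ in $\Gamma$ and that variables in lent groups are $\mutable$ in $\Gamma$; since $\Gamma(\x)$ is $\imm$ or $\capsule$, we obtain $\x\notin\StronglyLocked'$ and $\x\notin\dom{\LentLocked'}$. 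Hence \rn{t-var} derives $\TypeCheck{\Gamma}{\LentLocked'}{\StronglyLocked'}{\x}{\Gamma(\x)}$, and \rn{t-sub} lifts $\Type{\capsule}{\C}$ to $\Type{\imm}{\C}$ in the remaining subcase $\mu=\imm$, $\Gamma(\x)=\Type{\capsule}{\C}$, so $\TypeCheck{\Gamma}{\LentLocked'}{\StronglyLocked'}{\x}{\Type{\mu}{\C}}$ holds in all cases.
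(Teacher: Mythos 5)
Your proof is correct, and its skeleton---induction on the typing derivation with a case analysis on the last rule applied---is the same as the paper's; the paper simply runs through all six rules directly, where you compress the structural and subsumption cases via \refToLemma{nonStructural} and the normalization remark after \refToLemma{typeStruct}. The genuine difference is your auxiliary claim (R) and the mutual induction it requires. The paper's own proof of the \rn{t-imm} case merely asserts that the premise $\TypeCheck{\Gamma}{\LentLocked\ \xs}{\domGeqMut(\Gamma)}{\x}{\Type{\readable}{\C}}$ forces $\x\notin\domGeqMut(\Gamma)$, with no justification. That step is not immediate: a restricted variable \emph{can} be typed, provided the derivation passes through \rn{t-unrst} (followed by \rn{t-sub} up to $\readable$), and this is possible exactly when $\Gamma(\x)\leq\Type{\imm}{\C}$; your claim (R) is precisely the invariant that rules this out when $\Gamma(\x)$ has qualifier $\geq\mutable$, and your mutual induction is well-founded because each cross-appeal (part 1 inside the \rn{t-unrst} case of (R), and (R) inside the \rn{t-imm} case of part 1) is to a strict sub-derivation. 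So you have not only reproduced the result but made rigorous a step the paper glosses over; what the paper's terser route buys is only brevity. Two minor remarks. First, in the \rn{t-capsule} case you argue at the same level of informality as the paper: saying that a $\mutable$-declared variable in a lent group ``would be typed $\lent$ by \rn{t-var}'' also needs the observation that no deeper application of \rn{t-swap} can recover $\mutable$ for it (it cannot, since \rn{t-swap} re-weakens $\mutable$ to $\lent$ in its conclusion); this matches the paper's own terseness, so it is not a gap relative to it. Second, the emptiness of the mutable group in the premise of \rn{t-capsule} is that rule's side condition ($\domMut{\Gamma}\setminus\LentLocked=\emptyset$ for the premise context), not a well-formedness condition on contexts, so your appeal to ``well-formedness'' there is a slight misattribution that does not affect the argument.
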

\begin{proof}
The proof is in~\ref{sect:proof-cf}.\qed
\end{proof}
The block construct is central to our recovery technique. New variable are defined, and the lent and mutable
groups of the free variables of the block may be modified by the introduction of the newly defined
variables. However, as for the non-structural rules, the equivalence relation between the mutable 
variables induced by the partition determined by the lent and   mutable groups is preserved.

We first define a judgment asserting when a declaration of a variable is well typed and then give a lemma
relating the type judgement for a  block with the type judgements for the expressions
associated to its local variables by the declarations and for its body. The expression associated
with a variable declared with modifier $\lent$ has to be well typed taking as mutable the variable in its
group, whereas the others have the same mutable group of the body of the expression.  This expresses
the fact that variables declared with modifier $\lent$ should not be connected to the result of the block
(its body).
\begin{definition}\label{def:wellTypedDefsNew}
Define $\AuxDecsOK{\Gamma}{\LentLocked}{\MutGroup}{\StronglyLocked}{\Dec{\Type{\mu}{\C}}{\x}{\e}}$ by:
\begin{itemize}
  \item $\AuxTypeCheck{\Gamma}{\LentLocked}{\MutGroup}{\StronglyLocked}{\e}{\Gamma(\x)}$ if $\mu\not=\lent$
  \item $\AuxTypeCheck{\Gamma}{\LentLocked'}{\MutGroup'}{\StronglyLocked}{\e}{\Gamma(\x)}$ if $\mu=\lent$ and $\LentLocked\ \xs=\LentLocked'\ \xs'$ where $\x\in\xs'$.
\end{itemize}
$\AuxDecsOK{\Gamma}{\LentLocked}{\MutGroup}{\StronglyLocked}{\decs}$ if for all $\dec\in\decs$, $\AuxDecsOK{\Gamma}{\LentLocked}{\MutGroup}{\StronglyLocked}{\dec}$.
\end{definition}

\begin{lemma} [Inversion for blocks]\label{lemma:inversionBlock}
If $\AuxTypeCheck{{\Gamma}}{\LentLocked}{\MutGroup}{\StronglyLocked}{\Block{\decs}{\val}}{\Type{\mu}{\C}}$, then for some $\LentLocked'$ and $\xs'$
\begin{enumerate}
\item $\AuxDecsOK{\SubstFun{\Gamma}{\TypeEnv{\decs}}}{\LentLocked'}{\MutGroup'}{\StronglyLocked'}{\decs}$
\item  $\AuxTypeCheck{{\SubstFun{\Gamma}{\TypeEnv{\decs}}}}{{\LentLocked'}}{\MutGroup'}{\StronglyLocked'}{\val}{\Type{\mu'}{\C}}$ 
\item $\LessEq{(\LentLocked\ \xs){\setminus}\dom{\TypeEnv{\decs}}}{\LentLocked'\ \xs'}$ and
\item $\mu\not=\imm$ and $\mu\not=\capsule$ implies $\mu'\leq\mu$.
\end{enumerate}
\end{lemma}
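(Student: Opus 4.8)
The plan is to peel off the non-structural rules, reduce to a single application of \rn{t-block}, and then read the four conclusions off its premises and side conditions.

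First I would apply \refToLemma{nonStructural} to the given derivation $\deriv:\AuxTypeCheck{\Gamma}{\LentLocked}{\MutGroup}{\StronglyLocked}{\Block{\decs}{\val}}{\Type{\mu}{\C}}$. This produces a sub-derivation ending with a structural rule and typing the same term, say $\AuxTypeCheck{\Gamma}{\LentLocked''}{\MutGroup''}{\StronglyLocked''}{\Block{\decs}{\val}}{\Type{\mu''}{\C}}$, with $\LentLocked\ \MutGroup=\LentLocked''\ \MutGroup''$ and, when $\mu\neq\imm$ and $\mu\neq\capsule$, $\mu''\leq\mu$. Since the only structural rule whose conclusion types a block is \rn{t-block}, this sub-derivation ends with \rn{t-block}; inverting it gives the body judgment $\AuxTypeCheck{\SubstFun{\Gamma}{\TypeEnv{\decs}}}{\LentLocked'}{\MutGroup'}{\StronglyLocked'}{\val}{\Type{\mu'}{\C}}$ (I take the witnesses $\LentLocked'$, $\xs'{=}\MutGroup'$, and $\StronglyLocked'=\StronglyLocked''\setminus\dom{\decs}$ to be the body groups of this application), the per-declaration judgments $\AuxTypeCheck{\SubstFun{\Gamma}{\TypeEnv{\decs}}}{\LentLocked_i}{\MutGroup_i}{\StronglyLocked'}{\e_i}{\TypeEnv{\decs}(\x_i)}$ for $\decs=\Dec{\T_1}{\x_1}{\e_1}\ldots\Dec{\T_n}{\x_n}{\e_n}$, together with the side conditions $\LessEq{(\LentLocked''\setminus\dom{\decs})}{\LentLocked'}$, $(\MutGroup''\setminus\dom{\decs})\subseteq\MutGroup'$, $\LentLocked_i\ \MutGroup_i=\LentLocked'\ \MutGroup'$, and $\x_i\in\domMut{\TypeEnv{\decs}}\Rightarrow\x_i\in\MutGroup_i$. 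As the body type in \rn{t-block} is the type of the whole block, $\mu'=\mu''$, so conclusion~2 is exactly the body judgment and conclusion~4 follows from the modifier bound supplied by \refToLemma{nonStructural}.

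Next I would establish conclusion~3. Using $\dom{\decs}=\dom{\TypeEnv{\decs}}$ and $\LentLocked\ \MutGroup=\LentLocked''\ \MutGroup''$, the side conditions $\LessEq{(\LentLocked''\setminus\dom{\decs})}{\LentLocked'}$ and $(\MutGroup''\setminus\dom{\decs})\subseteq\MutGroup'$ state that every variable of $(\LentLocked\ \MutGroup)\setminus\dom{\TypeEnv{\decs}}$ survives in $\LentLocked'\ \MutGroup'$ with its grouping preserved: variables coming from a lent group stay grouped by $\LessEq{}{}$, variables of the mutable group land in $\MutGroup'$, and the two kinds cannot be merged because the original groups are disjoint. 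This is precisely $\LessEq{(\LentLocked\ \xs)\setminus\dom{\TypeEnv{\decs}}}{\LentLocked'\ \xs'}$.

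Finally, conclusion~1 requires turning each per-declaration premise into the judgment demanded by Definition~\ref{def:wellTypedDefsNew}. For a declaration with qualifier $\lent$ we have $\x_i\in\domMut{\TypeEnv{\decs}}$, so the side condition forces $\x_i\in\MutGroup_i$, and the premise is exactly the swapped judgment required by the $\lent$ clause (with $\LentLocked'\ \MutGroup'=\LentLocked_i\ \MutGroup_i$ and $\x_i\in\MutGroup_i$). For a non-$\lent$ declaration the definition asks for a judgment relative to the body groups $\LentLocked'$, $\MutGroup'$ themselves, whereas the premise uses the permuted groups $\LentLocked_i$, $\MutGroup_i$: when $\x_i\in\MutGroup'$ the two partitions coincide and nothing is needed, and when $\TypeEnv{\decs}(\x_i)$ is not mutable (i.e.\ the qualifier is $\imm$, $\capsule$, or $\readable$) I would realign the groups to $\LentLocked'$, $\MutGroup'$ by a type-preserving application of \rn{t-swap}, which is legitimate since the moved group is disjoint from the restricted variables by well-formedness and a non-mutable type is left unchanged.

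The main obstacle is exactly this last reconciliation between the per-declaration groups produced by \rn{t-block} and the fixed body groups required by the non-$\lent$ clause of Definition~\ref{def:wellTypedDefsNew}. The delicate case is a $\mutable$-declared local placed in a lent group of the body whose initializer is genuinely $\mutable$, since swapping it back to the body arrangement would weaken its type to $\lent$. I expect to discharge it by choosing the witnesses $\LentLocked'$, $\MutGroup'$ among the partitions reachable by swaps from the block's body groups so that every such local lies in $\MutGroup'$, keeping conclusion~2 intact while matching the clause; tracking how this choice interacts with the modifier bound of conclusion~4 is the part that needs the most care.
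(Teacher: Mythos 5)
Your skeleton is exactly the paper's proof: \refToLemma{nonStructural} yields a sub-derivation ending in \rn{t-block} with the same partition and the modifier bound, and conclusions 2--4 are then read off its premises and side conditions just as you do (your explicit argument for conclusion 3, combining $\LessEq{(\LentLocked''{\setminus}\dom{\decs})}{\LentLocked'}$ with $(\MutGroup''{\setminus}\dom{\decs})\subseteq\MutGroup'$ and disjointness of groups, is what the paper leaves implicit). The divergence is conclusion 1. The paper performs \emph{no} reconciliation there: it takes the per-declaration premises of \rn{t-block} to literally \emph{be} the judgments of Definition~\ref{def:wellTypedDefsNew} --- for a $\lent$-declared local the swapped premise is exactly the $\lent$ clause, and for every other local it reads the rule as instantiated with $\MutGroup_i=\MutGroup'$, i.e.\ with non-$\lent$-declared locals lying in the body's mutable group, so that the premise is the non-$\lent$ clause verbatim.

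The case you single out as the main obstacle --- a $\mutable$-declared local placed in a lent group, whose initializer genuinely needs type $\Type{\mutable}{\C_i}$ --- is a real gap in your proposal, and your repair cannot close it. Rule \rn{t-swap} only exchanges which group plays the mutable role; it never moves a variable between groups, so no ``partition reachable by swaps'' puts two such locals lying in \emph{distinct} lent groups into $\MutGroup'$ simultaneously, and even for a single one, realigning the body judgment to that local's group goes through \rn{t-swap} and weakens $\mu'$ from $\mutable$ to $\lent$, destroying conclusion 4. (Your claim that the swapped group avoids $\StronglyLocked$ ``by well-formedness'' is also backwards: well-formedness places restricted $\mutable$ variables \emph{inside} lent groups, so the side condition $\xs\cap\ys=\emptyset$ can genuinely block such a swap.) Worse, no strategy can discharge this case, because on it the statement itself fails. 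Let class $\C$ have the single field $\Type{\mutable}{\D}\ \f$, let $\Gamma$ contain $\x$ of type $\Type{\mutable}{\C}$ forming a singleton lent group and $\z$ of type $\Type{\mutable}{\D}$ forming the mutable group, and consider $\Block{\Dec{\Type{\mutable}{\D}}{\x_1}{\FieldAssign{\x}{\f}{\ConstrCall{\D}{}}}}{\ConstrCall{\C}{\z}}$, typed $\Type{\mutable}{\C}$ by \rn{t-block} with $\x_1$ added to the group of $\x$ --- nothing in the rule's side conditions forbids this. Conclusion 1 (non-$\lent$ clause) then forces $\x$ into the witness mutable group: by \refToLemma{typeStruct} the non-structural rules cannot produce a $\mutable$ type, so any derivation of the field assignment at type $\Type{\mutable}{\D}$ (or $\Type{\capsule}{\D}$, for subsumption) must contain \rn{t-field-assign} applied with $\x$ in the current mutable group, and the only rule restoring $\x$'s group to lent status afterwards, \rn{t-swap}, weakens the type to $\lent$ irrecoverably. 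Symmetrically, conclusions 2 and 4 force $\z$ into the witness mutable group, while conclusion 3 requires $\x$ and $\z$ to stay in different groups: contradiction. So the lemma is provable only under the reading the paper's proof silently adopts --- \rn{t-block} places non-$\lent$-declared locals in $\MutGroup'$ --- under which your delicate case is vacuous and conclusion 1 is immediate; the correct fix is to make that reading explicit, not to add swapping machinery.
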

\begin{proof}
The proof is in~\ref{sect:proof-cf}.\qed
\end{proof}
The two following lemmas characterize the shape of $\mutable$ right-values. 
For constructor values we also have that if the mutable variables are restricted then all fields
must have immutable types.
\begin{lemma}[Constructor value]\label{lemma:constrMut}
Let $\fields{\C}{=}\Field{\Type{\mu_1}{\C_1}}{\f_1}\ldots\Field{\Type{\mu_n}{\C_n}}{\f_n}$.
\begin{enumerate}
  \item If $\AuxTypeCheck{\Gamma}{\LentLocked}{\MutGroup}{\StronglyLocked}{\ConstrCall{\C}{\z_1,\ldots,\z_n}}{\Type{\mutable}{\C}}$, then: for all $i\in1..n$, if $\mu_i=\mutable$, then $\z_i\in\xs$, otherwise $\AuxTypeCheck{\Gamma}{\LentLocked}{\MutGroup}{\StronglyLocked}{\z_i}{\Type{\mu}{\C_i}}$ with $\mu\leq\imm$.
\item If $\AuxTypeCheck{\Gamma}{\LentLocked}{\emptyset}{\domGeqMut(\Gamma) }{\ConstrCall{\C}{\z_1,\ldots,\z_n}}{\Type{\readable}{\C}}$, then: for all $i\in1..n$ we have that $\AuxTypeCheck{\Gamma}{\LentLocked}{\MutGroup}{\StronglyLocked}{\z_i}{\Type{\mu}{\C_i}}$ with $\mu\leq\imm$.
\end{enumerate}
\end{lemma}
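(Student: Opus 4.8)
The plan is to handle both parts by the same two moves: first reduce the given derivation to one ending in the only structural rule for a constructor, namely \rn{t-new} (using \refToLemma{nonStructural} and the remark following \refToLemma{typeStruct}), and then invert \rn{t-new} and read off the typing of each argument $\z_i$ with \refToLemma{typeVars}. Writing $\Gamma;\LentLocked';\StronglyLocked'$ for the context of this structural judgement, \rn{t-new} types each $\z_i$ at the declared field type $\Type{\mu_i}{\C_i}$. When $\mu_i\not=\mutable$, that is $\T_i=\Type{\imm}{\C_i}$, we already have $\z_i$ typed at a type $\leq\imm$, so by the second part of \refToLemma{typeVars} the same variable can be retyped at a type $\Type{\mu}{\C_i}$ with $\mu\leq\imm$ in the original well-formed context; this settles the \emph{otherwise} case of part~1 and the immutable fields of part~2 at once. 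The real work is confined to the mutable fields, where $\z_i$ must be typed $\Type{\mutable}{\C_i}$, so \rn{t-var} forces $\Gamma(\z_i)\leq\Type{\mutable}{\C_i}$, $\z_i\notin\LentLocked'$ and $\z_i\notin\StronglyLocked'$.

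For part~1 the derived type is $\mutable$, so by the remark following \refToLemma{typeStruct} I would take the last rule to be \rn{t-new} followed by a (trivial) \rn{t-sub} to $\Type{\mutable}{\C}$, which does not alter the context; in particular \rn{t-new} is applied with the original mutable group $\xs$. Since the field is mutable, $\z_i\notin\LentLocked$, and in the capsule-free setting (the lemma characterises right-values of a \emph{well-formed store}, in which no $\capsule$ reference occurs, so $\Gamma(\z_i)\leq\mutable$ is in fact $\Gamma(\z_i)=\Type{\mutable}{\C_i}$) this yields $\z_i\in\domMut{\Gamma}{\setminus}\LentLocked=\xs$, as required.

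For part~2 the derived type is $\readable$, and by \refToLemma{nonStructural} the structural \rn{t-new} gives type $\mutable$, with lent and mutable groups forming a permutation of $\LentLocked$ (the original mutable group being $\emptyset$). For a mutable field, suppose toward a contradiction that $\Gamma(\z_i)$ is mutable; then $\z_i\in\domGeqMut(\Gamma)=\StronglyLocked$ is restricted at the root, whereas inversion of \rn{t-new} gave $\z_i\notin\StronglyLocked'$. I would then track the restricted set along the chain of non-structural rules joining \rn{t-new} to the root. Going towards the root, only \rn{t-unrst} can enlarge the restricted set, and it demands that the type be $\leq\imm$ at that point; but starting from the $\mutable$ of \rn{t-new}, lowering the type to $\imm$ or $\capsule$ requires \rn{t-imm} or \rn{t-capsule}, whose premises force an empty mutable group while keeping the type $\readable$ or $\mutable$ respectively. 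Emptying the mutable group when $\z_i$ lies in it can only be achieved by swapping $\z_i$'s group out, which weakens the type to $\lent$ (blocking \rn{t-capsule}) and leaves the restricted set without $\z_i$ (blocking the restricted-set side condition of \rn{t-imm}). Hence $\z_i$ can never be re-restricted once it has been used as a mutable argument, contradicting $\z_i\in\StronglyLocked$ at the root. Therefore $\Gamma(\z_i)\leq\imm$ (so in fact $\Gamma(\z_i)=\Type{\capsule}{\C_i}$), and \refToLemma{typeVars} again lets us type $\z_i$ at a type $\leq\imm$ in the original context.

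The step I expect to be the main obstacle is exactly this bookkeeping in part~2: \refToLemma{nonStructural} controls how swapping permutes the lent and mutable groups but says nothing about the restricted set, so I anticipate needing a small auxiliary invariant --- essentially that, along any chain of non-structural rules whose conclusion carries a type $\geq\lent$, the restricted set can only shrink towards the leaves --- combined with the observation that \rn{t-imm} and \rn{t-capsule} empty the mutable group, which is what simultaneously disposes of the case in which swapping has moved a former lent group into mutable position. Part~1 is comparatively routine, the only delicate point being the $\capsule$-freeness of stores noted above.
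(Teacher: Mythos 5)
Your proof is correct, and for part~1 it is essentially the paper's own argument (reduce to \rn{t-new} via \refToLemma{typeStruct}, invert, split on the field qualifier); your explicit patch for the case $\Gamma(\z_i)=\Type{\capsule}{\C_i}$ addresses a real corner that the paper's ``$\z_i\not\in\LentLocked$ and therefore $\z_i\in\xs$'' silently skips (as literally stated, part~1 fails when a $\mutable$ field is filled by a $\capsule$-typed variable), at the price of reading a capsule-freeness hypothesis into the statement. For part~2, however, you take a genuinely different route. The paper asserts that the sub-derivation ending in \rn{t-new} is the premise of either \rn{t-sub} or \rn{t-swap}, so that the restricted set at \rn{t-new} is still $\domGeqMut(\Gamma)$, and concludes that a restricted $\Gamma$-mutable argument cannot be typed $\mutable$ there (\rn{t-var} is blocked, and \rn{t-unrst} cannot yield a $\mutable$ type). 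You argue the dual way: inverting \rn{t-new} first, a $\Gamma$-mutable $\z_i$ would have to be unrestricted and in the mutable group at \rn{t-new}, and then no chain of non-structural rules can re-restrict it on the way down to the root, since only \rn{t-unrst} adds $\z_i$ to the restricted set, \rn{t-unrst} needs a type $\leq\imm$, and both type-lowering rules are blocked --- \rn{t-capsule} because keeping the type $\mutable$ forbids the swap needed to empty a mutable group containing $\z_i$, and \rn{t-imm} because its premise demands the full restricted set $\domGeqMut(\Gamma)$, which $\z_i$ has not yet entered. What your version buys is exhaustiveness: the paper's dichotomy misses derivations in which the mutable group at \rn{t-new} is empty and \rn{t-capsule} sits directly below \rn{t-new}, followed by \rn{t-unrst}; there the restricted set at \rn{t-new} is $\emptyset$ rather than $\domGeqMut(\Gamma)$, so the paper's case analysis does not literally cover that derivation, whereas your chain-tracking argument does (and still yields the conclusion, since the mutable group is empty). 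One caveat: the auxiliary invariant you sketch at the end (``the restricted set can only shrink towards the leaves'') is stated too strongly --- going up through \rn{t-imm} the restricted set grows to $\domGeqMut(\Gamma)$ --- but the minimal-counterexample reasoning in your main paragraph never needs it in that form, so this is a matter of packaging rather than a gap.
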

\begin{proof}
The proof is in~\ref{sect:proof-cf}.\qed
\end{proof}
In case a right-value is a  block, the fact that all the declared variables are connected to the
body of the block implies that the modifiers of all the variables must be either $\mutable$ or $\leq\imm$.
\begin{lemma}[Block value]\label{lemma:blockMut}
If $\AuxTypeCheck{\Gamma}{\LentLocked}{\MutGroup}{\StronglyLocked}{\Block{\dvs}{{\cOrx}}}{\Type{\mutable}{\C}}$, where
$\dvs=\Dec{\Type{\mu_1}{\C_1}}{\z_1}{\stVal_1}\cdots\Dec{\Type{\mu_n}{\C_n}}{\z_n}{\stVal_n}$, then
  \begin{enumerate}
  \item $\AuxTypeCheck{\Gamma[\TypeEnv{\dvs}]}{{\LentLocked'}}{\xs'}{\StronglyLocked}{{\cOrx}}{\Type{\mutable}{\C}}$,
  \item $\AuxTypeCheck{\Gamma[\TypeEnv{\dvs}]}{{\LentLocked'}}{\xs'}{\StronglyLocked}{\stVal_i}{\Type{\mutable}{\C_i}}$ 
  $\Range{i}{1}{n}$ 
\item for all $\z\in\FV{\Block{\dvs}{\cOrx}}$, if $\Gamma(\z)=\Type{\mutable}{\D}$, then $\z\in\xs$, otherwise 
$\AuxTypeCheck{\Gamma}{\LentLocked}{\MutGroup}{\StronglyLocked}{z}{\Type{\mu}{\C}}$ with $\mu\leq\imm$.
\end{enumerate}
\end{lemma}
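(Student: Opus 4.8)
The plan is to peel off the non-structural rules, reduce to one application of the structural block rule, and then read off the three conclusions. First I would invoke \refToLemma{inversionBlock} on the hypothesis $\AuxTypeCheck{\Gamma}{\LentLocked}{\MutGroup}{\StronglyLocked}{\Block{\dvs}{\cOrx}}{\Type{\mutable}{\C}}$. Since $\mutable\neq\imm$ and $\mutable\neq\capsule$, this yields lent groups $\LentLocked'$ and a mutable group $\xs'$ (in the role of $\MutGroup'$) such that $\AuxDecsOK{\SubstFun{\Gamma}{\TypeEnv{\dvs}}}{\LentLocked'}{\xs'}{\StronglyLocked'}{\dvs}$, the body is typed $\AuxTypeCheck{\SubstFun{\Gamma}{\TypeEnv{\dvs}}}{\LentLocked'}{\xs'}{\StronglyLocked'}{\cOrx}{\Type{\mu'}{\C}}$ with $\mu'\leq\mutable$, and the group relation $\LessEq{(\LentLocked\ \xs)\setminus\dom{\TypeEnv{\dvs}}}{\LentLocked'\ \xs'}$ holds. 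Here $\StronglyLocked'=\StronglyLocked\setminus\dom{\dvs}=\StronglyLocked$, because the locally declared variables $\dom{\dvs}$ are fresh with respect to $\Gamma$ (up to $\alpha$-conversion), so $\StronglyLocked\cap\dom{\dvs}=\emptyset$; this is what lets me write $\StronglyLocked$ rather than $\StronglyLocked'$ in the conclusions.

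Given this setup, conclusion~(1) is almost immediate: the body $\cOrx$ is typed $\Type{\mu'}{\C}$ with $\mu'\in\{\capsule,\mutable\}$, and if $\mu'=\capsule$ I weaken it to $\mutable$ with one application of \rn{t-sub} (the side annotation $\xs'=\domMut{\SubstFun{\Gamma}{\TypeEnv{\dvs}}}\setminus\LentLocked'$ is fixed by the context, so \rn{t-sub} leaves it untouched). For conclusion~(2), unfolding Definition~\ref{def:wellTypedDefsNew} on $\AuxDecsOK{\dots}{\dvs}$ gives, for each declaration $\Dec{\Type{\mu_i}{\C_i}}{\z_i}{\stVal_i}$, a derivation of $\stVal_i$ against the type $\SubstFun{\Gamma}{\TypeEnv{\dvs}}(\z_i)=\TypeEnv{\dvs}(\z_i)$, in the groups $\LentLocked'\ \xs'$ (after the $\lent$-to-$\mutable$ swap recorded in the definition). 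What remains is to show that this declared type is in fact $\mutable$ for every $i$, i.e.\ that no entry of $\dvs$ carries an $\imm$, $\lent$ or $\readable$ qualifier.

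This last point is the heart of the argument and I expect it to be the main obstacle. Here I would use that $\dvs$ is a well-formed store ($\WFdv{\dvs}$) and that $\Block{\dvs}{\cOrx}$ is a well-formed right-value, so by definition every $\z_i$ is transitively connected to the free variables of the body, $\dvs=\Reduct{\dvs}{\FV{\cOrx}}$. I would then argue by induction on this connectivity, starting from the $\mutable$ body and propagating mutability forward: applying \refToLemma{constrMut} (and, for a nested block right-value, \refToLemma{blockMut} as induction hypothesis) to a $\mutable$ right-value $\stVal_j$ shows that a \emph{local} declaration reached through a $\mutable$ field is forced into the mutable group $\xs'$, hence is itself $\mutable$, whereas references reached through $\imm$ fields are typed $\leq\imm$. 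The delicate step is reconciling this with the uniform $\mutable$ claim of conclusion~(2): one must use the two-level shape of well-formed stores — immutable sub-objects are never nested inside the $\mutable$ portion of the store but are kept as separate entries referenced from outside — so that the immutable material surfaces as \emph{free} variables of the block, governed by conclusion~(3), rather than as entries of $\dvs$.

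Finally, conclusion~(3) is the analogue of \refToLemma{constrMut}(1) lifted through the block. For each $\z\in\FV{\Block{\dvs}{\cOrx}}$ I would trace, along the same connectivity chains, how $\z$ is first reached from the $\mutable$ body: if through a $\mutable$ field, then \refToLemma{constrMut} places $\z$ in the internal mutable group $\xs'$, and the relation $\LessEq{(\LentLocked\ \xs)\setminus\dom{\TypeEnv{\dvs}}}{\LentLocked'\ \xs'}$ from \refToLemma{inversionBlock} transports this back to $\z\in\xs$ in the original context, since external variables cannot migrate between groups; otherwise $\z$ is reached through an $\imm$ field and is typed $\leq\imm$, which by \refToLemma{typeVars} remains a valid typing in the original context $\Gamma;\LentLocked;\StronglyLocked$, as required. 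The two genuinely fiddly parts are the bookkeeping that the internal group $\xs'$ restricts to the external group $\xs$ on $\dom{\Gamma}$, and the well-formedness argument ruling out immutable entries inside the $\mutable$ store in conclusion~(2).
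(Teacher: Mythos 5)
Your skeleton is the paper's: invoke \refToLemma{inversionBlock}, observe $\StronglyLocked'=\StronglyLocked$ up to $\alpha$-conversion, dispose of $\mu'=\capsule$ (your one application of \rn{t-sub} works just as well as the paper's observation that the $\capsule$ typing must come from \rn{t-capsule}, whose premise is already a $\mutable$ typing), and then run an induction on the connectedness relation $\connected{\dvs}{\FV{\cOrx}}{\cdot}$ interleaved with structural induction on right-values via \refToLemma{constrMut}, finally transporting membership in the inner mutable group $\xs'$ back to $\xs$ through the group-refinement conditions. Conclusions (1) and (3) go through essentially as in the paper's proof.

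The genuine gap is your treatment of conclusion (2). You reduce it to the claim that no entry of $\dvs$ carries an $\imm$, $\lent$ or $\readable$ qualifier, justified by the ``two-level shape'' of well-formed stores, so that immutable material can only occur as free variables of the block. That claim is false: the two-level restriction in \refToFigure{wellformed} only forbids \emph{block} right-values under non-$\imm$ declarations; it does not prevent $\imm$-declared entries from sitting in the same store as $\mutable$ ones (the paper's own sample store, given right after \refToFigure{wellformed}, mixes $\mutable$ and $\imm$ declarations at the same level). Concretely, with $\fields{\C}=\Field{\Type{\imm}{\D}}{\f}$, the right-value
\[
\Block{\Dec{\Type{\imm}{\D}}{\y}{\ConstrCall{\D}{}}\ \Dec{\Type{\mutable}{\C}}{\x}{\ConstrCall{\C}{\y}}}{\x}
\]
is a well-formed store, is typable $\Type{\mutable}{\C}$, and contains an $\imm$ entry. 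Worse, the unrestricted reading of (2) that you are aiming at is not provable at all: the evaluated declaration $\Dec{\Type{\imm}{\D}}{\z}{\Block{\Dec{\Type{\readable}{\D}}{\x}{\ConstrCall{\D}{}}}{\x}}$ is legal (its local store satisfies $\allMut{\cdot}$ because $\readable\geq\mutable$), yet its right-value admits only $\readable$ and $\imm$ typings, never $\Type{\mutable}{\D}$. What the paper actually proves is the restricted invariant: every connected $\z_j$ with $\SubstFun{\Gamma}{\TypeEnv{\dvs}}(\z_j)=\Type{\mutable}{\C_j}$ lies in $\xs'$, and its $\stVal_j$ is typed $\Type{\mutable}{\C_j}$ with mutable group $\xs'$ (extracted from Definition~\ref{def:wellTypedDefsNew}), while \emph{every other} variable, whether free or locally declared, is typed $\leq\imm$. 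Immutable entries are thus absorbed into the ``$\leq\imm$'' branch rather than argued out of existence, and conclusion (2) is established, and later used (e.g.\ in the Canonical Forms theorem), only in this restricted form. To repair your proof, drop the structural claim about $\dvs$ and instead make the mutable-declared/immutable-typed dichotomy itself the invariant of your connectivity induction.
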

\begin{proof}
The proof is in~\ref{sect:proof-cf}.\qed
\end{proof}
As usual typing depends only on the free variable of expressions, as the following weakening lemma states.
\begin{lemma}[Weakening]\label{lemma:weakening}
Let $\dom{\Gamma'}\cap\FV{\e}=\emptyset$ and $\WellFormedTypeCtx{\SubstFun{\Gamma}{\Gamma'}};{\LentLocked};\StronglyLocked$. \\
$\TypeCheck{\SubstFun{\Gamma}{\Gamma'}}{\LentLocked}{\StronglyLocked}{\e}{\T}$ if and
only if  $\TypeCheck{{\Gamma}}{{\LentLocked}{\setminus}{{\dom{\Gamma'}}}}{{\StronglyLocked}{\setminus}{{\dom{\Gamma'}}}}{\e}{\T}$.
\end{lemma}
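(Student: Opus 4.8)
The plan is to argue by induction on the typing derivation, establishing both implications at once: the reverse direction is the symmetric statement in which we \emph{add} the bindings $\Gamma'$ (not free in $\e$) and correspondingly enlarge the lent groups and restricted set. The guiding observation is that every typing rule inspects the context $\Gamma;\LentLocked;\StronglyLocked$ only \emph{at the free variables} of the (sub)expression it types: rule \rn{t-var} reads $\Gamma(\x)$ and tests $\x\in\LentLocked$ and $\x\in\StronglyLocked$ for the single free variable $\x$, while the structural rules hand the context unchanged to subterms whose free variables are contained in those of the conclusion. Since $\dom{\Gamma'}\cap\FV{\e}=\emptyset$, for every $\x\in\FV{\e}$ we have $\SubstFun{\Gamma}{\Gamma'}(\x)=\Gamma(\x)$, and moreover $\x\in\LentLocked$ iff $\x\in\LentLocked\setminus\dom{\Gamma'}$, and $\x\notin\StronglyLocked$ iff $\x\notin\StronglyLocked\setminus\dom{\Gamma'}$. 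Hence \rn{t-var} and all the structural cases go through by a direct appeal to the induction hypothesis, re-applying the same rule over the reduced context; the non-structural rules \rn{t-unrst} and \rn{t-sub} leave the context (in the relevant aspects) untouched and are immediate.

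First I would record the preservation of well-formedness, namely that $\WellFormedTypeCtx{\SubstFun{\Gamma}{\Gamma'}};{\LentLocked};\StronglyLocked$ entails $\WellFormedTypeCtx{\Gamma};{\LentLocked\setminus\dom{\Gamma'}};{\StronglyLocked\setminus\dom{\Gamma'}}$: the four conditions are inherited under intersection with the complement of $\dom{\Gamma'}$ (no variable becomes $\lent$, every variable surviving in a lent group or in the restricted set is still $\mutable$/$\readable$ in $\Gamma$, and coherency between $\LentLocked$ and $\StronglyLocked$ is preserved). Here I would assume, as is standard for weakening, that $\dom{\Gamma}\cap\dom{\Gamma'}=\emptyset$ (overridden bindings would not be free in $\e$ in any case). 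This keeps the implicit side condition $\MutGroup=\domMut{\cdot}\setminus\LentLocked$ meaningful on both sides.

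The delicate cases are the rules that manipulate the mutable group, namely \rn{t-capsule}, \rn{t-imm}, \rn{t-swap}, and \rn{t-block}, since $\MutGroup=\domMut{\cdot}\setminus\LentLocked$ is computed from the \emph{whole} context and thus silently includes the mutable variables of $\Gamma'$, none of which occur in $\e$. For these I would verify that removing $\dom{\Gamma'}$ commutes with the group operation. Writing $\Gamma^{+}=\SubstFun{\Gamma}{\Gamma'}$, the core identity is $(\domMut{\Gamma^{+}}\setminus\LentLocked)\setminus\dom{\Gamma'}=\domMut{\Gamma}\setminus(\LentLocked\setminus\dom{\Gamma'})$, which holds because $\domMut{\Gamma'}\subseteq\dom{\Gamma'}$ and $\domMut{\Gamma}\cap\dom{\Gamma'}=\emptyset$. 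Consequently, the lent groups $\LentLocked\ \xs$ in the premise of \rn{t-capsule}/\rn{t-imm} reduce exactly to $(\LentLocked\setminus\dom{\Gamma'})\ \xs'$ with empty new mutable group on both sides, and the same computation shows the swap of \rn{t-swap} together with its side condition $\xs\cap\StronglyLocked=\emptyset$ is preserved; the induction hypothesis then applies to the premise and the rule re-applies. For \rn{t-block} I would first use $\alpha$-conversion (rule \rn{alpha}, invoking Proposition~\ref{lemma:congruenceTypes} to keep types) so that $\dom{\decs}\cap\dom{\Gamma'}=\emptyset$; then $\SubstFun{\SubstFun{\Gamma}{\TypeEnv{\decs}}}{\Gamma'}=\SubstFun{\SubstFun{\Gamma}{\Gamma'}}{\TypeEnv{\decs}}$, so $\Gamma'$ pushes through to the sub-derivations of each $\e_i$ and of the body, and I would check that the side conditions ($\LessEq{(\LentLocked\setminus\dom{\decs})}{\LentLocked'}$, the inclusion on mutable groups, and $\LentLocked_i\ \MutGroup_i=\LentLocked'\ \MutGroup'$) are stable under $\setminus\dom{\Gamma'}$, again using disjointness of $\dom{\Gamma'}$ from $\dom{\decs}$.

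I expect the main obstacle to be precisely this group bookkeeping: keeping straight, across \rn{t-capsule}, \rn{t-imm}, \rn{t-swap}, and \rn{t-block}, that the mutable variables contributed by $\Gamma'$—which inflate the mutable group and can be swept into lent groups by the non-structural rules—remain confined to $\dom{\Gamma'}$ and therefore invisible to $\e$, so that every such movement commutes with $\setminus\dom{\Gamma'}$. Once the handful of set identities above are established, the remaining cases follow directly from the induction hypothesis, and the reverse implication is obtained by running the same argument in the direction of adding $\Gamma'$.
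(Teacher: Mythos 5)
Your proposal is correct and takes essentially the same approach as the paper, whose entire proof of this lemma is the one-line ``By induction on derivations.'' The details you supply---that every rule consults the context only at the free variables of the expression, plus the set identities showing that removal of $\dom{\Gamma'}$ commutes with the implicit mutable-group computation in \rn{t-capsule}, \rn{t-imm}, \rn{t-swap}, and \rn{t-block}---are precisely the bookkeeping the paper leaves implicit.
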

\begin{proof}
By induction on derivations. \qed
\end{proof}
The Canonical Forms theorem describes constraints on the free variables and the extracted type of well-typed right-values. In particular, $\capsule$ and $\imm$ right-values can contain only $\capsule$ or $\imm$ references, and 
$\mutable$ right-values cannot contain $\lent$ or $\readable$ references. Moreover, the type extracted from $\capsule$ and $\mutable$ right-values is necessarily $\mutable$.  
\begin{theorem}[Canonical Forms]\label{theo:canonicalForm}
If $\AuxTypeCheck{\Gamma}{\LentLocked}{\MutGroup}{\StronglyLocked}{\stVal}{\Type{\mu}{\_}}$, and $\y\in\FV{\stVal}$, then:
 \begin{enumerate}
 \item if $\mu=\capsule$, then $\AuxTypeCheck{\Gamma}{\LentLocked}{\MutGroup}{\StronglyLocked}{\y}{\Type{\imm}{\_}}$, and $\typeOf{\stVal}=\Type{\mutable}{\_}$ 
  \item if $\mu=\mutable$, then $\AuxTypeCheck{\Gamma}{\LentLocked}{\MutGroup}{\StronglyLocked}{\y}{\Type{\mu'}{\_}}$, with $\mu'\not\geq\lent$, and $\typeOf{\stVal}=\Type{\mutable}{\_}$
  \item if $\mu=\imm$, then $\AuxTypeCheck{\Gamma}{\LentLocked}{\MutGroup}{\StronglyLocked}{\y}{\Type{\imm}{\_}}$
   \item if $\mu=\lent$, then $\AuxTypeCheck{\Gamma}{\LentLocked}{\MutGroup}{\StronglyLocked}{\y}{\Type{\readable}{\_}}$, and $\typeOf{\stVal}\leq\Type{\lent}{\_}$ 
   \item if $\mu=\readable$, then $\AuxTypeCheck{\Gamma}{\LentLocked}{\MutGroup}{\StronglyLocked}{\y}{\Type{\readable}{\_}}$
 \end{enumerate}
\end{theorem}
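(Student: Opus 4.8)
The plan is to prove \refToTheorem{canonicalForm} by structural induction on the right-value $\stVal$, which by the grammar is either a constructor call $\ConstrCall{\C}{\z_1,\ldots,\z_n}$ or a block $\Block{\dvs}{\cOrx}$, with an inner case analysis on the modifier $\mu$. The guiding observation is that the five clauses are not independent: \rn{t-capsule} reduces the $\capsule$ case to a $\mutable$ premise in which the mutable group has been emptied, \rn{t-imm} reduces the $\imm$ case to a $\readable$ premise in which every $\geq\mutable$ variable is restricted, and \rn{t-sub} together with \rn{t-swap} let the $\lent$ and $\readable$ clauses be read off from the underlying structural ($\mutable$ or $\readable$) typing. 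So I would first settle the $\mutable$ and $\readable$ cases and then derive the remaining ones from them.

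First I would normalise the derivation. Using \refToLemma{typeStruct} and \refToLemma{nonStructural}, I can assume that the given derivation ends either with the structural rule for the construct at hand (\rn{t-new} or \rn{t-block}) possibly followed by \rn{t-sub}, or, when $\mu\in\{\capsule,\imm\}$, with the corresponding recovery rule; moreover, in the structural sub-derivation the lent and mutable groups are only a \emph{permutation} $\LentLocked'\,\MutGroup'$ of $\LentLocked\,\MutGroup$, so the partition they induce on $\domMut{\Gamma}$ is unchanged. For a constructor value the classification of the arguments is then exactly \refToLemma{constrMut}: part~(1) for the $\mutable$ case (and hence for $\capsule$, reading it with empty mutable group), and part~(2) for the $\readable$ premise of \rn{t-imm}. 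For a block value I would combine \refToLemma{blockMut}, whose clause~3 already gives, in the $\mutable$ case, the required classification of every free variable of $\Block{\dvs}{\cOrx}$, with inversion for blocks, \refToLemma{inversionBlock}, and the induction hypothesis applied to the body and to the evaluated declarations.

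The decisive step in each recovered case is transporting the typing of a free variable from the permuted or restricted sub-context back to the original context $\Gamma;\LentLocked;\StronglyLocked$ demanded by the statement. Here I would use \refToLemma{typeVars}: once a free variable has been shown to have a type $\Type{\mu'}{\_}$ with $\mu'\leq\imm$ in the sub-derivation, the same judgement holds in \emph{every} well-formed context, in particular the original one. Concretely, in the $\capsule$ case the empty mutable group forces every free variable to be $\imm$ or $\capsule$, since no free variable can lie in the (empty) mutable group, and $\capsule\leq\imm$ gives the claimed $\imm$ typing; in the $\imm$ case the restriction $\StronglyLocked=\domGeqMut(\Gamma)$ means that the only free variables usable at all are the $\imm$ (or $\capsule$) ones, again yielding $\imm$. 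The clauses $\typeOf{\stVal}=\Type{\mutable}{\_}$ follow because the structural type is $\mutable$ and, for a block whose body is a reference, well-formedness of the store rules out a $\capsule$ declaration, so by \refToLemma{blockMut}(1) the body variable is declared $\mutable$.

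The main obstacle I anticipate is the block case for the non-$\mutable$ modifiers. The supporting lemmas \refToLemma{constrMut} and \refToLemma{blockMut} are tailored to $\mutable$ right-values, and the non-structural rules \rn{t-swap} and \rn{t-unrst} alter both the lent groups and the restricted set between the top context and the leaves where free variables are actually introduced by \rn{t-var}; a priori a $\mutable$, $\lent$ or $\readable$ free variable could be ``hidden'' inside a subexpression typed $\imm$ or $\capsule$ via \rn{t-unrst}, which would break the $\imm$ and $\readable$ clauses. I would discharge this precisely by the induction: such a subexpression is itself a smaller $\imm$ or $\capsule$ right-value, so by the induction hypothesis all of its free variables are $\imm$, which prevents any $\mutable$, $\lent$, or $\readable$ variable from occurring there. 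Together with the group-permutation invariance of \refToLemma{nonStructural} and the context transfer provided by \refToLemma{typeVars}, this closes the argument.
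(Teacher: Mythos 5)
Your proposal is correct and follows essentially the same route as the paper's proof: structural induction on $\stVal$ with case analysis on $\mu$, normalisation of the derivation via \refToLemma{typeStruct}/\refToLemma{nonStructural}, \refToLemma{constrMut} and \refToLemma{blockMut}/\refToLemma{inversionBlock} for the two syntactic forms, and \refToLemma{typeVars} plus \refToLemma{weakening} to transport variable typings back to the original context. In particular, your anticipated obstacle --- a $\geq\mutable$ variable hidden under \rn{t-unrst} inside an $\imm$ block --- is exactly the delicate point of the paper's proof, and your resolution (apply the induction hypothesis to the smaller $\imm$/$\capsule$ right-value under the \rn{t-unrst} application to force all its free variables to be $\imm$, contradicting $\y\in\domGeqMut(\Gamma)$) is precisely the paper's argument.
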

\begin{proof} 
By structural induction on $\stVal$ and by cases on $\mu$. 

\medskip\noindent
Let \underline{$\stVal=\ConstrCall{\C}{\x_1,\ldots,\x_n}$} and $\fields{\C}{=}\Field{\Type{\mu_1}{\C_1}}{\f_1}\ldots\Field{\Type{\mu_n}{\C_n}}{\f_n}$.  By definition
$\typeOf{\ConstrCall{\C}{\x_1,\ldots,\x_n}}=\Type{\mutable}{\C}$.\\

\medskip\noindent
If \underline{$\mu=\capsule$},  from \refToLemma{typeStruct}.2 we can assume that the last rule applied in the derivation is
\rn{t-capsule}:
\[
\prooftree
\begin{array}{c}
\AuxTypeCheck{\Gamma}{\LentLocked\,\xs}{\emptyset}{\StronglyLocked}{{\ConstrCall{\C}{\zs}}}{\Type{\mutable}{\C}}
\end{array}
\justifies
\AuxTypeCheck{\Gamma}{\LentLocked}{\MutGroup}{\StronglyLocked}{{{\ConstrCall{\C}{\zs}}}}{\Type{\capsule}{\C}}
\using
\rn{t-capsule}
\endprooftree
\]
From \refToLemma{constrMut}.1, since the current mutable group is empty, 
$\Range{i}{1}{n}$ we have that
$\AuxTypeCheck{\Gamma}{\LentLocked}{\MutGroup}{\StronglyLocked}{\x_i}{\Type{\imm}{\C_i}}$.

\medskip\noindent
If \underline{$\mu=\mutable$}, from rule \rn{t-new}, 
$\AuxTypeCheck{\Gamma}{\LentLocked}{\MutGroup}{\StronglyLocked}{\ConstrCall{\C}{\x_1,\ldots,\x_n}}{\Type{\mutable}{\C}}$.
Therefore, from \refToLemma{constrMut}.1 we get the result. 

\medskip\noindent
Let \underline{$\mu=\imm$.} From \refToLemma{typeStruct}.2 we can assume that the last rule applied in the derivation is
either \rn{t-sub}, or \rn{t-imm}. 
In the first case
$\AuxTypeCheck{\Gamma}{\LentLocked}{\MutGroup}{\StronglyLocked}{{{\ConstrCall{\C}{\zs}}}}{\Type{\capsule}{\C}}$, so the proof for $\mu=\capsule$ applies.
In the second, we have 
\[
\prooftree
\begin{array}{c}
\deriv:\AuxTypeCheck{\Gamma}{\LentLocked\ \xs}{\emptyset}{\domGeqMut(\Gamma) }{{\ConstrCall{\C}{\zs}}}{\Type{\readable}{\C}}
\end{array}
\justifies
\AuxTypeCheck{\Gamma}{\LentLocked}{\MutGroup}{\StronglyLocked}{{{\ConstrCall{\C}{\zs}}}}{\Type{\imm}{\C}}
\using
\rn{t-imm}
\endprooftree
\]
From \refToLemma{constrMut}.2  we get the result. 

\medskip\noindent
Let \underline{$\mu=\lent$ or $\mu=\readable$.} Then to derive a type for $\ConstrCall{\C}{\xs}$ we have
to apply \rn{t-new} followed by \rn{t-sub}, so the result is obvious.\\

\bigskip\noindent
Let \underline{$\stVal=\Block{\dvs}{\cOrx}$} where $\dvs$ is 
$\Dec{\T_1}{\y_1}{\stVal_1}$ $\ldots$ $\Dec{\T_n}{\y_n}{\stVal_n}$, and  $\cOrx=\ConstrCall{\C}{\xs}$ or
$\cOrx=x$.  

\medskip\noindent
If \underline{$\mu=\capsule$}, 
then, from \refToLemma{typeStruct}.2 we can assume that the last rule applied in the derivation is
\rn{t-capsule}:
\[
\prooftree
\begin{array}{c}
\AuxTypeCheck{\Gamma}{\LentLocked\,\xs}{\emptyset}{\StronglyLocked}{\Block{\dvs}{\_}}{\Type{\mutable}{\C}}
\end{array}
\justifies
\AuxTypeCheck{\Gamma}{\LentLocked}{\MutGroup}{\StronglyLocked}{{{\Block{\dvs}{\_}}}}{\Type{\capsule}{\C}}
\using
\rn{t-capsule}
\endprooftree
\]
From \refToLemma{blockMut}, since the current mutable group is empty, 
$\Range{i}{1}{n}$ we have that
$\AuxTypeCheck{\Gamma}{\LentLocked}{\MutGroup}{\StronglyLocked}{\x_i}{\Type{\imm}{\C_i}}$.

\medskip\noindent
If \underline{$\mu=\mutable$}, \refToLemma{typeStruct}.1 implies that the last rule
of the type derivation is either an application of \rn{t-sub}, or of \rn{t-block}.
In the first case  the proof for $\mu=\capsule$ applies. In the second case, we have
\[
\TypeCheck{\Gamma}{\LentLocked}{\StronglyLocked}{\Block{\dvs}{\_}}{\Type{\mutable}{\C}}
\]
From \refToLemma{blockMut} we get the result.

\medskip\noindent
If \underline{$\mu=\imm$}, from \refToLemma{typeStruct}.2 we can assume that the last rule
applied is either \rn{t-sub} or \rn{t-imm}. If the rule
applied is \rn{t-sub}, then 
$\TypeCheck{\Gamma}{\LentLocked}{\StronglyLocked}{\Block{\dvs}{\_}}{\Type{\capsule}{\C}}
$, and the proof for $\mu=\capsule$ applies. \\
Let the last rule applied in $\deriv'$ be \rn{t-imm}, i.e.,
\[
\prooftree
\begin{array}{c}
\deriv:\AuxTypeCheck{\Gamma}{\LentLocked\ \xs}{\emptyset}{\domGeqMut(\Gamma) }{\Block{\dvs}{\_}}{\Type{\readable}{\C}}
\end{array}
\justifies
\AuxTypeCheck{\Gamma}{\LentLocked}{\MutGroup}{\StronglyLocked}{\Block{\dvs}{\_}}{\Type{\imm}{\C}}
\using
\rn{t-imm}
\endprooftree
\]
From \refToLemma{inversionBlock}, since we may assume that in the derivation $\deriv$ the sub-derivation
that ends with rule \rn{t-block} is not the antecedent of rule \rn{t-unrst} (otherwise the block would already have type $\Type{\imm}{\C}$) we have that
\begin{enumerate}[(a)]
\item  $\AuxTypeCheck{{\SubstFun{\Gamma}{\TypeEnv{\dvs}}}}{{\LentLocked'}}{\MutGroup'}{\StronglyLocked'}{\_}{\Type{\mu'}{\C}}$ 
\item $\AuxDecsOK{\SubstFun{\Gamma}{\TypeEnv{\dvs}}}{\LentLocked'}{\MutGroup'}{\StronglyLocked'}{\Dec{\Type{\mu_i}{\C_i}}{\z_i}{\stVal_i}}$  $\Range{i}{1}{n}$
\item $\LessEq{(\LentLocked\ \xs){\setminus}\dom{\TypeEnv{\dvs}}}{\LentLocked'\ \xs'}$ and
\item $\domGeqMut(\Gamma){\setminus}\dom{\dvs}=\StronglyLocked'$ 
\end{enumerate}
If $\y\in\FV{\Block{\dvs}{\_}}$, then either $\y\in\FV{{\_}}{\setminus}\dom{\dvs}$ or $\y\in\FV{\stVal_i}{\setminus}\dom{\dvs}$ for some $i\in 1..n$.\\
If $\y\in\FV{{\_}}{\setminus}\dom{\dvs}$ and $\y\in\domGeqMut(\Gamma)$, from (d) $\y\in\StronglyLocked'$. In order to apply rule \rn{t-var} to derive a type for $\y$ we
would have to apply rule \rn{t-unrst}. Therefore $\Gamma[\TypeEnv{\dvs}](\y)=\Type{\mu}{\D}$ with $\mu\leq\imm$.
(Rule \rn{t-unrst} cannot follow \rn{t-new}.) So $\y\not\in\domGeqMut(\Gamma)$, and $\Gamma(\y)=\Type{\mu}{\D}$ with $\mu\leq\imm$.\\
If for some $i\in 1..n$, $\y\in\FV{\stVal_i}{\setminus}\dom{\dvs}$ and $\y\in\domGeqMut(\Gamma)$, let 
$\deriv:\AuxTypeCheck{{\SubstFun{\Gamma}{\TypeEnv{\dvs}}}}{{\LentLocked_i}}{\MutGroup_i}{\StronglyLocked'}{\stVal_i}{\Type{\mu_i}{\C_i}}$. Since $\y\in\domGeqMut(\Gamma)$ there is a sub-derivation $\deriv'$ of $\deriv$, such that
$\deriv':\AuxTypeCheck{{\SubstFun{\Gamma}{\TypeEnv{\dvs}}}}{{\LentLocked'_i}}{\MutGroup'_i}{\StronglyLocked'}{\val'}{\Type{\mu'}{\C'}}$, the last  rule applied is \rn{t-unrst}, and $\y\in\FV{\val'}$, i.e.,
\[
\prooftree
\AuxTypeCheck{{\SubstFun{\Gamma}{\TypeEnv{\dvs}}}}{{\LentLocked'_i}}{\MutGroup'_i}{\emptyset}{\val'}{\Type{\imm}{\C'}}\ \ 
\justifies
\AuxTypeCheck{{\SubstFun{\Gamma}{\TypeEnv{\dvs}}}}{{\LentLocked'_i}}{\MutGroup'_i}{\StronglyLocked'}{\val'}{\Type{\imm}{\C'}}
\using
\rn{t-unrst}
\endprooftree
\] 
If $\val'=\y$ then $\SubstFun{\Gamma}{\TypeEnv{\dvs}}(\y)=\Type{\mu}{\D}$ with $\mu\leq\imm$, therefore also $\Gamma(\y)=\Type{\mu}{\D}$ with $\mu\leq\imm$, which is impossible since $\y\in\domGeqMut(\Gamma)$. \\
If $\val'=\stVal'$ for some $\stVal'$, by induction hypothesis
on $\stVal'$, we have that $\AuxTypeCheck{{\SubstFun{\Gamma}{\TypeEnv{\dvs}}}}{{\LentLocked'_i}}{\MutGroup'_i}{\emptyset}{\y}{\Type{\mu}{\D}}$ where $\mu\leq\imm$. Since $\y\not\in\dom{\dvs}$ and $\mu\leq\imm$, by \refToLemma{weakening} and \refToLemma{typeVars}, also $\Gamma(\y)=\Type{\mu}{\D}$ with $\mu\leq\imm$, which is again impossible since $\y\in\domGeqMut(\Gamma)$. \\
Therefore, for all $\y\in\FV{\Block{\dvs}{\_}}$, we have that 
$\TypeCheck{\Gamma}{\LentLocked}{\StronglyLocked}{\y}{\Type{\imm}{\D}}$ with $\mu\leq \imm$. 

\medskip\noindent
If \underline{$\mu=\lent$} or \underline{$\mu=\readable$}, then $\y$ must be such that
$\TypeCheck{\Gamma}{\LentLocked}{\StronglyLocked}{\y}{\Type{\mu'}{\D}}$, so also
$\TypeCheck{\Gamma}{\LentLocked}{\StronglyLocked}{\y}{\Type{\readable}{\D}}$.\\
Moreover, if \underline{$\mu=\lent$} and $\TypeCheck{\Gamma}{\LentLocked}{\StronglyLocked}{\Block{\dvs}{\cOrx}}{\Type{\lent}{\C}}$, in case $\cOrx=\ConstrCall{\C}{\xs'}$, we have that 
$\typeOf{\Block{\dvs}{\cOrx}}=\Type{\mutable}{\C}$, and if $\cOrx=\x$, so $\x=\y_1$,must
be that $\T_1=\Type{\mu_1}{C}$ with $\mu_1\leq\lent$.
\qed
\end{proof}
Note that there are no constraints for $\readable$ right-values (by subtyping any well-typed right-value is also $\readable$), and no constraints on free variables for $\lent$ right-values.  
Moreover, note that a right-value with only $\leq\imm$ free variables is $\imm$ regardless of its extracted type, since we can apply  {immutability recovery}.

\subsection{Soundness of the type system for the operational semantics}

As usual soundness is proved by proving that typability is preserved by reduction, ``subject reduction'',  and  that well-typed expressions are either values of reduce, ``progress''. The subject reduction result, in our system, is particularly
relevant, since invariants on the store are expressed at the syntactic level by the modifiers assigned to the expression
and by the lent and mutable groups used in the typing judgement. 
Preserving typability of expressions means  not only enforcing the properties expressed by modifiers, but also
preserving the relationship between variables expressed by the lent and mutable groups. 

To identify  subexpressions of expressions we define {\em general contexts} $\genCtx$ by:
\begin{quote}
\begin{grammatica}
\produzione{\genCtx}{\emptyctx\mid\FieldAccess{\genCtx}{\f}\mid\MethCall{\genCtx}{\m}{\vs}\mid\MethCall{\val}{\m}{\vs\ \genCtx\ \vs'}\mid\FieldAssign{\genCtx}{\f}{\val}
\mid\FieldAssign{\val}{\f}{\genCtx}}{}\\*
\seguitoproduzione{\mid \ConstrCall{\C}{\vs\ \genCtx\ \vs'}\mid\Block{\decs\ \Dec{\T}{\x}{\genCtx}\ \decs'}{\val}\mid\Block{\decs}{\genCtx}}{}
\end{grammatica}
\end{quote}

Given a general context $\genCtx$, $\TypeEnv{\genCtx}$ is defined by: 
\begin{itemize}
 \item $\TypeEnv{\emptyctx}=\emptyset$
 \item $\TypeEnv{\FieldAccess{\genCtx}{\f}}=\TypeEnv{\MethCall{\genCtx}{\m}{\vs}}=\TypeEnv{\MethCall{\val}{\m}{\vs\ \genCtx\ \vs'}}=\TypeEnv{\FieldAssign{\genCtx}{\f}{\val}}=\TypeEnv{\FieldAssign{\val}{\f}{\genCtx}}=\TypeEnv{\ConstrCall{\C}{\vs\ \genCtx\ \vs'}}=\TypeEnv{\genCtx}$
  \item $\TypeEnv{\Block{\decs\ \Dec{\T}{\x}{\genCtx}\ \decs'}{\val}}=\TypeEnv{\decs},\TypeEnv{\decs'},\TypeDec{\T}{\x} [\TypeEnv{\genCtx}]$
  \item $\TypeEnv{\Block{\decs}{\genCtx}}=\TypeEnv{\decs}[\TypeEnv{\genCtx}]$
 \end{itemize}
$\TypeEnv{\ctx}$ is defined similarly.  We also use $\LentVars{\genCtx}$ and $\LentVars{\ctx}$ with the obvious meaning.

The following lemma expresses the property that  subexpressions of well-typed expressions are themselves
well-typed in a type context that may contain more variables, introduced by inner blocks.
The equivalence relation on the variable of the expressions
induced by the partition determined by the lent and mutable groups is preserved.
However, more  variables may be added to existing groups, there
could be new groups, and due to rule \rn{t-swap}, 
the mutable group of the  subexpression may be different from the
one on the whole expression.
\begin{lemma}[Context]\label{lemma:subCtx}
If $\deriv:\AuxTypeCheck{\Gamma}{\LentLocked}{\MutGroup}{\StronglyLocked}{{\e}}{\T}$, $\e=\GenCtx{\e'}$ and
$\deriv':\AuxTypeCheck{\SubstFun{\Gamma}{\TypeEnv{\genCtx}}}{\LentLocked'}{\MutGroup'}{\StronglyLocked'}{{\e'}}{\T'}$ is a sub-derivation of $\deriv$, 
then
\begin{enumerate}  
\item 
\begin{itemize}
\item $\LessEq{(\ \LentLocked\ \xs\ ){\setminus}\dom{\TypeEnv{\genCtx}}}{\LentLocked'\ \xs'}$, 
\item $\dom{\TypeEnv{\genCtx}}=\emptyset$ implies 
$\LentLocked\ \xs=\LentLocked'\ \xs'$, and
\item
we may assume that the last rule applied in $\deriv'$ is a structural rule;
\end{itemize}
  \item if $\e''$ is such that $\deriv'':\TypeCheck{\SubstFun{\Gamma}{\TypeEnv{\genCtx}}}{\LentLocked'}{\StronglyLocked'}{{\e''}}{\T'}$, then substituting $\deriv''$ with $\deriv'$ in $\deriv$ we get a derivation for $\AuxTypeCheck{\Gamma}{\LentLocked}{\MutGroup}{\StronglyLocked}{\Ctx{\e'}}{\T}$.
\end{enumerate}
\end{lemma}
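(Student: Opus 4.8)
The plan is to argue by structural induction on the general context $\genCtx$, peeling off one constructor at a time and, at each level, first using \refToLemma{nonStructural} to pass to a sub-derivation that ends with the structural rule matching the shape of the context. This single move discharges the third bullet of part~1 (``we may assume the last rule is structural'') and, because \refToLemma{nonStructural} alters the groups only by swapping — so that $\LentLocked\,\xs=\LentLocked'\,\xs'$ as \emph{sets} of groups — it keeps the equivalence relation induced on $\domMut{\Gamma}$ fixed while we descend. Part~2 will then follow from the observation that every structural rule constrains the premise that contains the hole purely through its type context and the derived type, never through the syntactic form of the subexpression.

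In the base case $\genCtx=\emptyctx$ we have $\e'=\e$ and $\TypeEnv{\genCtx}=\emptyset$, so the claim is exactly what \refToLemma{nonStructural} delivers: the normalised sub-derivation satisfies $\LentLocked\,\xs=\LentLocked'\,\xs'$, which gives both the $\LessEq{}{}$ relation (trivially, since set difference by $\emptyset$ is the identity) and the equality demanded when $\dom{\TypeEnv{\genCtx}}=\emptyset$. For the inductive step I would split on the outermost constructor. For the \emph{non-block} contexts — $\FieldAccess{\genCtx}{\f}$, the two method-call shapes, the two field-assignment shapes, and $\ConstrCall{\C}{\vs\ \genCtx\ \vs'}$ — the matching structural rule (\rn{t-field-access}, \rn{t-meth-call}, \rn{t-field-assign}, \rn{t-new}) passes the type context $\Delta$ unchanged to the hole-containing premise, and $\TypeEnv{\genCtx}$ introduces no new variables. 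Hence the groups are preserved verbatim at this level, $\dom{\TypeEnv{\genCtx}}$ stays empty, and applying the induction hypothesis to the inner context gives the result directly.

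The two \emph{block} contexts are where the real work lies. For $\Block{\decs}{\genCtx_1}$ and $\Block{\decs\ \Dec{\T}{\x}{\genCtx_1}\ \decs'}{\val}$ I would invoke \refToLemma{inversionBlock}, which types the body (respectively the declaration right-hand side) in the extended assignment $\SubstFun{\Gamma}{\TypeEnv{\decs}}$ with groups $\LentLocked''\,\xs''$ satisfying $\LessEq{(\LentLocked\,\xs){\setminus}\dom{\TypeEnv{\decs}}}{\LentLocked''\,\xs''}$; for the declaration case the fourth side condition of \rn{t-block} tells us that the right-hand side is typed with the \emph{same} set of groups, obtained only by swapping the group containing $\x$ into the mutable position, so the induced equivalence classes are untouched. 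Applying the induction hypothesis to $\genCtx_1$ and composing the two $\LessEq{}{}$ relations by transitivity — using $\dom{\TypeEnv{\genCtx}}=\dom{\TypeEnv{\decs}}\cup\dom{\TypeEnv{\genCtx_1}}$ — yields the relation required for $\genCtx$. Part~2 is then almost immediate: since each rule on the path from the root to the hole depends on its hole-premise only through the type context and the type $\T'$, grafting $\deriv''$ in place of the normalised $\deriv'$ produces a well-formed derivation of $\GenCtx{\e''}$ with the original conclusion, after re-attaching the non-structural rules stripped off during normalisation.

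The hard part will be the group bookkeeping in the block cases: tracking precisely how local declarations enlarge the groups, verifying that the swap prescribed for typing declaration right-hand sides relocates the mutable group without disturbing the equivalence classes on the old variables, and checking that $\LessEq{}{}$ composes correctly across nested blocks where the domains $\dom{\TypeEnv{\decs}}$ accumulate. The non-structural normalisation must also be threaded consistently through every level, so that the ``structural last rule'' claim and the group-preservation claim remain simultaneously available at each appeal to the induction hypothesis.
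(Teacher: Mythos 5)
Your proposal is correct and follows essentially the same route as the paper's proof: induction on $\genCtx$, with the base case discharged by \refToLemma{nonStructural}, the block cases handled via \refToLemma{inversionBlock} together with the case split on whether the declared variable is $\lent$ (swap) and transitivity of $\sqsubseteq$, the remaining contexts treated as immediate, and part~2 obtained by grafting the replacement derivation at the hole. The paper simply works out the declaration-hole block case in full and dismisses the others as ``similar and easier,'' exactly as your sketch anticipates.
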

\begin{proof}
The proof is in~\ref{sect:proof-sound}.\qed
\end{proof}
Note that, if $\genCtx$ is $\emptyctx$ or $\FieldAccess{\emptyctx}{\f}$ or $\MethCall{\emptyctx}{\m}{\vs}$ or 
$\MethCall{\val}{\m}{\vs\ \emptyctx\ \vs'}$ or $\FieldAssign{\emptyctx}{\f}{\val}$ or 
$\FieldAssign{\val}{\f}{\emptyctx}$ or $\ConstrCall{\C}{\vs\ \emptyctx\ \vs'}$, then  $ \dom{\genCtx}=\emptyset$ and 
\refToLemma{subCtx} implies \refToLemma{nonStructural}. 

\begin{lemma}[Field access]\label{lemma:fieldAccess}
Let  $\fields{\C}=\Field{\Type{\mu_i}{\C_i}}{\f_1}\ldots\Field{\Type{\mu_n}{\C_n}}{\f_n}$.
\begin{enumerate} 
  \item If $\AuxTypeCheck{\Gamma}{\LentLocked}{\MutGroup}{\StronglyLocked}{\FieldAccess{\ConstrCall{\C}{\xs}}{\f_i}}{\Type{\mu}{\C_i}}$ where the last rule applied is \rn{t-field-access}, then $\TypeCheck{\Gamma}{\LentLocked}{\StronglyLocked}{\x_i}{\Type{\mu}{\C_i}}$.
  \item If $Aux\TypeCheck{\Gamma}{\LentLocked}{\MutGroup}{\StronglyLocked}{\FieldAccess{\stVal}{\f_i}}{\T}$ where the last rule applied in $\deriv$ is \rn{t-field-access},
$\AuxTypeCheck{\Gamma}{\LentLocked}{\MutGroup}{\StronglyLocked}{\fieldOf{\stVal}{i}}{\T}$.
\item If $\AuxTypeCheck{\Gamma}{\LentLocked}{\MutGroup}{\StronglyLocked}{\FieldAccess{\stVal}{\f_i}}{\Type{\imm}{\C_i}}$, then for all $\LentLocked'$, $\StronglyLocked'$ such that $\WellFormedTypeCtx{\Gamma;\LentLocked';\StronglyLocked'}$ we have that
$\AuxTypeCheck{\Gamma}{\LentLocked'}{\MutGroup'}{\StronglyLocked'}{\fieldOf{\stVal}{i}}{\Type{\imm}{\C_i}}$.
\end{enumerate}
\end{lemma}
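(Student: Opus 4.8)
The plan is to handle the three items in sequence, as item~2 rests on item~1 and item~3 on item~2, and in every case to begin by inverting the \rn{t-field-access} step: its premise types the receiver as $\Type{\mu_r}{\C}$, and the side condition pins the result to $\Type{\mu_r}{\C_i}$ when the declared field type $\T_i$ is $\mutable$, and to $\T_i$ itself otherwise. The whole argument then hinges on whether $\f_i$ is a mutable or a non-mutable field.

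For item~1 the receiver is the constructor $\ConstrCall{\C}{\xs}$. If $\f_i$ is non-mutable, then the result is $\T_i=\Type{\imm}{\C_i}$; peeling off the non-structural rules with \refToLemma{nonStructural} and inverting \rn{t-new} yields $\x_i:\T_i$, whence $\Gamma(\x_i)\leq\Type{\imm}{\C_i}$, and \refToLemma{typeVars} re-derives $\x_i:\Type{\imm}{\C_i}$ in the required $\Gamma;\LentLocked;\StronglyLocked$. If $\f_i$ is mutable, the result is $\Type{\mu_r}{\C_i}$; here I would first note that a constructor with a mutable field cannot be recovered to $\capsule$ or $\imm$, since both recovery rules empty the mutable group and \refToLemma{constrMut} would then force the mutable argument $\x_i$ into the empty group. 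Hence $\mu_r\in\{\mutable,\lent,\readable\}$, and \refToLemma{constrMut} places $\x_i$ in the mutable group, so \rn{t-var} types it $\Type{\mutable}{\C_i}$ (it lies in no lent group and, by well-formedness of the context, is not restricted), which \rn{t-sub} weakens to $\Type{\mu_r}{\C_i}$.

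For item~2 I would use structural induction on $\stVal$, following the clauses defining $\aux{get}$. The base case $\stVal=\ConstrCall{\C}{\xs}$ is item~1, since $\fieldOf{\stVal}{i}=\x_i$. For $\stVal=\Block{\dvs}{\cOrx}$ the extracted field is $\Block{\dvs}{\fieldOf{\cOrx}{i}}$, so I would invoke \refToLemma{inversionBlock} to expose the typing of the body $\cOrx$ in the extended context $\SubstFun{\Gamma}{\TypeEnv{\dvs}}$, apply the field-access rule to $\cOrx$ there, discharge its result by item~1 (constructor body) or by the induction hypothesis applied to the right-value of $\x$ (variable body, after matching the type of $\x$ to that of its right-value through the \textbf{OK} judgement of \refToLemma{inversionBlock}), and finally rebuild the block with \rn{t-block}, leaving $\dvs$ untouched. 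For a non-mutable field the body result is $\T_i$ and matches $\T$ exactly; for a mutable field the block cannot have been recovered (same observation as in item~1), so \refToLemma{inversionBlock}(4) gives a body modifier $\mu'_r\leq\mu_r$ and subtyping closes the gap, the reassembled block again carrying the target type $\T$.

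For item~3 I would first reduce to a structural \rn{t-field-access} step via \refToLemma{nonStructural} and obtain, through item~2, that $\fieldOf{\stVal}{i}$ is typeable as $\Type{\imm}{\C_i}$ in the given context; I would then promote this to every well-formed $\LentLocked',\StronglyLocked'$ by a context-stability argument for immutable values. The idea is that an immutable value has, by \refToTheorem{canonicalForm}, only free variables typeable as $\imm$, each available in any well-formed context by \refToLemma{typeVars}; one can therefore type the value $\readable$ in any context and recover $\imm$ by \rn{t-imm}, the restriction of the $\geq\mutable$ variables being harmless precisely because none of them occurs. I expect the decisive difficulties to be the block case of item~2 --- keeping the lent groups, the mutable group and the restricted set aligned across \refToLemma{inversionBlock} while forcing the rebuilt block to carry exactly $\T$ --- and the stability step of item~3, where re-establishing $\imm$ in an arbitrary context must be argued carefully in the presence of $\capsule$ arguments and of the recovery rules.
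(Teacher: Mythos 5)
Your overall skeleton (invert \rn{t-field-access} and split on the field qualifier; prove item~2 by induction on $\stVal$ via \refToLemma{inversionBlock} and \rn{t-block}; get item~3 from context-independence of $\imm$ typings) matches the paper's proof, but two of your key claims are false, and both failures occur exactly where the non-structural rules intervene. First, in item~1 with $\mu_i=\mutable$ you assert that \refToLemma{constrMut} places $\x_i$ in the mutable group and that it ``lies in no lent group''. This ignores the case in which the premise of \rn{t-field-access} ends with \rn{t-swap}: there the underlying \rn{t-new} derivation is carried out after swapping, so \emph{its} mutable group is a lent group of the ambient judgement $\Gamma;\LentLocked\,[\MutGroup];\StronglyLocked$, and $\x_i$ belongs to that lent group. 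The receiver then has type $\Type{\lent}{\C}$ (swap weakens $\mutable$ to $\lent$), the field access has type $\Type{\lent}{\C_i}$, and the conclusion holds only because \rn{t-var} types $\x_i$ as $\Type{\lent}{\C_i}$ through its lent-group branch --- not, as you argue, as $\Type{\mutable}{\C_i}$ followed by subsumption, which is underivable here. The paper's proof makes exactly this case split ($\x_i\in\xs$ versus $\x_i\in\LentLocked$, the latter forcing a prior \rn{t-swap}); your argument covers only the first branch.

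Second, in item~2 you transfer the item-1 observation to blocks and claim that for a mutable field ``the block cannot have been recovered''. That transfer is unsound: a block right-value can be recovered to $\capsule$ or $\imm$ even when the class has mutable fields, because the constructor arguments filling those fields can be the block's \emph{local} variables --- e.g.\ \Q@{mut D x=new D(0); new C(x,x)}@ is typed $\capsule$ by \rn{t-capsule}, and this is the central feature of the whole system. In that case the receiver, hence the field access, carries qualifier $\capsule$ (or $\imm$), and neither \refToLemma{inversionBlock}(4) nor subtyping can give the extracted value $\Block{\dvs}{\x_i}$ that type, since both only move \emph{down} from the body's modifier: one must re-apply the \emph{same recovery rule} to the extracted block, which is precisely what the paper's proof does (``the same recovery rule can be applied''). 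This gap also propagates to your item~3, because an $\imm$-typed access to a \emph{mutable} field arises exactly from an $\imm$ (typically recovered) block receiver, so the instance of item~2 your item~3 relies on is one of the cases your argument does not establish.
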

\begin{proof}
The proof is in~\ref{sect:proof-sound}.\qed
\end{proof}
    \begin{lemma}[Field assign]\label{lemma:fieldAssign}
Let $\deriv:\AuxTypeCheck{\Gamma}{\LentLocked}{\MutGroup}{\StronglyLocked}{\FieldAssign{\x}{\f}{\valPrime}}{\T}$, and the
last rule applied in $\deriv$ be \rn{t-field-assign}. 
Then, $\x\in\xs$ and for all 
$\y\in\FV{\valPrime}$ such that $\TypeCheck{{\Gamma}}{\LentLocked}{\StronglyLocked}{\y}{\Type{\mu}{\_}}$ with
$\mu\geq\mutable$ we have that $\y\in\xs$.
\end{lemma}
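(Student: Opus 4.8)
The plan is to begin by inverting the rule \rn{t-field-assign}. Since by assumption the last rule applied in $\deriv$ is \rn{t-field-assign}, we obtain $\fields{\C}=\Field{\T_1}{\f_1}\ldots\Field{\T_n}{\f_n}$ with $\f=\f_i$, together with the two premises $\TypeCheck{\Gamma}{\LentLocked}{\StronglyLocked}{\x}{\Type{\mutable}{\C}}$ for the receiver and $\TypeCheck{\Gamma}{\LentLocked}{\StronglyLocked}{\valPrime}{\T_i}$ for the right-hand side, where $\T_i$ is the declared type of $\f_i$, hence either $\Type{\mutable}{\C_i}$, $\Type{\imm}{\C_i}$, or $\intType$.

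For the first claim, I would analyse the derivation of the receiver judgement $\TypeCheck{\Gamma}{\LentLocked}{\StronglyLocked}{\x}{\Type{\mutable}{\C}}$. By \refToLemma{typeStruct}(1) its last rule is neither \rn{t-swap}, \rn{t-unrst}, \rn{t-capsule}, nor \rn{t-imm}, so the derivation consists of the axiom \rn{t-var} possibly followed by applications of \rn{t-sub}. By \refToLemma{typeVars} we have $\Gamma(\x)\leq\Type{\mutable}{\C}$, hence $\Gamma(\x)$ is $\Type{\mutable}{\C}$ or $\Type{\capsule}{\C}$; the capsule case does not arise for a field-assignment receiver (in the reduction setting $\x$ is a store reference whose declaration has qualifier ${\geq}\mutable$, which together with $\Gamma(\x)\leq\Type{\mutable}{\C}$ forces $\Gamma(\x)=\Type{\mutable}{\C}$), so $\Gamma(\x)=\Type{\mutable}{\C}$ and in particular $\x\in\domMut{\Gamma}$. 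It remains to show $\x\notin\LentLocked$: if $\x$ belonged to some lent group, rule \rn{t-var} would assign it the base type $\Type{\lent}{\C}$, from which \rn{t-sub} can only reach $\Type{\lent}{\C}$ and $\Type{\readable}{\C}$, never $\Type{\mutable}{\C}$, a contradiction. Hence $\x\in\domMut{\Gamma}\setminus\LentLocked=\xs$.

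For the second claim I read the hypothesis ``$\TypeCheck{\Gamma}{\LentLocked}{\StronglyLocked}{\y}{\Type{\mu}{\_}}$ with $\mu\geq\mutable$'' as referring to the minimal qualifier derivable for $\y$ (equivalently, $\y$ is neither $\capsule$ nor $\imm$, hence a genuinely active reference). First I would reduce $\valPrime$ to a well-formed value: by Proposition~\ref{lemma:congruenceValue} either $\congruence{\valPrime}{\y'}$ for a variable $\y'$ or $\congruence{\valPrime}{\stVal}$ for a well-formed right-value $\stVal$, and since congruence preserves both types (Proposition~\ref{lemma:congruenceTypes}) and free variables, I may assume $\valPrime$ is itself a variable or a well-formed right-value typed $\T_i$. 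If $\T_i=\intType$ there is nothing to prove; if $\valPrime$ is a variable the claim follows by the same \rn{t-var}/\rn{t-sub} analysis as for the receiver (a free variable with minimal qualifier ${\geq}\mutable$ must then have $\Gamma$-type $\Type{\mutable}{\_}$ and lie outside $\LentLocked$, while a $\capsule$ or $\imm$ variable has minimal qualifier $<\mutable$ and is vacuously excluded). In the right-value case I would invoke \refToTheorem{canonicalForm}: if $\T_i=\Type{\imm}{\C_i}$, case 3 shows every $\y\in\FV{\valPrime}$ is typable as $\Type{\imm}{\_}$, so its minimal qualifier is ${\leq}\imm$ and the claim holds vacuously; if $\T_i=\Type{\mutable}{\C_i}$, case 2 shows every $\y\in\FV{\valPrime}$ is typable with some $\mu'\not\geq\lent$, which excludes both $\readable$ free variables and $\mutable$ free variables trapped in a lent group (either would force a derived type ${\geq}\lent$). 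Consequently any free variable of $\valPrime$ whose minimal qualifier is ${\geq}\mutable$ must satisfy $\Gamma(\y)=\Type{\mutable}{\_}$ with $\y\notin\LentLocked$, that is, $\y\in\xs$.

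The routine inversions and the subtyping bookkeeping are straightforward; the genuinely delicate points are two. The first is excluding the $\Type{\capsule}{\C}$ possibility for the receiver, which is not ruled out by the typing rules in isolation and must be justified through the setting in which the lemma is applied (a field-assignment receiver is an actual mutable reference). The second is making precise the reading of ``$\mu\geq\mutable$'' and observing, via \refToTheorem{canonicalForm}, that a value carrying a field type (either $\mutable$ or $\imm$) can contain neither $\readable$ free variables nor $\mutable$ free variables confined to a lent group; this is exactly what forces its active free variables into the mutable group $\xs$, and is where the work of the Canonical Forms theorem is essential.
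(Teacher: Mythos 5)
Your proof is correct and follows essentially the same route as the paper's: inversion of \rn{t-field-assign}, \refToLemma{typeVars} for the receiver, and \refToTheorem{canonicalForm} (cases 2 and 3) for the right-hand side $\valPrime$. The only difference is one of rigor, not of method: you make explicit two points the paper's proof silently glosses over, namely the $\Type{\capsule}{\C}$ possibility for $\Gamma(\x)$ (which, as you observe, is not excluded by the typing rules alone and must be ruled out by the well-formedness of stores in the setting where the lemma is applied) and the reading of ``$\mu\geq\mutable$'' as the minimal derivable qualifier of $\y$ (without which subsumption to $\readable$ would trivialize the hypothesis), so your extra care is warranted rather than a deviation.
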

\begin{proof}
The proof is in~\ref{sect:proof-sound}.\qed
\end{proof}

The following theorem asserts that reduction preserves typability of expressions. The theorem is proved
by cases on the reduction rule used.
Here we give the proof for the two difficult cases, which are the application of \rn{Field-Access}, and \rn{Field-Assign-Move}.
In the appendix we present the proof for other interesting cases, that use similar techinques.
The difficulty with the proof of subject reduction, for \rn{Field-Access}, is the non standard semantics of  the 
construct, that replaces the ``field access expression'' with the ``value of the
field''. The ``value of the field'' must be given the same type it had in the context of the declaration of the variable.
For the rule \rn{Field-Assign-Move}, the difficult cases are when declarations are moved outside a block to which
the recovery rules are applied. As highlighted in the proof, in these cases, the derivation contained application 
of the rule \rn{t-swap} (if it was a $\capsule$) and \rn{t-unrst} (if it was $\imm$), which are still applicable after
moving the declarations.

\begin{theorem}[Subject Reduction]\label{theo:subjectReductionShort}
Let $\TypeCheckGround{\e}{\T}$, and
$\reduce{{\e}}{{\e'}}$. Then $\TypeCheckGround{\e'}{\T}$.
\end{theorem}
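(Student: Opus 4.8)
The plan is to argue by cases on the reduction rule applied, after a uniform first step that localizes the reasoning to the redex. Since the initial judgement $\TypeCheckGround{\e}{\T}$ has empty type assignment, groups, and restricted set, and every reduction has the shape $\e=\Ctx{\preRedex}\longrightarrow\Ctx{\preRedex'}=\e'$ for an evaluation context $\ctx$ and a redex $\preRedex$, I would regard $\ctx$ as a general context and apply the Context Lemma (\refToLemma{subCtx}). Its part~1 yields a sub-derivation $\deriv':\AuxTypeCheck{\TypeEnv{\ctx}}{\LentLocked'}{\MutGroup'}{\StronglyLocked'}{\preRedex}{\T'}$ which we may assume ends with a structural rule, and its part~2 guarantees that if we re-derive the \emph{same} type $\T'$ for $\preRedex'$ in the \emph{same} context $\TypeEnv{\ctx};\LentLocked';\StronglyLocked'$, then plugging the new sub-derivation back into the original produces a derivation of $\TypeCheckGround{\e'}{\T}$. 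Hence it suffices, for each reduction rule, to prove the local statement that the contractum $\preRedex'$ admits type $\T'$ in the context in which $\preRedex$ was typed; the fact that the last rule used for $\preRedex$ is structural lets us invert the derivation on the outermost construct of the redex.

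I would first dispatch the routine cases. For \rn{invk} the contractum is the block assembled from the receiver, the arguments, and the method body; typability follows from the global assumption that method bodies are well typed against their declared annotations, instantiated by the actual argument types via \refToLemma{weakening}. Rule \rn{field-assign-prop} is a mere restructuring naming the receiver by a fresh local reference, while \rn{mut-move} and \rn{imm-move} relocate a store of, respectively, $\geq\mutable$ and $\imm$ declarations between directly nested blocks; these mirror the congruence rule \rn{body} and are handled by rearranging the block derivation, using that the moved declarations sit in a group placement that the target block can accommodate. The rules \rn{alias-elim} and \rn{capsule-elim} replace a variable by, respectively, another variable and a right-value, and reduce to a substitution lemma asserting that typing is stable under such replacements; the $\capsule$ case is safe precisely because the occurrence is unique and, by \refToTheorem{canonicalForm}, a $\capsule$ right-value mentions only $\imm$ free variables.

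The first genuinely delicate case is \rn{field-access}, whose non-standard semantics replaces $\FieldAccess{\val}{\f_i}$ by the value $\extractField{\ctx}{\val}{i}$ of the field. After inverting \rn{t-field-access} on the redex, I would split on whether the receiver is a right-value or a reference, and on whether $\f_i$ is $\mutable$ or $\imm$. When the receiver is a right-value, \refToLemma{fieldAccess} part~2 directly yields the required type for $\fieldOf{\stVal}{i}$. When it is a reference $\x$, the field value is extracted from the declaration of $\x$, which may sit deeper in $\ctx$, so the type assigned to $\fieldOf{\stVal}{i}$ at the declaration site must be reconciled with the type demanded at the access site; here \refToLemma{fieldAccess} part~1 handles mutable fields (the extracted value is again a reference), and together with \refToLemma{typeVars} and \refToLemma{weakening} transports the reference's type across sites. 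The immutable case is exactly where duplication occurs, and part~3 of \refToLemma{fieldAccess} is essential: it guarantees that the copied block keeps its $\imm$ type under \emph{any} well-formed choice of lent groups and restricted variables, so the copy remains sound even though it now lives under groups different from those at its declaration. \refToTheorem{canonicalForm} supplies the shape of $\stVal$ that makes these extractions well defined.

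The main obstacle is \rn{field-assign-move}, which hoists the store $\dvs$ (the reachable closure of $\xs=\FV{\valPrime}\cap\dom{\dvs}$, by the last side condition) out of an inner block into the directly enclosing one. The difficulty is that this inner block is the initializer of a declaration $\z'$ that may itself be recovered: if $\z'$ is declared $\capsule$ the derivation contains a \rn{t-swap} turning $\xs$ mutable to license the assignment $\FieldAssign{\x}{\f}{\valPrime}$, and if $\z'$ is declared $\imm$ it contains a \rn{t-unrst}. The plan is to show that these non-structural steps survive the relocation: the side conditions $\noCapture{\valPrime}{\HB{\ctxP}}$ and $\Reduct{(\dvs\ \decs)}{\xs}=\dvs$ ensure that moving $\dvs$ neither disturbs the group to which $\xs$ belongs nor introduces capture, so the very same \rn{t-swap} (resp.\ \rn{t-unrst}) application, now performed around the relocated declarations, still typechecks and reconstructs the recovered type $\T'$ for the transformed block. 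Concretely I would invert the block typing via \refToLemma{inversionBlock}, identify the group containing $\xs$ using \refToLemma{fieldAssign} (which forces $\x$ and every $\geq\mutable$ free variable of $\valPrime$ into a single group), detach $\dvs$ together with that group from the inner derivation, and re-attach it to the enclosing block, re-threading the recovery and swap/unrestrict rules around the new structure. Verifying that the equivalence relation induced on the mutable variables by the lent and mutable groups is preserved across this surgery is the technical heart of the argument.
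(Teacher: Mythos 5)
Your overall strategy coincides with the paper's: localize to the redex with the Context Lemma (\refToLemma{subCtx}), argue by cases on the reduction rule, and single out \rn{field-access} and \rn{field-assign-move} as the hard cases, handled with the same lemmas the paper uses (\refToLemma{fieldAccess}, \refToLemma{inversionBlock}, \refToLemma{fieldAssign}, \refToLemma{typeVars}, \refToLemma{weakening}, \refToTheorem{canonicalForm}), including the key observation that the \rn{t-swap} (resp.\ \rn{t-unrst}) application hidden inside the initializer of a $\capsule$ (resp.\ $\imm$) declaration survives the relocation of $\dvs$.

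There is, however, a genuine gap: your case analysis covers only eight of the nine reduction rules --- rule \rn{field-assign} itself, the rule that actually performs the store update, never appears. It is not subsumed by \rn{field-assign-prop} (which merely names a non-reference receiver by a fresh local variable) nor by \rn{field-assign-move} (which only hoists declarations so that \rn{field-assign} can later fire); after the move, the assignment step still has to be proved type-preserving. This is the case in which the right-value of $\x$ changes from $\ConstrCall{\C}{\xs}$ to $\ConstrCall{\C}{\xs'}$ with $\x_i$ replaced by $\valPrime$, so one must show that $\valPrime$ admits the declared field type at the declaration site of $\x$, which may lie far above the redex. The paper proves this case separately (in the appendix): it uses \refToLemma{fieldAssign} to force $\x$ and every $\geq\mutable$ free variable of $\valPrime$ into the current mutable group, and then \refToLemma{weakening} together with the group-coherence conditions extracted via \refToLemma{inversionBlock} and \refToLemma{subCtx} to transport the typing of $\valPrime$ from the redex site up to the declaration of $\x$. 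You deploy exactly these tools in your \rn{field-assign-move} argument, so the omission is repairable, but as written your proof is missing precisely the case in which mutation occurs --- the heart of subject reduction for an imperative calculus.
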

\begin{proof} 
Let $\TypeCheckGround{\e}{\T}$. If $\reduce{{\e}}{{\e'}}$, then one of the
rules of \refToFigure{reduction} was applied. 
Here we prove the result for the most interesting rules: \rn{field-access}, and \rn{field-assign-move}, whose proofs embodies
the techniques used. In the \refToSection{proof-sound} we present two other interesting cases  \rn{field-assign}, and \rn{mut-move}, which are easier, but still interesting.

\medskip\noindent
Consider \underline{rule \rn{field-access}}.
In this case  
\begin{enumerate} [(1)]
\item $\e=\Ctx{\FieldAccess{{\val}}{\f}}$, and 
\item $\e'={\Ctx{\extractField{\ctx}{\val}{i}}}$, 
\end{enumerate}
where 
$\fields{\C}=\Field{\Type{\mu_i}{\C_i}}{\f_1}\ldots\Field{\Type{\mu_n}{\C_n}}{\f_n}\ \mbox{and}\  \f=\f_i$.\\
From (1) and  \refToLemma{subCtx}.1 for some $\T'$, $\LentLocked$ and $\StronglyLocked$ we have that
\begin{enumerate}[(1)]\addtocounter{enumi}{2}
  \item $\AuxTypeCheck{{\TypeEnv{\ctx}}}{\LentLocked}{\MutGroup}{\StronglyLocked}{\FieldAccess{{\val}}{\f}}{\T'}$ and
  \item the last rule appled in the derivation is \rn{Field-Access}.
\end{enumerate} 
From Proposition \ref{lemma:congruenceValue} either $\val\cong\z$ for some $\z$ or  $\val\cong\stVal$ for some 
\storableVal\  $\stVal$ such that $\WFrv{\stVal}$. \\
If \underline{$\val\cong\stVal$}, from (3), (4), \refToLemma{fieldAccess}.2 and  $\extractField{\ctx}{\val}{i}=\fieldOf{\stVal}{i}$ we have that 
$\TypeCheck{{\TypeEnv{\ctx}}}{\LentLocked}{\StronglyLocked}{\fieldOf{\stVal}{i}}{\T'}$. Therefore,
from \refToLemma{subCtx}.2 we derive $\TypeCheckGround{\e'}{\T}$.\\
If \underline{$\val\cong\x$}, since $\extractField{\ctx}{\x}{i}$ is defined $\e=\CtxP{\Block{\decs'}{\val'}}$ where 
\begin{enumerate}[(a)]
  \item $\decs'=\dvs\ \Dec{\Type{\mu_x}{\C}}{\x}{\stVal_x}\ \Dec{\T_z}{\z}{\ctx_z[\FieldAccess{{\x}}{\f}]}\ \decs$
  \item $\WFdv{\Dec{\Type{\mu_x}{\C}}{\x}{\stVal_x}}$
  \item $\noCapture{\x}{\HB{\ctx_z}}$ and we may assume that $\noCapture{\stVal_x}{\HB{\ctx_z}}$.
\end{enumerate}
Let $\Gamma'=\TypeEnv{\ctxP}[\TypeEnv{\decs'}]$, from Lemmas \ref{lemma:subCtx} and \ref{lemma:inversionBlock} for some $\LentLocked'$ and $\StronglyLocked'$ 
\begin{enumerate} [(a)]\addtocounter{enumi}{3}
\item $\AuxDecsOK{\Gamma'}{\LentLocked'}{\MutGroup'}{\StronglyLocked'}{\Dec{\Type{\mu_x}{\C}}{\x}{\stVal_x}}$, i.e., 
\begin{enumerate}[i]
\item $\AuxTypeCheck{\Gamma'}{\LentLocked'}{\MutGroup'}{\StronglyLocked'}{{\stVal_x}}{{\Gamma'}({x})}$ if $\mu_x\not=\lent$
\item $\AuxTypeCheck{\Gamma'}{\LentLocked_x}{\MutGroup_x}{\StronglyLocked'}{\stVal_x}{{\Gamma'}({x})}$ if $\mu_x=\lent$ and $\LentLocked'\ \xs'=\LentLocked_x\ \xs_x$ where $\x\in\xs_x$
\end{enumerate}
\item $\LessEq{(\LentLocked'\ \xs'){\setminus}\dom{\TypeEnv{\ctx_z}}}{\LentLocked\ \xs}$.
\end{enumerate}
From (3), (4) and rule \rn{Field-Access} we derive that 
\begin{enumerate} [(a)]\addtocounter{enumi}{5}
 \item $\AuxTypeCheck{{\TypeEnv{\ctx}}}{\LentLocked}{\MutGroup}{\StronglyLocked}{\x}{\Type{\mu'}{\C}}$ and
  \item $\T'=\Type{\mu}{\C_i}$ where: if either $\mu_i=\imm$ or $\mu'=\imm$, then $\mu=\imm$, else $\mu=\mu'$.
\end{enumerate}
Consider the case \underline{$\T'=\Type{\imm}{\C_i}$}: either $\mu'=\imm$ or $\mu_i=\imm$.\\
If \underline{$\mu'=\imm$}, $\TypeCheck{{\TypeEnv{\ctx}}}{\LentLocked}{\StronglyLocked}{\x}{\Type{\imm}{\C}}$. From Lemmas \ref{lemma:typeVars} and \ref{lemma:weakening} we have that also $\TypeCheck{{\Gamma'}}{\LentLocked'}{\StronglyLocked'}{\x}{\Type{\imm}{\C}}$, therefore from (d).i we derive that
$\AuxTypeCheck{\Gamma'}{\LentLocked'}{\MutGroup'}{\StronglyLocked'}{{\stVal_x}}{\Type{\imm}{\C}}$. 
Applying rule \rn{Field-Access},
$\TypeCheck{\Gamma'}{\LentLocked'}{\StronglyLocked'}{\FieldAccess{\stVal_x}{\f_i}}{\Type{\imm}{\C_i}}$.
From (c), (e), and \refToLemma{weakening} we have that also 
$\TypeCheck{\TypeEnv{\ctx}}{\LentLocked}{\StronglyLocked}{\FieldAccess{\stVal_x}{\f_i}}{\Type{\imm}{\C_i}}$.
From \refToLemma{fieldAccess}.2, we get that
$\TypeCheck{\TypeEnv{\ctx}}{\LentLocked}{\StronglyLocked}{\fieldOf{\stVal_x}{i}}{\Type{\imm}{\C_i}}$. Since 
$\extractField{\ctx}{\x}{i}=\fieldOf{\stVal_x}{i}$, from \refToLemma{subCtx}.2 we derive $\TypeCheckGround{\e'}{\T}$.\\
If \underline{$\mu_i=\imm$}, from (d).i (or (d).ii) and rule \rn{Field-Access} we derive
$\AuxTypeCheck{\Gamma'}{\LentLocked'}{\MutGroup'}{\StronglyLocked'}{\FieldAccess{\stVal_x}{\f_i}}{\Type{\imm}{\C_i}}$
(or $\AuxTypeCheck{\Gamma'}{\LentLocked_x}{\MutGroup_x}{\StronglyLocked'}{\FieldAccess{\stVal_x}{\f_i}}{\Type{\imm}{\C_i}}$). From 
(e), \refToLemma{fieldAccess}.3 and \refToLemma{weakening} we get $\TypeCheck{{\TypeEnv{\ctx}}}{\LentLocked}{\StronglyLocked}{\fieldOf{\stVal_x}{i}}{\Type{\imm}{\C_i}}$, which implies, as for the previous case, $\TypeCheckGround{\e'}{\T}$.\\
Consider the case \underline{$\T'=\Type{\mu'}{\C_i}$} where $\mu'\neq\imm$ and $\mu_i\neq\imm$, and, since we
do not allow forward references to unevaluated declarations, also $\mu'\neq\capsule$. Therefore
$\mu'\geq\mutable$ and $\mu_i=\mutable$. From $\mu'\geq\mutable$ and (b) the declaration of $\x$ is of the shape
\begin{enumerate} [(a)]\addtocounter{enumi}{7}
  \item $\Dec{\Type{\mu_x}{C}}{\x}{\ConstrCall{\C}{\xs}}$ where $\mu_x\geq\mutable$ and $\extractField{\ctx}{\x}{i}=\x_i$.
\end{enumerate}
If $\mu_x=\mutable$, then $\Gamma'(\x)=\Type{\mutable}{\C}$. From (d).i we have
$\AuxTypeCheck{\Gamma'}{\LentLocked'}{\MutGroup'}{\StronglyLocked'}{\ConstrCall{\C}{\xs}}{\Type{\mutable}{\C}}$
So $\x\in\MutGroup'$ and from \refToLemma{constrMut} also $\x_i\in\MutGroup'$.\\
If $\mu_x=\lent$, then $\Gamma'(\x)=\Type{\mutable}{\C}$. From (d).ii we have that $\AuxTypeCheck{\Gamma'}{\LentLocked_x}{\MutGroup_x}{\StronglyLocked'}{\ConstrCall{\C}{\xs}}{\Type{\mutable}{\C}}$
where $\x\in\MutGroup_x$. Again from \refToLemma{constrMut} also $\x_i\in\MutGroup_x$.\\
From (e) and (c) we have that $\x$ and $\x_i$ are in the same group in $\LentLocked\ \xs$. Therefore,
$\AuxTypeCheck{{\TypeEnv{\ctx}}}{\LentLocked}{\MutGroup}{\StronglyLocked}{{\x_i}}{\T'}$, and 
from \refToLemma{subCtx}.2 we derive $\TypeCheckGround{\e'}{\T}$.\\
Finally, if $\mu_x=\readable$, then $\Gamma'(\x)=\Type{\readable}{\C}$, and from (d).i and rule \rn{Field-Access}
we derive
$\AuxTypeCheck{\Gamma'}{\LentLocked'}{\MutGroup'}{\StronglyLocked'}{\FieldAccess{\stVal_x}{\f_i}}{\Type{\readable}{\C_i}}$. From $\AuxTypeCheck{\Gamma'}{\LentLocked'}{\MutGroup'}{\StronglyLocked'}{\x_i}{\Type{\readable}{\C_i}}$. and \refToLemma{weakening} we get $\TypeCheck{{\TypeEnv{\ctx}}}{\LentLocked}{\StronglyLocked}{\x_i}{\Type{\readable}{\C_i}}$, which implies, as for the previous case, $\TypeCheckGround{\e'}{\T}$.

\medskip\noindent
Consider \underline{rule \rn{field-assign-move}}.
In this case  
\begin{enumerate}[(1)]
\item $\e=\Ctx{\Block {\dvs'\ \Dec{\Type{\mu}{\C}}{\z'} {\e_1}\ \decs' }{\val'}}$, and 
\item $\e'=\Ctx{\Block{\dvs'\ \dvs\ \Dec{\Type{\mu}{\C}}{\z'} {\e_2}\  \decs'}{\val'}}$, 
\end{enumerate}
where
\begin{itemize}
\item $\e_1={\Block{\dvs\ \Dec{\Type{\mu_z}{\C_z}}{\z}{\CtxP{\FieldAssign{\x}{\f}{\valPrime}}}\ \decs}{\val}}$, 
\item $\e_2={\Block{\Dec{\Type{\mu_z}{\C_z}}{\z}{\CtxP{\FieldAssign{\x}{\f}{\valPrime}}}\ \decs}{\val}}$,
\end{itemize}
$\fields{\C}=\Field{\T_1}{\f_1}\ldots\Field{\T_n}{\f_n}$ with $\f=\f_i$ and
\begin{enumerate}[(1)]\addtocounter{enumi}{2}
\item $\FV{\valPrime}\cap\dom{\dvs}=\zs\neq\emptyset$,
\item $\noCapture{\x}{\HB{\ctxP}\cup\dom{\dvs}}$, $\noCapture{\valPrime}{\HB{\ctxP}}$,
$\noCapture{\Block{\dvs'\ \dvs}{\val'}}{\dom{\dvs}}$,   
\item $\z'\not\in\dom{\dvs}$ and $\Reduct{(\dvs\ \decs)}{{\zs}}=\dvs$.
\end{enumerate}
Moreover, since forward definitions are only allowed to evaluated declarations, we have that 
\begin{enumerate}[(1)]\addtocounter{enumi}{5}
\item $\z\not\in\FV{\dvs}$.
\end{enumerate}
From $\TypeCheckGround{\e}{\T}$ and \refToLemma{subCtx}.1 we get that, for some $\T_b$, $\LentLocked''$, and $\StronglyLocked''$
\begin{itemize}
  \item [$(\ast)$]$
\AuxTypeCheck{{\TypeEnv{\ctx}}}{\LentLocked''}{\MutGroup''}{\StronglyLocked''}{\Block {\dvs'\ \Dec{\Type{\mu_z}{\C}}{\z'} {\e_1}\ \decs' }{\val'}}{\T_b}.
$
\end{itemize}
Let $\Gamma=\TypeEnv{\dvs'},\TypeEnv{\decs'},\TypeDec{\T'}{z'}$ 
From $(\ast)$ and  \refToLemma{inversionBlock} for some $\T'$, $\mu'$, $\LentLocked'$ and $\StronglyLocked'$
\begin{enumerate} [(A)]
\item $\AuxDecsOK{{\SubstFun{\TypeEnv{\ctx}}{\Gamma}}}{\LentLocked'}{\MutGroup'}{\StronglyLocked'}{\dvs'\ \decs'}$,
\item $\AuxDecsOK{{\SubstFun{\TypeEnv{\ctx}}{\Gamma}}}{\LentLocked'}{\MutGroup'}{\StronglyLocked'}{\Dec{\Type{\mu}{\C}}{\z'} {e_1}}$, i.e.
\begin{enumerate}[i]
\item $\AuxTypeCheck{{\SubstFun{\TypeEnv{\ctx}}{\Gamma}}}{\LentLocked'}{\MutGroup'}{\StronglyLocked'}{e_1}{{\SubstFun{\TypeEnv{\ctx}}{\Gamma}}(\z')}$ if $\mu\neq\lent$
\item $\AuxTypeCheck{{\SubstFun{\TypeEnv{\ctx}}{\Gamma}}}{\LentLocked_{\z'}}{\MutGroup_{\z'}}{\StronglyLocked'}{\e_1}{{\SubstFun{\TypeEnv{\ctx}}{\Gamma}}(\z')}$ if $\mu=\lent$ and $\LentLocked'\ \xs'=\LentLocked_{\z'}\ \xs_{\z'}$ where $\z'\in\xs_{\z'}$
\end{enumerate}
\item $\TypeCheck{\SubstFun{\TypeEnv{\ctx}}{\Gamma}}{\LentLocked'}{\StronglyLocked'}{\val'}{\T'}$ and
\item $\LessEq{(\LentLocked''\ \xs''){\setminus}\dom{\Gamma}}{\LentLocked'\ \xs'}$.
\end{enumerate}
If \underline{$\mu\geq\mutable$} , we can give a proof similar to the 
case of {rule \rn{mut-move}}. Therefore we can assume that $\mu=\capsule$
or $\mu=\imm$. \\
In both cases $\mu\neq\lent$, so from (B).i we have that
\begin{itemize}
\item [(B1)] $\AuxTypeCheck{{\SubstFun{\TypeEnv{\ctx}}{\Gamma}}}{\LentLocked'}{\MutGroup'}{\StronglyLocked'}{e_1}{\Type{\mu}{\C}}$
\end{itemize}
Consider first \underline{$\mu=\capsule$}. \\
In this case, from \refToLemma{typeStruct}.2 we can assume that the last rule applied in the derivation of $\e_1$ is
\rn{t-capsule}:
\[
\prooftree
\AuxTypeCheck{{\SubstFun{\TypeEnv{\ctx}}{\Gamma}}}{\LentLocked'\ \xs'}{\emptyset}{\StronglyLocked'}{e_1}{\Type{\mutable}{\C}}
\justifies
\AuxTypeCheck{{\SubstFun{\TypeEnv{\ctx}}{\Gamma}}}{\LentLocked'}{\MutGroup'}{\StronglyLocked'}{e_1}{\Type{\capsule}{\C}}
\using
\rn{t-capsule}
\endprooftree
\]
Let $\Gamma'=\TypeEnv{\dvs},\TypeEnv{\decs},\TypeDec{\T}{\z}$. From \refToLemma{inversionBlock}, for some $\LentLocked$ and $\StronglyLocked$
we have that
\begin{enumerate} [(a)]
\item $\AuxDecsOK{{\SubstFun{\TypeEnv{\ctx}}{\SubstFun{\Gamma}{\Gamma'}}}}{\LentLocked}{\MutGroup}{\StronglyLocked}{\dvs}$,
\item $\AuxDecsOK{{\SubstFun{\TypeEnv{\ctx}}{\SubstFun{\Gamma}{\Gamma'}}}}{\LentLocked}{\MutGroup}{\StronglyLocked}{\decs}$,
\item $\AuxDecsOK{{\SubstFun{\TypeEnv{\ctx}}{\SubstFun{\Gamma}{\Gamma'}}}}{\LentLocked}{\MutGroup}{\StronglyLocked}{\Dec{\T_z}{\z}{\CtxP{\FieldAssign{\x}{\f}{\valPrime}}}}$, i.e.
\begin{enumerate}[i.]
\item $\AuxTypeCheck{{\SubstFun{\TypeEnv{\ctx}}{\SubstFun{\Gamma}{\Gamma'}}}}{\LentLocked}{\MutGroup}{\StronglyLocked}{\CtxP{\FieldAssign{\x}{\f}{\valPrime}}}{{\SubstFun{\TypeEnv{\ctx}}{\Gamma}}(\z)}$ if $\mu_z\neq\lent$
\item $\AuxTypeCheck{{\SubstFun{\TypeEnv{\ctx}}{\SubstFun{\Gamma}{\Gamma'}}}}{\LentLocked_{\z}}{\MutGroup_{\z}}{\StronglyLocked'}{\CtxP{\FieldAssign{\x}{\f}{\valPrime}}}{{\SubstFun{\TypeEnv{\ctx}}{\Gamma}}(\z)}$ if $\mu_z=\lent$ and $\LentLocked\ \xs=\LentLocked_{\z}\ \xs_{\z}$ where $\z'\in\xs_{\z'}$
\end{enumerate}
\item $\AuxTypeCheck{{\SubstFun{\TypeEnv{\ctx}}{\SubstFun{\Gamma}{\Gamma'}}}}{\LentLocked}{\MutGroup}{\StronglyLocked}{\val}{\Type{\mutable}{\C}}$
\item $\LessEq{(\LentLocked'\ \xs'){\setminus}\dom{\Gamma'}}{\LentLocked\ \xs}$. 
\end{enumerate}
From (c) and \refToLemma{subCtx}.1 we have that, for some $\LentLocked_x$ and $\StronglyLocked_x$
\begin{itemize}
\item $\AuxTypeCheck{{\SubstFun{\TypeEnv{\ctx}}{\SubstFun{\Gamma}{\SubstFun{\Gamma'}{\TypeEnv{\ctxP}}}}}}{\LentLocked_x}{\MutGroup_x}{\StronglyLocked_x}{\FieldAssign{\x}{\f}{\valPrime}}{\T_i}$
\end{itemize}
From \refToLemma{fieldAssign} we derive that $\x\in\xs_x$, and, 
if $Y=\{\y\ |\ \dvs(\y)=\Dec{\Type{\mu_y}{\_}}{\y}{\_}\ \mu_y\geq\mutable\}$, then
for all $\y\in Y$ we have that $\y\in\xs_x$.\\
From (3), (4),  $\noCapture{\x}{\dom{\decs}}$ (forward references are only allowed to
evaluated declarations), (e) and (D) we have that $\{\x\}\cup Y$ is a subset of one of the groups 
in $\LentLocked\ \xs$. 
(Note that, in the derivation of the judgement (c).i or (c).ii, there must be an application of rule \rn{t-swap} to
make $\xs_x$ the mutable group, since $\x$ is in one of the groups of $\LentLocked\ \xs$.)
Define $\LentLocked^{\ast}$ and $\StronglyLocked^{\ast}$ as follows. If there is $\zs'\in\LentLocked'$ such that
$\x\in\zs'$, then $\LentLocked^{\ast}=(\LentLocked'-\zs')\ (\zs'\cup Y)$, otherwise  
$\LentLocked^{\ast}=\LentLocked'$. If $\x\in\StronglyLocked'$ then $\StronglyLocked^{\ast}=\StronglyLocked'\cup Y$,
otherwise $\StronglyLocked^{\ast}=\StronglyLocked'$.\\
From (4), (A), (B), (C) and \refToLemma{weakening}, we derive
\begin{itemize}
\item [(A1)] $\AuxDecsOK{{\SubstFun{\TypeEnv{\ctx}}{\Gamma,\TypeEnv{\dvs}}}}{\LentLocked^{\ast}}{\MutGroup^{\ast}}{\StronglyLocked^{\ast}}{\dvs'\ \decs'}$,
\item [(C1)] $\AuxTypeCheck{\SubstFun{\TypeEnv{\ctx}}{\Gamma,\TypeEnv{\dvs}}}{\LentLocked^{\ast}}{\MutGroup^{\ast}}{\StronglyLocked^{\ast}}{\val'}{\T'}$
\item [(D1)]  $\LessEq{(\LentLocked'\ \xs'){\setminus}{\dom{\dvs}}}{\LentLocked^{\ast}\ \xs^{\ast}}$ and  $\StronglyLocked'{\setminus}{\dom{\dvs}}=\StronglyLocked^{\ast}{\setminus}{\dom{\dvs}}$.
\end{itemize}
From (a), (4), and \refToLemma{weakening}, we have
\begin{itemize}
\item [(a1)] $\AuxDecsOK{{\SubstFun{\TypeEnv{\ctx}}{{\Gamma,\TypeEnv{\dvs}}}}}{\LentLocked^{\ast}}{\MutGroup^{\ast}}{\StronglyLocked^{\ast}}{\dvs}$
\end{itemize}
From (4) and the fact that, for well-formedness of
declarations $x\not\in\dom{\dvs}$, we derive
that $\SubstFun{\Gamma}{\TypeEnv{\dvs},\TypeEnv{\decs},\TypeDec{\T'}{\z'}}=\SubstFun{\Gamma,\TypeEnv{\dvs}}{\TypeEnv{\decs},\TypeDec{\T'}{\z'}}$.\\
From (b), (c), (d), (e) and  rule \rn{t-block},  we have that 
\begin{itemize}
\item$\AuxTypeCheck{{\SubstFun{\TypeEnv{\ctx}}{\Gamma,\TypeEnv{\dvs}}}}{\LentLocked'\ \xs'}{\emptyset}{\StronglyLocked'}{\Block{\Dec{\Type{\mu_z}{\C_z}}{\z}{\CtxP{\FieldAssign{\x}{\f}{\valPrime}}}\ \decs}{\val}}
{\Type{\mutable}{\C}}
$
\end{itemize}
and therefore applying rule \rn{t-capsule} we derive $\AuxDecsOK{{\SubstFun{\TypeEnv{\ctx}}{\Gamma,\TypeEnv{\dvs}}}}{\LentLocked'}{\MutGroup'}{\StronglyLocked'}{\e_2}
$. 
From (D1) and \refToLemma{weakening} we also have 
\begin{itemize}
\item [(B2)] $\AuxDecsOK{{\SubstFun{\TypeEnv{\ctx}}{\Gamma,\TypeEnv{\dvs}}}}{\LentLocked^{\ast}}{\MutGroup^{\ast}}{\StronglyLocked^{\ast}}{\Dec{\Type{\mu_z}{\C}}{\z'} {\e_2}}
$. 
\end{itemize} 
From (A1), (a1), (B2), (C1),  (D), and (D1), applying rule \rn{t-block}, we derive
\begin{center}
$\AuxTypeCheck{{\TypeEnv{\ctx}}}{\LentLocked''}{\MutGroup''}{\StronglyLocked''}{\Block{\dvs'\ \dvs\ \Dec{\Type{\mu}{\C}}{\z'}{\e_2}\ \decs'}{\val'}}{\T_b}$.
\end{center}
From \refToLemma{subCtx}.2, we obtain the result.\\
Consider now \underline{$\mu=\imm$}.\\
If the typing is obtained from \rn{t-capsule} followed by \rn{t-sub} the result follows from the
previous proof. If instead the last rule applied was \rn{t-imm}, from \refToLemma{typeStruct}.2 
\[
\prooftree
\AuxTypeCheck{{\SubstFun{\TypeEnv{\ctx}}{\Gamma}}}{\LentLocked'\ \xs'}{\emptyset}{\domGeqMut{\SubstFun{\TypeEnv{\ctx}}{\Gamma}}}{e_1}{\Type{\readable}{\C}}
\justifies
\AuxTypeCheck{{\SubstFun{\TypeEnv{\ctx}}{\Gamma}}}{\LentLocked'}{\MutGroup'}{\StronglyLocked'}{e_1}{\Type{\imm}{\C}}
\using
\rn{t-imm}
\endprooftree
\]
Let
$\Gamma'=\TypeEnv{\dvs},\TypeEnv{\decs},\TypeDec{\T'}{\z'}$.  From \refToLemma{inversionBlock}, and the fact that
there is no application of \rn{t-unrst} in the derivation of $\e_1$ (if there was, then  the type derived for $\e_1$ should be $\Type{\imm}{\C}$), for some $\LentLocked$, and $\ys$ 
\begin{enumerate} [(a)]
\item $\AuxDecsOK{{\SubstFun{\TypeEnv{\ctx}}{\SubstFun{\Gamma}{\Gamma'}}}}{\LentLocked}{\MutGroup}{\StronglyLocked}{\dvs}$,
\item $\AuxDecsOK{{\SubstFun{\TypeEnv{\ctx}}{\SubstFun{\Gamma}{\Gamma'}}}}{\LentLocked}{\MutGroup}{\StronglyLocked}{\decs}$,
\item $\AuxDecsOK{{\SubstFun{\TypeEnv{\ctx}}{\SubstFun{\Gamma}{\Gamma'}}}}{\LentLocked}{\MutGroup}{\StronglyLocked}{\Dec{\T_z}{\z}{\CtxP{\FieldAssign{\x}{\f}{\valPrime}}}}$, i.e.
\begin{enumerate}[i.]
\item $\AuxTypeCheck{{\SubstFun{\TypeEnv{\ctx}}{\SubstFun{\Gamma}{\Gamma'}}}}{\LentLocked}{\MutGroup}{\StronglyLocked}{\CtxP{\FieldAssign{\x}{\f}{\valPrime}}}{{\SubstFun{\TypeEnv{\ctx}}{\Gamma}}(\z)}$ if $\mu_z\neq\lent$
\item $\AuxTypeCheck{{\SubstFun{\TypeEnv{\ctx}}{\SubstFun{\Gamma}{\Gamma'}}}}{\LentLocked_{\z}}{\MutGroup_{\z}}{\StronglyLocked}{\CtxP{\FieldAssign{\x}{\f}{\valPrime}}}{{\SubstFun{\TypeEnv{\ctx}}{\Gamma}}(\z)}$ if $\mu_z=\lent$ and $\LentLocked\ \xs=\LentLocked_{\z}\ \xs_{\z}$ where $\z'\in\xs_{\z'}$
\end{enumerate}
\item $\AuxTypeCheck{{\SubstFun{\TypeEnv{\ctx}}{\SubstFun{\Gamma}{\Gamma'}}}}{\LentLocked}{\MutGroup}{\StronglyLocked}{\val}{\Type{\readable}{\C}}$
\item $\LessEq{(\LentLocked'\ \xs'){\setminus}\dom{\Gamma'}}{\LentLocked\ \xs}$
\item $\StronglyLocked=\domGeqMut{\SubstFun{\TypeEnv{\ctx}}{\Gamma}}{\setminus}\dom{\Gamma'}$
\end{enumerate}
As before, we can assume that  $\noCapture{\x}{\dom{\decs}}$. From (c), (4), \refToLemma{subCtx}.1, and rule \rn{t-field-assign}, we derive that $\x\in\domMut{\SubstFun{\TypeEnv{\ctx}}{\Gamma}}$. So $\x\in\domGeqMut{\SubstFun{\TypeEnv{\ctx}}{\Gamma}}$.
Therefore, in the derivation of the judgement (c).i or (c).ii, there must be an application of rule \rn{t-unrst} to make possible the
application of rule \rn{t-var} to $\x$.\\
The proof of this case now proceeds as for the case of $\mu=\capsule$ defining $\LentLocked^{\ast}$ and $\StronglyLocked^{\ast}$ and proving that $\e_2$ and then the resulting expression are typeable.
\qed
\end{proof}

To prove progress, given an expression we need to find a rule that may be applied, and prove that 
its side conditions are verified. To match an expression with the left-side of a rule we define
 {\em pre-redexes}, $\preRedex$, by:
\begin{center}
$
\begin{array}{ll}
\preRedex ::=& \FieldAccess{\val}{\f}\mid\MethCall{{\val}}{\m}{{\vals}}\mid\FieldAssign{\val}{\f}{\val'}
  \\
 &
 \mid \Block{\dvs\ \Dec{\T}{\x}{\val}\ \decs}{\val}\quad\WFdv{\dvs}\ \wedge\ \not\WFdv{\Dec{\T}{\x}{\val}}\\
\end{array}
$
\end{center}
Any expression can be uniquely decomposed in a in a context filled a pre-redex.
\begin{lemma}[Unique Decomposition]\label{lemma:decomposition}
Let $\e$ be an expression. Either $\congruence{\e}{\val}$ where $\val$ is well-formed,
and if $\val=\Block{\dvs}{\cOrx}$, then $\WFdv{\dvs}$,
or there are $\ctx$ and $\preRedex$ such that $\congruence{\e}{\Ctx{\preRedex}}$.
\end{lemma}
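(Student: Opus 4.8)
The plan is to proceed by structural induction on $\e$, using Proposition~\ref{lemma:congruenceValue} to put value subterms into canonical (well-formed) form and relying on the fact that the congruence $\cong$ is compatible with contexts. The non-block cases are immediate. If $\e=\x$ it is already a well-formed value, so the first alternative holds. If $\e$ is $\FieldAccess{\val}{\f}$, $\MethCall{\val}{\m}{\vals}$ or $\FieldAssign{\val}{\f}{\val'}$, then $\e$ is itself a pre-redex, so the second alternative holds with $\ctx=\emptyctx$ and $\preRedex=\e$. If $\e=\ConstrCall{\C}{\vals}$ it is a value, hence by Proposition~\ref{lemma:congruenceValue} congruent either to a variable or to a well-formed right-value $\ConstrCall{\C}{\xs}$; since this is not a block the extra condition on $\WFdv{}$ is vacuous, and the first alternative holds.

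The real work is the case $\e=\Block{\decs}{\val}$ with $\decs=\Dec{\T_1}{\x_1}{\e_1}\ldots\Dec{\T_n}{\x_n}{\e_n}$. First I would locate the leftmost ``active'' declaration: let $k$ be the least index such that $\Dec{\T_k}{\x_k}{\e_k}$ is not congruent to any well-formed evaluated declaration. By minimality, for every $i<k$ the declaration $\Dec{\T_i}{\x_i}{\e_i}$ is congruent (by Proposition~\ref{lemma:congruenceValue}) to some $\dv_i$ with $\WFdv{\dv_i}$, so the prefix $\dvs=\dv_1\ldots\dv_{k-1}$ is a well-formed store; note that no application of \rn{reorder} is needed, since the context grammar already allows arbitrary declarations to the right of the hole. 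There are two subcases for $\e_k$. If $\e_k$ is not a value, the induction hypothesis applies to it; since congruence preserves being (or not being) a value and $\Ctx{\preRedex}$ is never a value, the first alternative must fail for $\e_k$, giving $\e_k\cong\ctx_1[\preRedex]$ for some context $\ctx_1$ and pre-redex $\preRedex$. Plugging this back yields $\e\cong\ctx'[\preRedex]$ with $\ctx'=\Block{\dvs\ \Dec{\T_k}{\x_k}{\ctx_1}\ \decs'}{\val}$ (where $\decs'$ are the remaining declarations), which is a legal evaluation context precisely because $\dvs$ is a well-formed store. If instead $\e_k$ is a value, then $\Block{\dvs\ \Dec{\T_k}{\x_k}{\e_k}\ \decs'}{\val}$ matches the block shape of a pre-redex, since $\WFdv{\dvs}$ and $\not\WFdv{\Dec{\T_k}{\x_k}{\e_k}}$; this single clause uniformly captures the situations handled by \rn{alias-elim} (the right-hand side is a variable), \rn{capsule-elim} (a $\capsule$ object state), and \rn{mut-move}/\rn{imm-move} (a block that is a value but not a well-formed declaration).

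It remains to treat the terminal subcase in which no such $k$ exists, i.e.\ every declaration of $\decs$ is congruent to a well-formed evaluated declaration, so that $\e\cong\Block{\dvs}{\val}$ with $\WFdv{\dvs}$ and $\val$ a value. Here I would apply the induction hypothesis to the body $\val$ (a strict subterm): being a value, $\val$ satisfies the first alternative, so $\val\cong\val'$ with $\val'$ a well-formed value and, crucially, $\WFdv{\dvs''}$ whenever $\val'=\Block{\dvs''}{\cOrx}$. If $\dvs$ is empty we conclude by \rn{block-elim}; otherwise I would normalise $\Block{\dvs}{\val'}$ to a well-formed right-value. When $\val'$ is a variable or an object state, deleting the declarations unreachable from its free variables (rule \rn{garbage}) yields a garbage-free, non-empty store, which is still a well-formed store since well-formedness of a single declaration is independent of the others; thus $\Block{\dvs}{\val'}$ is a well-formed right-value meeting the $\WFdv{}$ condition (or it collapses to $\cOrx$ by \rn{block-elim}). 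When $\val'=\Block{\dvs''}{\cOrx}$ is itself a block, I would merge the two stores with \rn{body} (and \rn{block-elim}), after $\alpha$-converting the inner block to discharge the capture side conditions, obtaining $\Block{\dvs\ \dvs''}{\cOrx}$; here the inductive store condition $\WFdv{\dvs''}$ combines with $\WFdv{\dvs}$ to give $\WFdv{\dvs\ \dvs''}$, after which garbage collection finishes the job.

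I expect the block case to be the main obstacle, and within it the genuinely delicate point is maintaining the extra invariant $\WFdv{\dvs}$ of the first alternative, rather than merely $\WFrv{}$. The two notions differ exactly on the reducible value-shaped blocks (those containing a $\capsule$ declaration, a $\mutable$ or otherwise non-$\imm$ nested block, or an immutable block whose local store is not made of $\geq\mutable$ declarations), and it is essential that these be routed into the pre-redex alternative via the ``leftmost non-well-formed declaration'' analysis instead of being mistaken for finished values. Symmetrically, the store condition must be carried through the induction on the block body, since it is exactly what lets the \rn{body} merge preserve well-formedness of the store. Uniqueness, asserted in the lemma's name, then follows from the mutual exclusivity of the two alternatives ($\Ctx{\preRedex}$ is never a value) together with the deterministic leftmost choice of $k$, but only existence is needed for the progress argument.
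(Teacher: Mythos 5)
Your overall strategy coincides with the paper's: structural induction, normalisation of value subterms via Proposition~\ref{lemma:congruenceValue}, a leftmost ``bad'' declaration for blocks, and wrapping the inductive decomposition in an evaluation context (your subcases for a non-value $\e_k$ and for a value $\e_k$ with $\not\WFdv{\Dec{\T_k}{\x_k}{\e_k}}$ are exactly the paper's two block subcases). However, two of your steps rest on the same false claim, namely that every \emph{value} satisfies the first alternative of the lemma. Proposition~\ref{lemma:congruenceValue} only yields $\congruence{\val}{\stVal}$ with $\WFrv{\stVal}$; it does \emph{not} yield the additional store condition $\WFdv{\dvs}$ required when $\stVal=\Block{\dvs}{\cOrx}$, and the gap between $\WFrv{}$ and $\WFdv{}$ is inhabited. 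Let $\stVal_0=\Block{\Dec{\Type{\mutable}{\D}}{\y}{\ConstrCall{\D}{}}}{\ConstrCall{\C'}{\y}}$, a well-formed right-value, and consider the value $\val_0=\Block{\Dec{\Type{\mutable}{\C}}{\x}{\stVal_0}}{\x}$. Then $\WFrv{\val_0}$ holds, yet $\not\WFdv{\Dec{\Type{\mutable}{\C}}{\x}{\stVal_0}}$, because only $\imm$-qualified declarations may carry a block right-value. No congruence rule can flatten a block occurring as the \emph{right-hand side} of a declaration (rule \rn{body} only moves store out of a block in \emph{body} position; hoisting right-hand sides is precisely the job of the reduction rule \rn{mut-move}), so $\val_0$ does not satisfy the first alternative: it is itself a pre-redex and must be routed to the second.

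This breaks your argument in two places. First, the constructor case: for $\e=\ConstrCall{\C}{\stVal_0}$ with a $\mutable$ field, rule \rn{new} gives $\congruence{\e}{\Block{\Dec{\Type{\mutable}{\C'}}{\x}{\stVal_0}}{\ConstrCall{\C}{\x}}}$, which \emph{is} a block; the $\WFdv{}$ condition is not vacuous, it fails here, and the correct conclusion is the second alternative with $\ctx=\emptyctx$ and the block as $\preRedex$ --- exactly the case split the paper performs after applying \rn{new} and \rn{body}. Second, the terminal block subcase: if the body of $\Block{\dvs}{\val}$ is $\val_0$, the induction hypothesis returns only the second alternative for the body, and you cannot embed that decomposition into $\e$, because evaluation contexts have their hole in declaration position, never in body position. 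The paper sidesteps both problems by applying Proposition~\ref{lemma:congruenceValue} to the \emph{whole} block, so that \rn{body} hoists the body's local store to the top level, and then explicitly testing every resulting declaration: if all are $\WFdv{}$ the first alternative holds, otherwise the whole block is the pre-redex; your proof needs the same routing. (Relatedly, your side remarks that $\Ctx{\preRedex}$ is never a value, and that the two alternatives are mutually exclusive, are also false --- block-shaped pre-redexes such as $\val_0$ are values --- but your non-value subcase survives, since it only needs the correct half of the argument, that congruence preserves being a value.)
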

\begin{proof}
The proof is in~\ref{sect:proof-sound}.\qed
\end{proof}
The progress result is proved by structural induction on expressions. The interesting case
is field assignment, in which we have to prove that one of the three rules may be applied. 
\begin{theorem}[Progress]\label{theo:progress}
Let $\TypeCheckGround{\e}{\T}$. Then either $\congruence{\e}{\val}$ where $\val$ is well-formed,
and if $\val=\Block{\dvs}{\cOrx}$, then  $\WFdv{\dvs}$, or $\reduce{\e}{\e'}$ for some $\e'$.
\end{theorem}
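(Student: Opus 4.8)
The plan is to reduce \refToTheorem{progress} to a purely syntactic case analysis by first invoking the Unique Decomposition Lemma (\refToLemma{decomposition}). It gives, for the well-typed $\e$, either $\congruence{\e}{\val}$ with $\val$ well-formed (and, if $\val=\Block{\dvs}{\cOrx}$, with $\WFdv{\dvs}$), which is exactly the first disjunct of the statement, or $\congruence{\e}{\Ctx{\preRedex}}$ for some evaluation context $\ctx$ and pre-redex $\preRedex$. In the latter case, since congruence preserves typing (Proposition~\ref{lemma:congruenceTypes}), the congruent form $\Ctx{\preRedex}$ is still typable with type $\T$, and it suffices to exhibit a reduction step from it; the remaining work is a case analysis on the shape of $\preRedex$, matching each pre-redex with a reduction rule of \refToFigure{reduction} and checking that its side conditions hold. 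Throughout I would use that $\e$ is closed (typed in the empty context), so every free variable occurring in a redex is declared somewhere in $\ctx$; this is what makes the auxiliary functions $\aux{type}$, $\aux{dec}$ and $\aux{get}$ defined on the relevant values.

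For $\preRedex=\FieldAccess{\val}{\f}$ and $\preRedex=\MethCall{\val}{\m}{\vals}$ the argument is routine. Inversion of the structural rule used to type the pre-redex (\rn{t-field-access}, resp.\ \rn{t-meth-call}) forces the receiver $\val$ to have a class type $\Type{\mu}{\C}$ and forces $\f$ to be a field of $\C$, resp.\ $\m$ to be a method of $\C$. By Proposition~\ref{lemma:congruenceValue}, $\val$ is congruent either to a reference or to a well-formed right-value; in the reference case its declaration is found in $\ctx$ because $\e$ is closed, so $\extractType{\ctx}{\val}$ and $\extractDec{\ctx}{\val}$ are defined. Hence rule \rn{field-access} (resp.\ \rn{invk}) applies.

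The central case is $\preRedex=\FieldAssign{\val}{\f}{\valPrime}$, where I must show that one of \rn{field-assign-prop}, \rn{field-assign}, \rn{field-assign-move} fires. Inversion of \rn{t-field-assign} gives $\val$ of type $\Type{\mutable}{\C}$ with $\f$ a field of $\C$. If $\notRef{\val}$, rule \rn{field-assign-prop} introduces a local binding for the receiver and applies. Otherwise $\val$ is a reference $\x$; since $\x$ is mutable and well-typed, and since the declarations appearing in the enclosing evaluation context form a well-formed store, the Canonical Forms theorem (\refToTheorem{canonicalForm}) together with the store well-formedness rules of \refToFigure{wellformed} forces the right-value bound to $\x$ to be an object state $\ConstrCall{\C}{\_}$ (a block right-value is only admissible for an $\imm$ reference), and well-formedness of blocks forbids forward references to unevaluated declarations, so the declaration of $\x$ lies in the already-evaluated part of some enclosing block. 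Writing $\ctxP$ for the sub-context from that block down to the hole, I then distinguish two sub-cases according to whether the side condition $\noCapture{\valPrime}{\HB{\ctxP}}$ holds: if it does, \rn{field-assign} applies directly; if some free variable of $\valPrime$ is instead captured by an inner block, I apply \rn{field-assign-move} to lift the offending declarations, together with those they transitively depend on, as required by the $\Reduct{(\dvs\ \decs)}{\xs}=\dvs$ side condition, into the directly enclosing block. The main obstacle is exactly this scope-extrusion bookkeeping: one must argue that \rn{field-assign-move} is applicable whenever \rn{field-assign} is blocked, that its side conditions (capture-freshness, obtainable by $\alpha$-conversion, and the reachability condition) can be met, and that repeated application strictly decreases the number of block boundaries separating the declarations of $\FV{\valPrime}$ from that of $\x$, so that \rn{field-assign} is eventually enabled.

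Finally, for $\preRedex=\Block{\dvs\ \Dec{\T}{\x}{\val}\ \decs}{\val'}$ with $\WFdv{\dvs}$ and $\not\WFdv{\Dec{\T}{\x}{\val}}$, I use Proposition~\ref{lemma:congruenceValue} to replace $\val$ by a congruent reference or well-formed right-value and then read off, from the \emph{reason} $\Dec{\T}{\x}{\val}$ fails to be a well-formed store entry, which rule applies: if $\val$ is a variable, \rn{alias-elim}; if $\T$ is a $\capsule$ type, \rn{capsule-elim}, whose single-occurrence precondition is guaranteed by the syntactic linearity constraint; and if $\val$ is a block right-value whose qualifier mismatches its contents, \rn{mut-move} or \rn{imm-move}, which flatten the nesting (the latter triggered precisely when the inner store violates $\allMut{\dvs}$). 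In each sub-case the side conditions are routine, the capture-avoidance ones being secured by $\alpha$-renaming, and this exhausts the pre-redexes, completing the proof.
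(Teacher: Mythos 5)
Your overall strategy coincides with the paper's own proof: \refToLemma{decomposition} to split into the value case and a decomposition $\congruence{\e}{\Ctx{\preRedex}}$, typability of the pre-redex in the context extracted from $\ctx$, and a case analysis matching each pre-redex with a reduction rule. The field-access, method-call and field-assignment cases are handled essentially as in the paper, including the key observation that store well-formedness forces the right-value of a $\mutable$ receiver to be an object state, and the split between \rn{field-assign} and \rn{field-assign-move} according to whether $\noCapture{\valPrime}{\HB{\ctxP}}$ holds. (One remark: your concern about repeated applications of \rn{field-assign-move} ``eventually enabling'' \rn{field-assign} is superfluous for progress. The theorem only requires exhibiting \emph{one} reduction step, and the single application of \rn{field-assign-move} is already that step; the iteration argument belongs to a termination discussion, not here.)

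There is, however, one genuine gap, in the block case with an $\imm$ declaration $\Dec{\Type{\imm}{\C}}{\x}{\Block{\dvs'}{\val'}}$ that is not well-formed store. You dismiss the side conditions of \rn{mut-move}/\rn{imm-move} as ``routine, the capture-avoidance ones being secured by $\alpha$-renaming''. For \rn{mut-move} this is fine (the whole inner store is moved, so the dependency condition is vacuous), but \rn{imm-move} moves only the immutable part $\dvs_{im}$ of $\dvs'$, leaving the mutable declarations $\dvs_{mt}$ inside, and its side condition $\noCapture{\dvs_{im}}{\dom{\dvs_{mt}}}$ demands that no right-value in $\dvs_{im}$ refers to a variable declared in $\dvs_{mt}$. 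This is \emph{not} a capture-avoidance condition and cannot be arranged by $\alpha$-renaming: renaming cannot erase an actual dependency of an immutable declaration on a mutable one. It is a semantic fact that holds only because the term is well typed; the paper derives it from \refToTheorem{canonicalForm} (items 1 and 3), which guarantee that a well-typed $\imm$ or $\capsule$ right-value has only $\imm$-typed free variables, hence none of them can lie in $\dom{\dvs_{mt}}$, whose variables have $\geq\mutable$ types. This is the one point in the block case where the type system, rather than syntax and congruence, does the work, and it is exactly the step your proposal omits.
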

\begin{proof}
Let $\e$ be such that $\TypeCheckGround{\e}{\T}$
for some $\T$, and for no $\val$ we have that
$\congruence{\e}{\val}$ where $\val$ is well-formed,
and if $\val=\Block{\dvs}{\cOrx}$, then $\WFdv{\dvs}$.
By \refToLemma{decomposition}, for some $\ctx$ and $\preRedex$
we have that $\congruence{\e}{\Ctx{\preRedex}}$, and so
$\TypeCheckGround{\Ctx{\preRedex}}{\T}$.
From \refToLemma{subCtx}, 
$\TypeCheck{{\TypeEnv{\ctx}}}{\LentLocked}{\StronglyLocked}{{\preRedex}}{\T'}$,
for some $\LentLocked$, $\StronglyLocked$,
$\T'$. \\
Let \underline{$\preRedex=\FieldAccess{\val}{\f}$}. 
From $\TypeCheck{\TypeEnv{\ctx}}{\LentLocked}{\StronglyLocked}{\FieldAccess{\val}{\f}}{\T'}$, we have that 
$\TypeCheck{\TypeEnv{\ctx}}{\LentLocked}{\StronglyLocked}{{{\val}}}{\Type{\_}{\C}}$ with 
$\fields{\C}=\Field{\T_1}{\f_1}\ldots\Field{\T_n}{\f_n}$ and $\f=\f_i$.
Therefore $\typeOf{\ctx,\val}=\Type{\_}{\C}$ and $\extractField{\ctx}{\val}{i}$ 
is defined. So rule \rn{Field-Access}
is applicable.\\
Let \underline{$\preRedex=\MethCall{{\val}}{\m}{{\vals}}$}, then rule \rn{Invk}
is applicable.\\
Let \underline{$\preRedex=\FieldAssign{\val}{\f}{\val'}$}, and $\notRef{\val}$. Then rule \rn{field-assign-prop} is applicable.\\
Let \underline{$\preRedex=\FieldAssign{\x}{\f}{\val}$}. From from $\TypeCheck{{\TypeEnv{\ctx}}}{\LentLocked}{\StronglyLocked}{{\FieldAssign{\x}{\f}{\val}}}{\T'}$, we have that
$\TypeCheck{{\TypeEnv{\ctx}}}{\LentLocked}{\StronglyLocked}{\x}{\Type{\mutable}{\C}}$ with
$\fields{\C}=\Field{\T_1}{\f_1}\ldots\Field{\T_n}{\f_n}$ and $\f=\f_i$. Since $\x\in\HB{\ctx}$, 
then $\ctx=\DecEvCtx{\ctx}{\x}[\Block{\dvs\ \Dec{\T}{\y}{\ctxP}\ \decs}{\valPrime}]$ for some $\ctxP$ and $\ctx_x$ such that
$\dvs(\x)=\dv$ and $\x\not\in\HB{\ctxP}$. From $\WFdv{\dv}$, we have that $\dv=\Dec{\Type{\mutable}{\C}}{\x}{\ConstrCall{\C}{{\xs}}}$.\\
There are two cases: either $\noCapture{\val}{\HB{\ctxP}}$, or for some $\z\in\FV{\val}$, $\z\in\HB{\ctxP}$.\\
In the first case rule \rn{field-assign} is applicable. \\
In the second, let $\ctxP={\ctx_z}[\Block{\dvs'\ \Dec{\T'}{\y'}{\ctx''}\ \decs'}{\valPrime'} ]$
such that $\noCapture{\val}{\HB{\ctx''}}$, and $\dvs'(\z)=\dv'$ and $\z\not\in\HB{\ctx''}$. \\
If $\ctx_z$ is $\emptyctx$, then \rn{field-assign-move} is applicable to
\begin{center}
$\DecEvCtx{\ctx}{\x}[\Block{\dvs\ \Dec{\T}{\y}{\Block{\dvs'\ \Dec{\T'}{\y'}{\ctx''[{\FieldAssign{\x}{\f}{\val}}]}\ \decs'}{\valPrime'}}\ \decs}{\valPrime}]$.
\end{center}
Otherwise,  $\ctx_z=\ctx'_z[\Block{\dvs''\ \Dec{\T''}{\y''}{\Block{\dvs'\ \Dec{\T'}{\y'}{\ctx''}\ \decs'}{\valPrime'}}\ \decs''}{\valPrime''}]$ and  
then \rn{field-assign-move} is applicable to
\begin{center}
$\DecEvCtx{\ctx}{\x}[\ctx'_z[\Block{\dvs''\ \Dec{\T''}{\y''}{\Block{\dvs'\ \Dec{\T'}{\y'}{\ctx''[{\FieldAssign{\x}{\f}{\val}}]}\ \decs'}{\valPrime'}}\ \decs''}{\valPrime''}]]$.
\end{center}
Therefore, there is always a rule applicable to $\ctx[\FieldAssign{\x}{\f}{\val}]$. \\
Let \underline{$\preRedex=\Block{\dvs\ \Dec{\T'}{\x}{\val'}\ \decs}{\val}$}, $\WFdv{\dvs}$ and $\not\WFdv{\Dec{\T'}{\x}{\val'}}$.
From Proposition \ref{lemma:congruenceValue}, either $\congruence{\val}{\x}$ or 
$\congruence{\val}{\stVal}$ for some $\WFrv{\stVal}$. In the first case rule \rn{Alias-Elim} is applicable. In the
second, let $\T'=\Type{\mu}{\D}$ for some, $\mu$ and $\D$. By cases on $\mu$.\\
If \underline{$\mu=\capsule$}, then \rn{Capsule-Elim} is applicable. \\
If \underline{$\mu\geq\mutable$}, then since $\not\WFdv{\Dec{\T'}{\x}{\stVal}}$ we have that $\stVal=\Block{\dvs'}{\val'}$.
By renaming bound variables in $\Block{\dvs'}{\val'}$ we can have that 
$\noCapture{\Block{\dvs\ \decs}{\val}}{\dom{\dvs'}}$. Therefore, rule \rn{Mut-Move} is applicable
moving $\dvs'$ outside.\\
If \underline{$\mu=\imm$},  
let $\dvs'=\dvs_{im}\ \dvs_{mt}$, where $\dv\in\dvs_{im}$ if $\extractMod{\dv}\leq\imm$,
and $\dv\in\dvs_{mt}$ if $\extractMod{\dv}\geq\mutable$. 
The side condition, 
$\noCapture{\Block{\dvs\ \decs}{\val}}{\dom{\dvs_{im}}}$, can be satisfied by renaming
of declared variables in $\Block{\dvs'}{\val'}$. We have to show that 
$\noCapture{\dvs_{im}}{\dom{\dvs_{mt}}}$. 
Let $\y\in\FV{\stVal'}$ for some $\Dec{\Type{\mu'}{\C'}}{\x'}{\stVal'}\in\dvs_{im}$ with $\mu'\leq\imm$. 
So $\TypeCheck{\SubstFun{\Gamma}{\TypeEnv{\dvs_{im}\ \dvs_{mt}}}}{\LentLocked''}{\StronglyLocked''}{\stVal'}{\Type{\mu'}{\C'}}$,
and from \refToTheorem{canonicalForm}.1 and 3, we have that
$\TypeCheck{\SubstFun{\Gamma}{\TypeEnv{\dvs_{im}\ \dvs_{mt}}}}{\LentLocked''}{\StronglyLocked''}{\y}{\Type{\imm}{\_}}$.
Therefore, $\y\not\in\dom{\dvs_{mt}}$, and rule \rn{Imm-Move} can be applied
since $\noCapture{\dvs_{im}}{\dom{\dvs_{mt}}}$ holds. \qed
\end{proof}

\subsection{Properties of expressions having immutable and capsule modifiers}

In addition to the standard soundness property, we prove two theorems stating that the $\capsule$ and $\imm$ qualifier, respectively, have the expected behaviour.  A nice consequence of our non standard operational model 
is that this can be formally expressed {and proved} in a simple way. 

In the two theorems, we need to trace the reduction of the right-hand side of a reference declaration.  
 To lighten the notation, we assume in the following that expressions contain at most one declaration for a variable (no shadowing, as can be always obtained by alpha-conversion).

We need some notations and lemmas. We define \emph{contexts} $\decctx{\mux}$:

\begin{quote}
\begin{grammatica}
\produzione{\decctx{\mux}}{\Block{\decs\ \Dec{\Type{\mu}{\C}}{\x}{\emptyctx}\ \decs'}{\val}\mid\FieldAccess{\decctx{\mux}}{\f}\mid\MethCall{\decctx{\mux}}{\m}{\vals}\mid\MethCall{\val}{\m}{\vals\ \decctx{\mux}\ \vals'}}{}\\*
\seguitoproduzione{\mid\FieldAssign{\decctx{\mux}}{\f}{\val}
\mid\FieldAssign{\val}{\f}{\decctx{\mux}}\mid \ConstrCall{\C}{\vals\ \decctx{\mux}\ \vals'}}{}\\*
\seguitoproduzione{\mid\Block{\decs\ \Dec{\T}{\y}{\decctx{\mux}}\ \decs'}{\val}\mid\Block{\decs}{\decctx{\mux}}}{}
\end{grammatica}
\end{quote}

That is, in $\Decctx{\mux}{\e}$ the expression $\e$ occurs as right-hand side of the (unique) declaration for reference $\x$, which has qualifier $\mu$. We will simply write $\Xctx$ when the $\mux$ suffix is not relevant.

The type assignment extracted from a context $\Xctx$, denoted $\TypeEnv{\Xctx}$, is defined as for the general contexts.
 
 The declaration for a variable $\y$ in a context $\Xctx$, denoted $\extractDec{\Xctx}{\y}$, can be defined analogously.
 
The following lemma states that the type of a reference under the type assignment extracted from the surrounding context is a subtype of the declared type  {when the type modifier is not equal to $\lent$. If the type modifier of the declaration is $\lent$, the variable
could be accessed in a sub-context in which, due to the \rn{t-swap} rule, the variable belongs to the current mutable
group and so we derive a type with the $\mutable$ modifier.}
\begin{lemma}\label{lemma:dec}
If $\TypeCheck{\TypeEnv{\Xctx}}{\LentLocked}{\StronglyLocked}{{\y}}{\Type{\mu}{\_}}$, then $\extractDec{\Xctx}{\y}=\Dec{\Type{\mu'}{\_}}{\y}{\_}$  {is such that $\mu'\neq\lent$ implies $\mu'\leq\mu$}.
\end{lemma}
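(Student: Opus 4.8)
The plan is to reduce the claim to \refToLemma{typeVars} together with the definition of the type assignment $\TypeEnv{\Xctx}$ extracted from a context. The only point requiring care is the bookkeeping relating the \emph{declaration} retrieved by the auxiliary function $\extractDec{\Xctx}{\y}$ to the \emph{type} that $\y$ receives in $\TypeEnv{\Xctx}$, because of the conversion of $\lent$ qualifiers into $\mutable$ performed by $\TypeEnv{}$.

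First I would establish, by structural induction on $\Xctx$, that if $\extractDec{\Xctx}{\y}=\Dec{\Type{\mu'}{\C}}{\y}{\_}$ then $\TypeEnv{\Xctx}(\y)=\Type{\mu''}{\C}$, where $\mu''=\mutable$ if $\mu'=\lent$ and $\mu''=\mu'$ otherwise. This is exactly the fact that $\TypeEnv{}$, applied to the sequences of declarations occurring in the blocks of $\Xctx$, rewrites the qualifier of $\lent$-declared variables to $\mutable$ and leaves every other qualifier (and the class name) unchanged. The no-shadowing assumption adopted in this subsection guarantees that the declaration selected by $\extractDec{\Xctx}{\y}$ is precisely the one whose rewritten type is recorded for $\y$ in $\TypeEnv{\Xctx}$, so the two indeed correspond.

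Next, from the hypothesis $\TypeCheck{\TypeEnv{\Xctx}}{\LentLocked}{\StronglyLocked}{\y}{\Type{\mu}{\C}}$ and \refToLemma{typeVars} I obtain $\TypeEnv{\Xctx}(\y)\leq\Type{\mu}{\C}$, that is $\mu''\leq\mu$ (the class is the same, since subtyping preserves it). If $\mu'\neq\lent$, then $\mu''=\mu'$ by the previous step, and hence $\mu'\leq\mu$, which is exactly the claim.

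I do not expect a genuine obstacle, since the argument is essentially a one-step application of \refToLemma{typeVars}. The only real subtlety is seeing why the hypothesis $\mu'\neq\lent$ cannot be dropped, and it is worth recording this to motivate the statement. If $\y$ were declared $\lent$, then $\TypeEnv{\Xctx}$ would assign it qualifier $\mutable$ and place it in a lent group; by an application of rule \rn{t-swap} in a subcontext this group can be swapped into the mutable group, so $\y$ can be typed $\mutable$, yielding $\mu=\mutable$. Since $\lent\not\leq\mutable$ in the subtyping order, the inequality $\mu'\leq\mu$ genuinely fails in this case, so excluding $\mu'=\lent$ is necessary.
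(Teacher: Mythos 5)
Your proposal is correct and takes essentially the same route as the paper: the paper's proof is just ``by induction on $\Xctx$'', and the natural way to carry out that induction is exactly what you do --- relate $\extractDec{\Xctx}{\y}$ to $\TypeEnv{\Xctx}(\y)$ (modulo the $\lent$-to-$\mutable$ rewriting performed by $\TypeEnv{}$) and then invoke \refToLemma{typeVars} to obtain $\TypeEnv{\Xctx}(\y)\leq\Type{\mu}{\_}$. Your closing observation explaining why the case $\mu'=\lent$ must be excluded (swapping can type a $\lent$-declared variable as $\mutable$, and $\lent\not\leq\mutable$) is a worthwhile motivation the paper leaves implicit, not a gap.
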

\begin{proof}
By induction on $\Xctx$. \qed
\end{proof}

 The following lemma states that a subexpression which occurs as right-hand-side of a declaration in a well-typed expression is well-typed, under the type assignment extracted from the surrounding context, and has a subtype of the reference type.
\begin{lemma}\label{lemma:decctx}
If $\IsWellTyped{\Decctx{\mux}{\e}}$, then $\TypeCheck{\TypeEnv{\decctx{\mux}}}{\LentLocked}{\StronglyLocked}{{\e}}{\Type{\mu'}{\_}}$ for some
$\LentLocked$, $\StronglyLocked$, and $\mu'\leq\mu$.
\end{lemma}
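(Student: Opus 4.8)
The plan is to reduce the statement to two facts already available for blocks and general contexts: the Context Lemma (\refToLemma{subCtx}) and the Inversion Lemma for blocks (\refToLemma{inversionBlock}). The starting observation is structural: in the grammar of $\decctx{\mux}$ the hole occurs only in the base production $\Block{\decs\ \Dec{\Type{\mu}{\C}}{\x}{\emptyctx}\ \decs'}{\val}$, while every remaining production is also a production of the grammar of general contexts $\genCtx$ (and the base production itself is the instance of $\Block{\decs\ \Dec{\T}{\x}{\genCtx}\ \decs'}{\val}$ with $\genCtx=\emptyctx$). Hence, by a routine induction on the context, I would first factor
\[
\decctx{\mux}=\GenCtxP{\Block{\decs\ \Dec{\Type{\mu}{\C}}{\x}{\emptyctx}\ \decs'}{\val}}
\]
for some (possibly empty) general context $\genCtxP$, so that, writing $B=\Block{\decs\ \Dec{\Type{\mu}{\C}}{\x}{\e}\ \decs'}{\val}$, we have $\Decctx{\mux}{\e}=\GenCtxP{B}$.

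From $\IsWellTyped{\GenCtxP{B}}$ I would fix a derivation and descend through its structural decomposition until reaching the occurrence of the block $B$, which supplies a sub-derivation whose subject is $B$. Invoking \refToLemma{subCtx} with $\genCtxP$ as surrounding general context and $B$ as the subterm at the hole, and using \refToLemma{subCtx}.1 to assume this sub-derivation ends with a structural rule (for a block necessarily \rn{t-block}), I obtain, for some $\LentLocked_B$, $\MutGroup_B$ and $\StronglyLocked_B$,
\[
\AuxTypeCheck{\TypeEnv{\genCtxP}}{\LentLocked_B}{\MutGroup_B}{\StronglyLocked_B}{B}{\Type{\mu_B}{\C}}.
\]
Here the outer type assignment is empty, since we type in the ground context, so the assignment is exactly $\TypeEnv{\genCtxP}$.

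Next I would apply \refToLemma{inversionBlock} to this \rn{t-block} derivation of $B$, with $\decs''=\decs\ \Dec{\Type{\mu}{\C}}{\x}{\e}\ \decs'$. Its first clause yields that the declarations $\decs''$ are well typed in the extended assignment $\SubstFun{\TypeEnv{\genCtxP}}{\TypeEnv{\decs''}}$; in particular the declaration $\Dec{\Type{\mu}{\C}}{\x}{\e}$ is well typed in the sense of Definition~\ref{def:wellTypedDefsNew}, so $\e$ is typed with $(\SubstFun{\TypeEnv{\genCtxP}}{\TypeEnv{\decs''}})(\x)=\TypeEnv{\decs''}(\x)$, for suitable lent groups and restricted variables (possibly after swapping the group of $\x$ with the mutable group when $\mu=\lent$). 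The required bound then follows by a case split on the declared qualifier: if $\mu\neq\lent$ then $\TypeEnv{\decs''}(\x)=\Type{\mu}{\C}$, so $\mu'=\mu$; if $\mu=\lent$ then $\TypeEnv{\decs''}(\x)=\Type{\mutable}{\C}$, so $\mu'=\mutable\leq\lent=\mu$. To conclude I would check, by unfolding the definitions of $\TypeEnv{\cdot}$ on contexts and on declaration sequences, that the assignment $\SubstFun{\TypeEnv{\genCtxP}}{\TypeEnv{\decs''}}$ in which $\e$ is typed coincides with $\TypeEnv{\decctx{\mux}}$, giving $\TypeCheck{\TypeEnv{\decctx{\mux}}}{\LentLocked}{\StronglyLocked}{\e}{\Type{\mu'}{\_}}$ with $\mu'\leq\mu$.

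The hard part will not be the final bound $\mu'\leq\mu$, which is essentially a definitional case split, but the careful bookkeeping of the lent groups, the mutable group, and the restricted variables across the two lemmas: I must make sure that the group manipulations performed by \rn{t-swap} inside $\genCtxP$ (tracked by \refToLemma{subCtx}) compose coherently with the group that \refToLemma{inversionBlock} attaches to $\x$, and that the two extracted assignments genuinely agree. The one real delicacy is the $\lent$ case, where the declared type of $\x$ is recorded as $\mutable$ in $\TypeEnv{\decs''}$ while the conclusion still measures $\mu'$ against the declared $\lent$; the subtyping $\mutable\leq\lent$ absorbs exactly this discrepancy.
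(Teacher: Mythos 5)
Your proof is correct, but it takes a visibly different route from the paper's, which disposes of the lemma in a single line: a direct structural induction on $\decctx{\mux}$, whose inductive cases implicitly replay the descent through the context that is already carried out in the proof of \refToLemma{subCtx}, and whose base case is the inversion of the block construct. You instead make the relationship between the two context grammars explicit --- every production of $\decctx{\mux}$ other than the base one mirrors a production of $\genCtx$, so $\decctx{\mux}=\GenCtxP{B_0}$ with $B_0=\Block{\decs\ \Dec{\Type{\mu}{\C}}{\x}{\emptyctx}\ \decs'}{\val}$ --- and then delegate the whole descent to \refToLemma{subCtx}, keeping as new content only the base case, handled by \refToLemma{inversionBlock} together with Definition~\ref{def:wellTypedDefsNew}. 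This buys modularity: the bookkeeping of lent groups, mutable group and restricted variables is done once, inside \refToLemma{subCtx}, instead of being re-proved; and your case split on the declared qualifier ($\TypeEnv{\decs}$ records a $\lent$-declared $\x$ as $\mutable$, absorbed by $\mutable\leq\lent$) isolates exactly why the lemma must be stated with $\mu'\leq\mu$ rather than with equality. The cost is two side obligations that the direct induction never needs to articulate: (i) the existence of a sub-derivation for the block $B=\Block{\decs\ \Dec{\Type{\mu}{\C}}{\x}{\e}\ \decs'}{\val}$ under the extracted assignment, which is precisely the existence reading of \refToLemma{subCtx} that the paper itself relies on (e.g., in the proof of \refToTheorem{subjectReductionShort}); and (ii) the compositionality of extraction, $\TypeEnv{\GenCtxP{B_0}}=\SubstFun{\TypeEnv{\genCtxP}}{\TypeEnv{B_0}}$, which itself needs a small induction on $\genCtxP$ --- so the induction is not eliminated, only relocated to a more elementary fact. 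Note also that your final agreement check in the $\lent$ case silently requires reading the definition of $\TypeEnv{\Xctx}$ as applying the lent-to-mutable conversion to the declaration holding the hole (the paper writes $\TypeDec{\T}{\x}$ there, leaving this implicit); your use of $\mutable\leq\lent$ is the right way to discharge the resulting discrepancy. Both obligations are benign, so the proof stands.
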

\begin{proof}
By induction on $\decctx{\mux}$. \qed
\end{proof}

The expected behaviour of the $\capsule$ qualifier is, informally,  that the reachable object subgraph denoted by a $\capsule$ reference should not contain nodes reachable from the outside, unless they are immutable.
In our calculus, a reachable object subgraph is a right-value $\stVal$, nodes reachable from the outside are free variables,  hence the condition can be formally expressed by requiring that free variables in $\stVal$ are declared $\imm$ or $\capsule$ in the surrounding context:
\begin{quote}
$\ImmClosed{\Xctx}{\stVal}$ iff for all $\y\in\FV{\stVal}$, $\extractDec{\Xctx}{\y}=\Dec{\Type{\mu}{\_}}{\y}{\_}$ with ${\mu}\leq\imm$
\end{quote}
Moreover, the reachable object subgraph denoted by a $\capsule$ reference should be typable $\mutable$, since it can be assigned to a mutable reference.
Altogether, the fact that the $\capsule$ qualifier guarantees the expected behaviour can be formally stated as in the theorem below, where the qualifier $\capsule$ is abbreviated \lstinline{c}.
\begin{theorem}[Capsule]\label{theo:capsule}
If $\IsWellTyped{\Capsulectx{\e}}$ and $\Capsulectx{\e}\longrightarrow^\star\CapsulectxP{\stVal}$, then:
\begin{itemize}
\item {$\typeOf{\stVal}=\Type{\mutable}{\_}$}
\item $\ImmClosed{{\cal C}'_{\cx}}{\stVal}$
\end{itemize}
\end{theorem}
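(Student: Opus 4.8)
The plan is to reduce the statement to the Canonical Forms theorem (\refToTheorem{canonicalForm}) via subject reduction, and then to convert the type-level information it yields about the free variables of $\stVal$ into the declaration-level condition $\ImmClosed{{\cal C}'_{\cx}}{\stVal}$.

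First I would establish that the reduct is still well typed. Subject Reduction (\refToTheorem{subjectReductionShort}) is stated for a single step, so I would extend it to $\longrightarrow^\star$ by a straightforward induction on the length of the reduction sequence, obtaining $\IsWellTyped{\CapsulectxP{\stVal}}$ from the hypothesis $\IsWellTyped{\Capsulectx{\e}}$. Since in $\CapsulectxP{\stVal}$ the occurrence of $\stVal$ is exactly the right-hand side of the (unique) $\capsule$ declaration of the tracked reference, \refToLemma{decctx} applies and gives $\TypeCheck{\TypeEnv{{\cal C}'_{\cx}}}{\LentLocked}{\StronglyLocked}{\stVal}{\Type{\mu'}{\_}}$ for some $\LentLocked$, $\StronglyLocked$ and $\mu'\leq\capsule$. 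As $\capsule$ is the least element of the subtyping order, $\mu'\leq\capsule$ forces $\mu'=\capsule$, so $\stVal$ is typable with qualifier $\capsule$ under the type assignment $\TypeEnv{{\cal C}'_{\cx}}$.

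Next I would invoke the Canonical Forms theorem in the case $\mu=\capsule$, with $\Gamma=\TypeEnv{{\cal C}'_{\cx}}$ and $\stVal$ the right-value just obtained. Its two conclusions in this case are precisely what is needed: it states directly that $\typeOf{\stVal}=\Type{\mutable}{\_}$, which is the first bullet of the theorem, and it states that every $\y\in\FV{\stVal}$ can be typed $\imm$, that is, $\TypeCheck{\TypeEnv{{\cal C}'_{\cx}}}{\LentLocked}{\StronglyLocked}{\y}{\Type{\imm}{\_}}$.

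Finally I would turn this typing information into a statement about declarations, which is what $\ImmClosed{{\cal C}'_{\cx}}{\stVal}$ requires. For each such $\y$, \refToLemma{dec} yields $\extractDec{{\cal C}'_{\cx}}{\y}=\Dec{\Type{\mu''}{\_}}{\y}{\_}$ with the proviso that $\mu''\neq\lent$ implies $\mu''\leq\imm$. The only remaining gap, and the main (though small) subtlety of the proof, is to rule out $\mu''=\lent$: a variable declared $\lent$ receives type $\mutable$ in the extracted type assignment $\TypeEnv{{\cal C}'_{\cx}}$, and by \refToLemma{typeVars} any type $\T$ derivable for $\y$ must satisfy $\TypeEnv{{\cal C}'_{\cx}}(\y)\leq\T$; since $\mutable\not\leq\imm$ in the subtyping order, $\y$ could then not be typed $\imm$, contradicting the previous paragraph. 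Hence $\mu''\neq\lent$, so $\mu''\leq\imm$, i.e.\ $\y$ is declared $\imm$ or $\capsule$ in ${\cal C}'_{\cx}$. As $\y$ was an arbitrary free variable of $\stVal$, this is exactly $\ImmClosed{{\cal C}'_{\cx}}{\stVal}$, completing the proof. The only genuine work is the bookkeeping ensuring that the type assignment used by Canonical Forms, \refToLemma{dec}, and \refToLemma{typeVars} is uniformly $\TypeEnv{{\cal C}'_{\cx}}$, together with the $\lent$-exclusion argument; everything else is a direct chaining of the cited results.
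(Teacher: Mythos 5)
Your proof is correct and follows exactly the paper's own route: \refToTheorem{subjectReductionShort} (iterated over the reduction sequence), then \refToLemma{decctx} to type $\stVal$ as $\capsule$ under $\TypeEnv{{\cal C}'_{\cx}}$, then \refToTheorem{canonicalForm} for both conclusions, then \refToLemma{dec} to transfer the $\imm$ typing of free variables to their declarations. If anything you are more careful than the paper at the last step: the paper's proof applies \refToLemma{dec} as if it directly yielded $\mu\leq\imm$, although its statement leaves the $\lent$ case unconstrained, and your exclusion of $\lent$ via \refToLemma{typeVars} (a $\lent$-declared variable has type $\mutable$ in $\TypeEnv{{\cal C}'_{\cx}}$, and $\mutable\not\leq\imm$) is precisely the justification the paper leaves implicit.
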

\begin{proof}
By subject reduction (\refToTheorem{subjectReductionShort}) we get $\IsWellTyped{\CapsulectxP{\stVal}}$. 
Then, from \refToLemma{decctx}, $\TypeCheck{\Gamma}{\LentLocked}{\StronglyLocked}{\stVal}{\Type{\capsule}{\_}}$ with $\Gamma=\TypeEnv{{\cal C}'_{\cx}}$. Hence, from \refToTheorem{canonicalForm}, for all $\y\in\FV{\stVal}$, $\TypeCheck{\Gamma}{\LentLocked}{\StronglyLocked}{\y}{\Type{\imm}{\_}}${, and $\typeOf{\stVal}=\Type{\mutable}{\_}$. Hence,} by \refToLemma{dec},  $\extractDec{{\cal C}'_{\cx}}{\y}=\Dec{\Type{\mu}{\_}}{\y}{\_}$ with $\mu\leq\imm$.\qed
\end{proof}

Note that the context can change since it is not an evaluation context and, moreover, reduction can modify the store. Consider for instance $\Capsulectx{\e}$ to be the following expression:
\begin{lstlisting}
mut C y= new C(0); mut C z= new C(y.f=1); capsule C x= $\e$; ...
\end{lstlisting}
Before reducing to a right value the initialization expression of \lstinline{x}, the initialization expression of \lstinline{z} should be reduced, and this has a side-effect on the right value of \lstinline{y}. Hence $\CapsulectxP{\stVal}$ is:
\begin{lstlisting}
mut C y= new C(1); mut C z= new C(1); capsule C x= $\stVal$; ...
\end{lstlisting}

The expected behaviour of the $\imm$ qualifier is, informally, that the reachable object subgraph denoted by an $\imm$ reference should not be modified through any alias. 
Hence, the right value of an immutable reference:
\begin{itemize}
\item should not be modified
\item should only refer to external references which are immutable
\end{itemize}
as formally stated in the theorem below, where the qualifier $\imm$ is abbreviated \lstinline{i}.
 \begin{theorem}[Immutable]\label{theo:imm}
If $\IsWellTyped{\mbox{$\Immctx{\e}$}}$ and $\Immctx{\e}\longrightarrow^\star\ImmctxP{\stVal}$, then:
\begin{itemize}
\item $\ImmctxP{\stVal}\longrightarrow^\star\ImmctxS{\stVal'}$
implies $\stVal=\stVal'$
\item $\ImmClosed{{\cal C}'_{\ix}}{\stVal}$
\end{itemize}
\end{theorem}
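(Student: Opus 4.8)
The plan is to treat the two bullets separately. The second one---that $\stVal$ is \emph{immutably closed}---follows by exactly the chain of reasoning used for \refToTheorem{capsule}, instantiated with the $\imm$ qualifier instead of $\capsule$. The first one---that the right-value of the immutable reference $\x$ is never altered by subsequent reduction---is the genuinely new content, and I would prove it by induction on the length of the reduction $\ImmctxP{\stVal}\longrightarrow^\star\ImmctxS{\stVal'}$, reducing to the analysis of a single step.

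For the closure property I would first apply Subject Reduction (\refToTheorem{subjectReductionShort}) to $\IsWellTyped{\Immctx{\e}}$ together with the given reduction, obtaining $\IsWellTyped{\ImmctxP{\stVal}}$. Then \refToLemma{decctx} gives $\TypeCheck{\Gamma}{\LentLocked}{\StronglyLocked}{\stVal}{\Type{\imm}{\_}}$ with $\Gamma=\TypeEnv{{\cal C}'_{\ix}}$ for suitable $\LentLocked,\StronglyLocked$. The Canonical Forms theorem (\refToTheorem{canonicalForm}, case $\mu=\imm$) then yields $\TypeCheck{\Gamma}{\LentLocked}{\StronglyLocked}{\y}{\Type{\imm}{\_}}$ for every $\y\in\FV{\stVal}$, and finally \refToLemma{dec} converts each such typing into $\extractDec{{\cal C}'_{\ix}}{\y}=\Dec{\Type{\mu}{\_}}{\y}{\_}$ with $\mu\leq\imm$, which is precisely $\ImmClosed{{\cal C}'_{\ix}}{\stVal}$.

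For the non-modification property I would do a case analysis on which rule of \refToFigure{reduction} fires in a single step and on the position of its redex relative to the declaration of $\x$. The only rule that \emph{destructively} rewrites a right-value is \rn{field-assign} (reached possibly via \rn{field-assign-prop}, or after relocating declarations with \rn{field-assign-move}); every other rule merely relocates or renames declarations, hence preserves each right-value verbatim or up to the bound-variable renaming it performs. For \rn{field-assign} the receiver must carry a qualifier $\geq\mutable$. Since $\x$ is $\imm$ it cannot be the receiver; and by the closure property just established, every variable free in $\stVal$ is declared $\leq\imm$, so none of them can be the receiver either. The only mutable references in $\stVal$ are \emph{internal} ones, but by well-formedness of the store (\refToFigure{wellformed}) the local store of an $\imm$ right-value consists solely of $\geq\mutable$ object states and contains no nested block, so these references are bound inside $\stVal$; as $\stVal$ is a value it harbours no pending assignment, and by $\alpha$-conversion the bound references are disjoint from everything outside, so no assignment elsewhere in the term can name them. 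Hence no single step modifies $\stVal$, and the invariant propagates along the whole reduction. The structural rules are discharged by the same well-formedness facts: \rn{mut-move} is inapplicable to the $\imm$-declared $\x$ (its side condition requires $\mu\geq\mutable$), and \rn{imm-move} cannot split the local store of $\stVal$ because the $\allImm{\dvs}$ side condition, combined with the all-$\geq\mutable$ shape of that store, forces the moved part to be empty.

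The main obstacle I anticipate is not the \rn{field-assign} case but reconciling the bookkeeping of the remaining rules with the \emph{literal} equality $\stVal=\stVal'$. Rules \rn{alias-elim} and \rn{capsule-elim} may substitute an external immutable (or capsule) reference occurring free in $\stVal$ by another reference or by its value, and the congruence rules may rename bound variables or move declarations across block boundaries; each of these can reshape the syntactic form of $\x$'s right-value even though it denotes the same immutable object graph. I would therefore establish the invariant as an equality of the reachable object graphs, reading $\stVal=\stVal'$ up to the induced renaming/inlining of external $\leq\imm$ references (which, being immutable, are interchangeable with their inlined forms), and the delicate point will be making this identification precise so that the relocation and substitution steps are seen to leave the denoted subgraph of $\x$ untouched while only \rn{field-assign} could change it---and that rule has just been shown never to apply.
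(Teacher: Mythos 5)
Your proposal takes essentially the same route as the paper's proof, which is in fact far terser than yours: the paper disposes of the first bullet in one sentence --- rule \rn{field-assign} is only applicable to $\geq\mutable$ references (with a parenthetical remark that \refToTheorem{progress} guarantees reduction is not stuck for this reason) --- and proves the second bullet ``analogously to \refToTheorem{capsule}'', which is exactly the chain you spell out (\refToTheorem{subjectReductionShort}, then \refToLemma{decctx}, then \refToTheorem{canonicalForm} in the $\imm$ case, then \refToLemma{dec}). So on the second bullet you coincide with the paper, and on the first bullet your rule-by-rule case analysis is a conscientious expansion of what the paper asserts in a single line.

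The caveat in your final paragraph is the genuinely interesting part, because the worry is real and the paper's proof is silent about it. By the closure property, the free variables of $\stVal$ are declared $\capsule$ or $\imm$, and these are precisely the variables that \rn{capsule-elim} and \rn{alias-elim} substitute away in \emph{all} expressions in scope, including inside $\stVal$: for instance in \Q@{imm D z=new D(); imm D w=z; imm C x=new C(w); x}@ the right-value of \Q@x@ is \Q@new C(w)@, and one \rn{alias-elim} step turns it into \Q@new C(z)@, so the literal equality $\stVal=\stVal'$ fails. Also, your discharge of \rn{imm-move} assumes the local store of $\stVal$ consists only of $\geq\mutable$ declarations; that shape comes from well-formedness of the enclosing declaration as a store ($\WFdv{\Dec{\Type{\imm}{\C}}{\x}{\stVal}}$), which the theorem does not assume --- it only assumes $\WFrv{\stVal}$ --- so \rn{imm-move} can still flatten $\stVal$ by extracting $\imm$ declarations from its local store. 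Hence what can actually be proved is exactly what you propose: invariance of $\x$'s right-value up to these qualifier-preserving substitutions and relocations, the only destructive rule \rn{field-assign} being excluded by your (and the paper's) receiver-qualifier argument. In short, your proof is correct in approach and more careful than the paper's own, at the price of proving a mildly reformulated equality, whereas the paper claims the literal one without ever confronting \rn{alias-elim}, \rn{capsule-elim}, or the move rules.
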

\begin{proof}
The first property is directly ensured by reduction rules, since rule \rn{field-assign} is only applicable on $\geq\mutable$ references (note that the progress theorem guarantees that reduction cannot be stuck for this reason). The second property can be proved analogously to \refToTheorem{capsule} above.\qed
\end{proof}

Again, the context can change during reduction. Consider for instance $\Immctx{\e}$ to be the following expression:
\begin{lstlisting}
mut C y= new C(0); imm C x= $\e$; mut C z= new C(y.f=1); ...
\end{lstlisting}
In this case, the initialization expression of $\x$ is firstly reduced to a right value, hence $\ImmctxP{\stVal}\equiv\Immctx{\stVal}$ is:
\begin{lstlisting}
mut C y= new C(0); imm C x= $\stVal$; mut C z= new C(y.f=1); ...
\end{lstlisting}
In the following reduction steps, the context can change, for instance $\ImmctxS{\stVal}$ can be:
\begin{lstlisting}
mut C y= new C(1); imm C x= $\stVal$; mut C z= new C(1); ...
\end{lstlisting}
However, the right-value $\stVal$ cannot be modified.

\section{Related work}\label{sect:related}

\paragraph{{Recovery}} {The works most closely most closely related to ours are those based on the \emph{recovery} notion, that is, the type system of Gordon et al. \cite{GordonEtAl12} and the Pony language  \cite{ClebschEtAl15}.} Indeed, the capsule property has many variants in the literature, such as \emph{isolated} \cite{GordonEtAl12}, \emph{uniqueness} \cite{Boyland10} and \emph{external uniqueness}~\cite{ClarkeWrigstad03}, \emph{balloon} \cite{Almeida97,ServettoEtAl13a}, \emph{island} \cite{DietlEtAl07}. 
However, before the work of Gordon et al. \cite{GordonEtAl12}, the capsule property was only ensured in simple situations, such as using a primitive deep clone operator, or composing subexpressions with the same property.

The important novelty of the type system of Gordon et al. \cite{GordonEtAl12} has been \emph{recovery}, that is, the ability to ensure properties (e.g., capsule or immutability) by keeping into account not only the expression itself but the way the surrounding context is used. {Notably,} an expression which does not use external mutable references is recognized to be a capsule. 
{In the Pony language \cite{ClebschEtAl15}  the ideas of Gordon et al. \cite{GordonEtAl12} are extended to a richer set of reference immutability permissions. In their terminology \texttt{value} is immutable, \texttt{ref} is mutable, \texttt{box} is similar to \emph{readonly} as often found in literature, different from our $\readable$ since it can be aliased. An ephemeral isolated reference \lstinline{iso^} is similar to a $\capsule$ reference in our calculus, whereas non ephemeral \texttt{iso} references offer destructive reads and are more
similar to isolated fields \cite{GordonEtAl12}. Finally, \texttt{tag} only allows object identity checks and \texttt{trn} (transition) is a subtype of \texttt{box} that can be converted to \texttt{value}, providing a way to create values without using isolated references. The last two qualifiers have no equivalent in our
work or in  \cite{GordonEtAl12}.}

Our {type system greatly enhances the recovery mechanism used in such previous work \cite{GordonEtAl12,ClebschEtAl15} by using lent references, and rules \rn{t-swap} and \rn{t-unrst}.} For instance, the examples in \refToFigure{TypingOne} and \refToFigure{TypingTwo} would be ill-typed in \cite{GordonEtAl12}. 

{A minor difference with the type systems of Gordon et al. \cite{GordonEtAl12} and Pony \cite{GordonEtAl12,ClebschEtAl15} is that we only allow fields to be $\mutable$ or $\imm$.
Allowing \emph{readonly} fields means holding a reference that is useful for observing but non making remote modifications. However, our type system supports the $\readable$ modifier rather than the \emph{readonly}, and the $\readable$ qualifier includes the $\lent$ restriction. Since something which is $\lent$ cannot be saved as part of a $\mutable$ object, $\lent$ fields are not compatible with the current design where objects are born $\mutable$. The motivation for supporting $\readable$ rather than \emph{readonly} is discussed in a specific point later.
Allowing $\capsule$ fields means that programs can store an externally unique object graph into the heap and decide later whether to unpack
 permanently or freeze the reachable objects.  This can be useful, but, as for $\readable$ versus \emph{readonly}, our opinion is that this power is hard to use for good, since
it requires destructive reads, as discussed in a specific point later. 
In most cases, the same expressive power can be achieved by having the
field as $\mutable$ and recovering the $\capsule$ property for the outer object.}

\paragraph{Capabilities}
 {In other proposals \cite{HallerOdersky10,CastegrenWrigstad16} types are compositions of one or more \emph{capabilities}. The modes of the capabilities in a type control how resources of that
type can be aliased. The compositional aspect of capabilities is an important difference
from type qualifiers, as accessing different parts of an object through different capabilities in the same type gives different properties. 
By using capabilities it is possible to obtain an expressivity which looks similar to our type system, even though with different sharing notions and syntactic constructs. For instance, the \emph{full encapsulation} notion in \cite{HallerOdersky10}\footnote{{This paper includes a very good survey of work in this area, notably explaining the difference between \emph{external uniqueness}~\cite{ClarkeWrigstad03} and \emph{full encapsulation}.}}, apart from the fact that sharing of immutable objects is not allowed, is equivalent to the guarantee of our $\capsule$ qualifier, while
our $\lent$ and their \Q|@transient| achieve similar results in different ways.}
Their model has a higher syntactic/logic overhead to explicitly  track regions.
As for all work before~\cite{GordonEtAl12}, objects need to be born \Q|@unique| and the type system 
permits to manipulate data preserving their uniqueness. With recovery~\cite{GordonEtAl12},
instead, we can forget about uniqueness, use normal code designed to work on conventional shared data, and then
recover the aliasing encapsulation property.

\paragraph{Destructive reads} Uniqueness can be enforced by destructive reads, i.e., assigning a copy of 
the unique reference to a variable an destroying the original reference, see
\cite{GordonEtAl12,Boyland10}. Traditionally, borrowing/fractional permissions~\cite{NadenEtAl12} are related to uniqueness  in the opposite way: a unique reference can be borrowed,
it is possible to track when all borrowed aliases are buried~\cite{Boyland01}, and then uniqueness can be recovered.
These techniques offers a sophisticate alternative to destructive reads. 
We also wish to avoid destructive reads. In our work, we ensure uniqueness by linearity, that is, by allowing at most
one use of a $\capsule$ reference.

In our opinion, programming with destructive reads is involved and hurts the correctness of the program, since it leads to the style of programming outlined below, where \Q@a.f@ is a unique/isolated field with destructive read.
\begin{lstlisting}
a.f=c.doStuff(a.f)//style suggested by destructive reads
\end{lstlisting}
The object referenced by \lstinline{a}{} has an \emph{unique/isolated} field \lstinline{f} containing an object \lstinline{b}.
This object \lstinline{b}{} is passed to a client \lstinline{c}{}, which can use (potentially modifying) it. A typical pattern is that the result of such computation is a reference to \lstinline{b}{}, which \lstinline{a}{} can then recover. This approach allows \emph{isolated} fields, as shown above, but has  a serious drawback:
an \emph{isolated} field can become unexpectedly not available (in the example, during execution of \lstinline{doStuff}{}), hence any object contract
involving such field can be broken.
{This can cause {subtle} bugs if \Q@a@ is in the reachable object graph of \Q@c@.}

In our approach, the  $\capsule$ qualifier cannot be applied to fields. Indeed, the ``only once'' use of capsule variables 
makes no sense on fields.
{However, we support the same level of control of the reachable object graph by passing mutable objects to clients as $\lent$, in order to control aliasing behaviour.
That is, the previous code can be rewritten} as follows:
\begin{lstlisting}
c.doStuff(a.f())//our suggested style
\end{lstlisting}
{where \Q@a.f()@ is a getter returning the field as $\lent$.
Note how, during the execution of \Q@doStuff@, \Q@a.f()@ is still available, and,} after the execution of \Q@doStuff@, the aliasing relation {for this field is the same as it was
before \Q@doStuff@ was called.}

\paragraph{Ownership} A {closely related} stream of research is that on \emph{ownership} (see an overview in~\cite{ClarkeEtAl13}) which, however, offers an {opposite} approach. In the ownership approach, it is provided a way to express and prove the ownership invariant\footnote{{Ownership invariant (owner-as-dominator):
Object $o_1$ is owned by object  $o_2$ if in the object graph $o_2$
is a dominator node for $o_1$;
that is, all paths from the roots of the graph (the stack variables)
to $o_1$ pass throw $o_2$.
Ownership invariant (owner-as-modifier):
Object $o_1$ is owned by object  $o_2$ if any field update over $o_1$
is initiated by $o_2$, that is, a call of a method of $o_2$ is present
in the stack trace.}}, which, however, is expected to be guaranteed by defensive cloning, as explained below. In our approach, instead, the capsule concept models an efficient \emph{ownership transfer}. In other words, when an object $\x$ is ``owned'' by another object $\y$, it remains always true that $\y$ can be only accessed only through $\x$, whereas the capsule notion is more dynamic: a capsule can be ``opened'', that is, assigned to a standard reference and modified, and then we can recover the original capsule guarantee. 

For example, assuming a graph with a list of nodes, and a constructor taking in input such list,
the following code establishes the ownership invariant using $\capsule$, and ensures that it cannot be violated using $\lent$.
\begin{lstlisting}
class Graph{
  private final NodeList nodes;
  private Graph(NodeList nodes){this.nodes=nodes; }

  static Graph factory(capsule NodeList nodes){
    return new Graph(nodes);
    }
  
  lent NodeList borrowNodes(mut){return nodes;}
}
\end{lstlisting}
Requiring the parameter of the \lstinline{factory}{} method to be a $\capsule$ guarantees that the list of nodes provided as argument is not referred from the external environment. 
The factory \emph{moves} an isolated portion of store as local store of the newly created object. 
Cloning, if needed, becomes responsibility of the client which provides the list of nodes to the graph. The getter tailors the exposure level of the private store. 

Without aliasing control ($\capsule$ qualifier),  in order to ensure ownership of its list of nodes, the {factory method} should clone the argument, since it comes from an external client environment.
This solution, called  \label{cloning} \emph{defensive cloning}~\cite{Bloch08}, is very popular in the Java community, but inefficient,
since it requires to duplicate the reachable object
graph of the parameter, until immutable nodes are
reached.\footnote{{In most languages, for owner-as-modifier defensive cloning is needed
only when new data is saved inside of an object, while for owner-as-dominator it is needed also when internal data are exposed.}}
Indeed, many programmers prefer to write {unsafe}
 code instead of using defensive cloning for efficiency reasons.
However, this unsafe approach is only possible when programmers have control of the client code, that is, they are not 
working in a library setting.
Indeed many important Java libraries (including the standard Java libraries) today
use defensive cloning to ensure ownership of their internal state.

As mentioned above, our approach is the opposite of the one offered by many ownership approaches, which provide a formal way to express  and prove the ownership invariant that, however, are expected to be guaranteed by defensive cloning. 
We, instead, model an efficient \emph{ownership transfer} through the capsule concept, then, 
duplication of memory, if needed, is performed on the client side\footnote{
Other work in literature supports ownership transfer, see for example~\cite{MullerRudich07, ClarkeWrigstad03}.
In literature it is however applied to uniquess/external uniqueness, thus not {the whole} reachable object graph is transfered.
}.

Moreover, depending on how we expose the owned data, we can closely model
both \emph{owners-as-dominators} (by providing no getter)
and \emph{owners-as-qualifiers} (by providing a \Q@read@ getter). In the example, the method \lstinline{borrowNodes}{} is an example of a $\lent$ getter, a third variant besides the two described on page \pageref{exposer}.  This variant is particularly useful in the case of a field which is owned, indeed, \Q@Graph@ instances can release the mutation control of their nodes without permanently {losing} the aliasing control.

In our approach all properties are deep. On the opposite side, most ownership approaches allows one to distinguish
subparts of the reachable object graph that are referred but not logically owned. This viewpoint has many advantages, for example the Rust language\footnote{\texttt{rust-lang.org}} leverages on ownership to control object deallocation without a garbage collector.
Rust employs a form of uniqueness that can be seen as a restricted ``owners-as-dominators" discipline.  
Rust lifetime parameters behave like additional ownership parameters~\cite{OstlundEtAl08}.

However, in most ownership based approaches 
it is not trivial to encode the concept of full encapsulation, while supporting (open) sub-typing and avoiding defensive cloning.
This depends on how any specific ownership approach entangles subtyping with 
 gaining extra ownership parameters
and extra references to global ownership domains.

\paragraph{Readable notion} Our $\readable$ qualifier is different from \emph{readonly} as used, e.g., in \cite{BirkaErnst04}.
 An object cannot be modified through a readable/readonly reference. However, 
$\readable$ also prevents aliasing.
As discussed in \cite{Boyland06}, readonly semantics can be easily misunderstood by
programmers. Indeed, some wrongly believe it means immutable, whereas the object denoted by a readonly reference can be modified through other references, while others do not realize that readonly data can still be saved in fields, and thus used as a secondary window to observe the change in the object state.
Our proposal addresses both problems, since we explicitly support the $\imm$ qualifier, hence it is more difficult for programmers to confuse the two concepts, and our $\readable$ (readonly + lent) data  cannot be saved in client's fields.

Javari~\cite{TschantzErnst05} also supports the \emph{readonly} type qualifier, and makes a huge design effort to support \emph{assignable} and \emph{mutable} fields, to have fine-grained readonly constraints.  The need of such flexibility is motivated by performance reasons.
In our design philosophy, we do not offer any way of breaking the invariants enforced by the type system. Since our invariants are very strong, we expect compilers to be able to perform optimization, thus recovering most of the efficiency lost to properly use immutable and readable objects.

\section{Conclusion}\label{sect:conclu}
The key contributions of the paper are:
\begin{itemize}
\item A powerful type system for control of mutation and aliasing in class-based imperative languages, providing: type qualifiers for restricting the use of references;  rules for {recovering} a less restrictive type at the price of {constraining} the use of other references; rules for temporarily {unconstrain such} references for typing subexpressions. 
\item A non standard operational model of the language, relying on the language's ability to represent {store}.
\end{itemize}

{We have extensively illustrated the former feature in \refToSection{examples}, here we briefly discuss the latter.}

{In our operational model, aliasing properties are directly expressed at the syntactic level. Notably, in a subterm $\e$ of a program, objects reachable from other parts of the program are simply those denoted by free variables in $\e$, whereas local objects are those denoted by local variables declared in $\e$. In other terms, \emph{the portion of memory only reachable from $\e$ is encoded in $\e$ itself}.  For instance, in}
\begin{lstlisting}
mut D y=new D(0); 
capsule C z = {mut D x = new D(1); new C(x,x)}
\end{lstlisting}
{it is immediately clear that the reference $\z$ denotes an isolated portion of memory, since its right-value is a closed expression.
In the conventional model, instead, memory is flat. For instance, the previous example would be modeled having three locations, say $\iota_y$, $\iota_z$, and $\iota_x$, all at the top-level, hence the fact that $\iota_x$ is not reachable from $\iota_y$ should be reconstructed inspecting the store.}

{In our opinion, this offers a very intuitive and simple understanding of when a subterm of a program can be safely typed $\capsule$, as we have exploited in the examples in \refToSection{typesystem}. More generally, we argue that our model is more adequate for didactic purposes since it does not rely on run-time structures that do not exist in the source program, see \cite{ServettoLindsay13} for an extended discussion on this. Another advantage is that, being the store encoded in terms, proofs can be done by structural induction, whereas in the traditional model invariants on the memory should be proved showing that \emph{all} locations satisfy a given property. This is especially important when using a proof assistant, such as Coq, which we plan as further work. }

{On the opposite side, a disadvantage of our model is the fact that it is possibly more esoteric for people used to the other one. Moreover, since isolation is encoded by scoping, some care is needed to avoid scope extrusion during reduction. Notably, a field access on a reference which points to an isolated portion of store (a block) cannot produce just a reference.}

{As mentioned in \refToSection{typesystem}, the type system presented in this paper, as those in previous work on recovery \cite{GordonEtAl12,ClebschEtAl15}, uses rules which are not syntax-directed, hence \emph{transparent} to the programmer, in the sense that they are applied when needed.  The programmer can simply rely on the fact that expressions are $\capsule$ or $\imm$, respectively, in situations where this is intuitively expected. Of course, the negative counterpart is that the type system is not algorithmic. This is due, as said above, to the presence of non structural rules, which are, besides standard subsumption, recovery rules \rn{t-capsule} and \rn{t-imm}, rules \rn{t-swap} and \rn{t-unrst} for swapping and unrestricting, and of the rule \rn{t-block} for blocks, where variables declared $\lent$ are assigned to groups in an arbitrary way.}

{We considered two different ways to provide an algorithmic version of the type system.}

{The first way is to modify the type system as it is usually done for the subsumption rule, so that non structural rules are applied, roughly, only when needed. We did not formally develop this solution, but the technique is applied in the prototype of L42, a novel programming language designed to support massive use of libraries\footnote{Description and prototype for the full language (in progress)  can be found at \texttt{L42.is}. For testing purposes a small step reduction closely resembling the one presented in this paper is also implemented.}. The current L42 prototype  is important as proof-of-evidence that the type system presented in this paper not only can be implemented, but also smoothly integrated with other features of a realistic language. 
For testing purposes a small step reduction closely resembling the one presented in this paper is also implemented.}

{The other way is a different design of the type system based on type inference. The basic idea is to compute the sharing relations possibly introduced by the evaluation of an expression. A preliminary version of this approach is in \cite{GianniniEtAl17}.}

A  first, informal, presentation of the type modifiers in this paper has been given in \cite{ServettoEtAl13a}.
In \cite{ServettoZucca15} we have provided a formal type system including the $\capsule$ and $\lent$ modifiers, and a preliminary version of 
capsule {recovery}. In \cite{GianniniEtAl16} we have added  the immutable and readable modifiers, and the {immutability recovery}. This paper largely extends \cite{GianniniEtAl16}. The novel contributions include reduction rules, more examples, theorems about the behaviour of modifiers, and proofs of results. 
Imperative calculi where the block construct models store have been introduced  in \cite{ServettoLindsay13,CapriccioliEtAl15}, and used in \cite{ServettoZucca15,GianniniEtAl16}.

{Concerning further work, it would be interesting to investigate how to extend mainstream languages, such as, e.g., Java, with our type qualifiers, and to develop implementations of the calculus presented in this paper, possibly using, e.g., Coq, to also be able to formalize and prove properties.}

{On the more theoretical side, the first direction we plan to explore is the relation between the non-algorithmic type system presented in this paper and the previously mentioned type system based on inference of sharing relations \cite{GianniniEtAl17}}. {We also plan to formally state and prove behavioural equivalence of the calculus with the conventional imperative model.} 

{The fact that our type system tracks requirements on the type context makes it a form of \emph{coeffect type system} in the sense of \cite{PetricekEtAl14}. We plan to investigate better the relation.}
We believe that the novel operational model presented in this paper has the potential of  achieving a better understanding of aliasing.   

As a long term goal, we also plan to investigate (a form of) Hoare logic on top of our model. We believe that the hierarchical structure of our memory representation should help local reasoning, allowing specifications and proofs to mention only the relevant portion of the memory, analogously to what is achieved by separation logic \cite{Reynolds02}.
Finally, it should be possible to use our approach to enforce safe parallelism, on the lines of~\cite{GordonEtAl12,ServettoEtAl13a}.

{\paragraph{Acknowledgement} 
{We are indebted to the anonymous TCS referees for the thorough job they did reviewing our paper and for their valuable suggestions which improved it greatly.}  We also thank the ICTCS'16 referees for their helpful comments on the preliminary version of the paper.

\section*{Bibliography}
\bibliographystyle{elsart-num-sort}
\bibliography{main}

 \appendix
\section{Auxiliary definitions}
\begin{small}
\noindent $\HB{\ctx}$:
\begin{quote}
$
\begin{array}{l}
\HB{\emptyctx}=\emptyset\\
{\HB{\Block{\dvs\,\Dec{\T}{\x}{\ctx}\,\decs}{\e}}=\HB{\ctx}\cup\dom{\dvs}\cup\dom{\decs}}\cup\{x\}\\
\end{array}
$
\end{quote}
\noindent
$\FV{\e}$:
\begin{quote}
$
\begin{array}{l}
\FV{\x}=\{\x\}\\
\FV{\FieldAccess{\e}{\f}}=\FV{\e}\\
\FV{\MethCall{\e_0}{\m}{\e_1,\ldots,\e_n}}=\FV{\e_0}\cup\ldots\cup\FV{\e_n}\\
\FV{\FieldAssign{\e}{\f}{\e'}}=\FV{\e}\cup\FV{\e'}\\
\FV{\ConstrCall{\C}{\e_1,\ldots,\e_n}}=\FV{\e_1}\cup\ldots\cup\FV{\e_n}\\
\FV{\Block\decs{\e}}=\FV{\decs}\cup(\FV{\e}\setminus\dom\decs)\\
\FV{\Dec{\T_1}{\x_1}{\e_1}\ldots\Dec{\T_n}{\x_n}{\e_n}}=
(\FV{\e_1}\cup\ldots\cup\FV{\e_n})\setminus\{\x_1,\ldots,\x_n\}
\end{array}
$
\end{quote}

\noindent
$\Subst{\e}{\val}{\x}$:
\begin{quote}
$
\begin{array}{l}
\Subst{\x}{\val}{\x}=\e\\
\Subst{\z}{\val}{\x}=\z\ \mbox{if}\ \z\neq\x\\
\Subst{\FieldAccess{\e}{\f}}{\val}{\x}=
\FieldAccess{\Subst{\e}{\val}{\x}}{\f}
\\
\Subst{\MethCall{\e_0}{\m}{\e_1,\ldots,\e_n}}{\val}{\x}=
\MethCall{\Subst{\e_0}{\val}{\x}}{\m}{\Subst{\e_1}{\val}{\x},\ldots,\Subst{\e_n}{\val}{\x}}
\\
\Subst{(\FieldAssign{\e}{\f}{\e'})}{\val}{\x}=
\FieldAssign{\Subst{\e}{\val}{\x}}{\f}{\Subst{\e'}{\val}{\x}}
\\
\Subst{\ConstrCall{\C}{\e_1,\ldots,\e_n}}{\val}{\x}=
\ConstrCall{\C}{\Subst{\e_1}{\val}{\x},\ldots,\Subst{\e_n}{\val}{\x}}\\
\Subst{\Block{\decs}{\e}}{\val}{\x}=
\Block{\Subst{\decs}{\val}{\x}}{\Subst\e{\val}{\x}}
\mbox{ if }\x\notin\dom\decs
\\
\Subst{\Block\decs{\e}}{\val}{\x}=
\Block\decs{\e}
\mbox{ if }\x\in\dom\decs\\
{\Subst{(\Dec{{\T}_1}{\x_1}{\e_1}\ldots\Dec{{\T}_n}{\x_n}{\e_n})}{\val}{\x}=
\Dec{{\T}_1}{\x_1}{\Subst{\e_1}{\val}{\x}}\ldots\Dec{{\T}_n}{\x_n}{\Subst{\e_n}{\val}{\x}}}
\end{array}
$
\end{quote}
\end{small}

\section{Proofs omitted from \refToSection{results}}\label{app:proofs}
%
\label{sect:proof-cf}

\noindent
{\bf Proof of \refToLemma{nonStructural}}.
By induction on type derivations.\\
If the last rule in $\deriv$ is a structural rule, then   $\deriv'=\deriv$.\\
If the last rule in $\deriv$ is \rn{T-sub}, by induction hypothesis on 
the sub-derivation of the premise of \rn{T-sub} and transitivity of $\leq$ we derive the result.\\
Let $\deriv$ end with an application of \rn{T-Capsule} of \rn{T-Imm}, then
\begin{center}
$
\deriv_p:\AuxTypeCheck{{\Gamma}}{\LentLocked\ (\domMut{\Gamma}{\setminus}\LentLocked)}{\emptyset}{\StronglyLocked''}{\e}{\Type{\mu'}{\C}}
$
\end{center}
where $\StronglyLocked''=\StronglyLocked$ or $\StronglyLocked''=\domGeqMut(\Gamma)$, is the sub-derivation of the premise of the last rule. Note that the mutable group is always empty. By induction hypothesis on 
$\deriv_p$, there is a sub-derivation of $\deriv'$ of $\deriv_p$
such that $\deriv':\AuxTypeCheck{{\Gamma}}{\LentLocked'}{\MutGroup'}{\StronglyLocked'}{\e}{\Type{\mu''}{\C}}$ ends with the application of a structural rule, and 
$\LentLocked\ (\domMut{\Gamma}{\setminus}\LentLocked)=\LentLocked'\ \MutGroup'$. This proves the result. \\
If the last rule applied is \rn{T-Unrst} 
$\AuxTypeCheck{{\Gamma}}{\LentLocked}{\MutGroup}{\emptyset}{\e}{\Type{\mu}{\C}}$, by induction hypothesis on 
the sub-derivation of the premise of the rule, we derive the result. \\
If the last rule applied is \rn{T-Swap},  let $\LentLocked=\LentLocked_1\ \xs_1$ 
\begin{center}
$\deriv_p:\AuxTypeCheck{\Gamma}{\LentLocked_1\ \xs}{\MutGroup_1}{\StronglyLocked}{\e}{\Type{\mu'}{\C}}$
\end{center}
By induction hypothesis on $\deriv_p$ there is a sub-derivation $\deriv'$
such that $\deriv':\AuxTypeCheck{{\Gamma}}{\LentLocked'}{\MutGroup'}{\StronglyLocked'}{\e}{\Type{\mu''}{\C}}$ ends with the application of a structural rule,  
 $\LentLocked_1\ \xs\ \xs_1=\LentLocked'\ \xs'$, and either $\mu'=\capsule$ or $\mu'=\imm$ or $\mu''\leq\mu'$. Since $\LentLocked=\LentLocked_1\ \xs_1$
 we have that $\LentLocked\ \xs=\LentLocked'\ \xs'$, moreover, since $\mu'\leq\mu$ by transitivity of 
 $\leq$ we have the result. \qed

\medskip
\noindent
{\bf Proof of \refToLemma{typeVars}}. 
By induction on the derivation $\TypeCheck{\Gamma}{\LentLocked}{\StronglyLocked}{{\x}}{\T}$.
Consider the last rule applied in the derivation.\\
If the last rule is \rn{T-var}, then either $\T=\Gamma(x)$ or $\T=\Type{\lent}{\C}$ if $\Gamma(x)=\Type{\mutable}{\C}$ and $\x\in\LentLocked$. 
Therefore, in both cases $\Gamma(x)\leq\T$.
Otherwise the last rule applied must be a non-structural rule.\\
If the last rule applied is \rn{T-sub}, the result follows by induction hypothesis, and
transitivity of $\leq$.\\
If the last rule applied is \rn{T-swap}, then $\TypeCheck{\Gamma}{\LentLocked'}{\StronglyLocked}{{\x}}{\T'}$, where 
$\T'\leq\T$. Therefore, again by induction induction hypothesis, and
transitivity of $\leq$ we get the result.\\
If the last rule applied is \rn{T-unrst} the result follows by induction hypothesis.\\
If the last rule applied is \rn{T-Capsule}, then
$\TypeCheck{\Gamma}{\LentLocked\ \xs}{\StronglyLocked}{{\x}}{\Type{\mutable}{\C}}$
where $\xs=\domMut{\Gamma}{\setminus}\LentLocked$, and $\T=\Type{\capsule}{\C}$.
By induction hypothesis, $\Gamma(x)\leq\Type{\mutable}{\C}$. So, either 
$\Gamma(x)=\Type{\capsule}{\C}$ or $\Gamma(x)=\Type{\mutable}{\C}$. In the first case
$\Gamma(x)=\T$. The second case is not possible, since $\x\in\LentLocked\ \xs$, and therefore 
$\TypeCheck{\Gamma}{\LentLocked\ \xs}{\StronglyLocked}{{\x}}{\Type{\lent}{\C}}$,
so rule \rn{T-Capsule} cannot be applied. \\
Finally, if the last rule applied is \rn{T-Imm}, it must be that
$\TypeCheck{\Gamma}{\LentLocked\ \xs}{\domGeqMut(\Gamma) }{\x}{\Type{{\readable}}{\C}}$.
So, $\x\not\in\domGeqMut(\Gamma)$, and therefore $\Gamma(x)=\Type{\capsule}{\C}$ or 
$\Gamma(x)=\Type{\imm}{\C}$. In both cases $\Gamma(x)\leq\T=\Type{\imm}{\C}$.\\
Let $\TypeCheck{\Gamma}{\LentLocked}{\StronglyLocked}{{\x}}{\Type{\mu}{\C}}$ with
$\mu\leq\imm$. We have that $\Gamma(x)\leq\Type{\mu}{\C}$. Therefore, if
$\LentLocked'$ and $\StronglyLocked'$ are such that  $\WellFormedTypeCtx{\Gamma};{\LentLocked'};\StronglyLocked'$,
then $\x\not\in\LentLocked'$ and $\x\not\in\StronglyLocked'$. So 
$\TypeCheck{\Gamma}{\LentLocked'}{\StronglyLocked'}{{\x}}{\Type{\mu}{\C}}$.\qed

\medskip
\noindent
{\bf Proof of \refToLemma{inversionBlock}}. 
Let $\deriv:\AuxTypeCheck{{\Gamma}}{\LentLocked}{\MutGroup}{\StronglyLocked}{\Block{\decs}{\val}}{\Type{\mu}{\C}}$.
From \refToLemma{nonStructural} there is a sub-derivation $\deriv':\AuxTypeCheck{{\Gamma}}{\LentLocked''}{\MutGroup''}{\StronglyLocked''}{\Block{\decs}{\val}}{\Type{\mu'}{\C}}$ of $\deriv$ which ends with an application of rule \rn{T-Block}
such that  
\begin{enumerate}[(a)]
\item $\LentLocked\ \xs=\LentLocked''\ \xs''$ and
\item $\mu\not=\imm$ and $\mu\not=\capsule$ implies $\mu'\leq\mu$.
\end{enumerate}
 From rule \rn{T-block} and \refToLemma{typeStruct}.1 we have that
\begin{enumerate}[(a)]\addtocounter{enumi}{2}
\item $\DecsOK{\SubstFun{\Gamma}{\TypeEnv{\decs}}}{\LentLocked'}{\StronglyLocked'}{\decs}$ and
 $\TypeCheck{{\SubstFun{\Gamma}{\TypeEnv{\decs}}}}{{\LentLocked'}}{\StronglyLocked'}{\val}{\Type{\mu'}{\C}}$
where
  \item $\LessEq{(\ \LentLocked''\ \xs''\ ){\setminus}\dom{\TypeEnv{\decs}}}{\LentLocked'\ \xs'}$.
\end{enumerate}
From (a) and (d) we derive that $\LessEq{(\ \LentLocked\ \xs\ ){\setminus}\dom{\TypeEnv{\decs}}}{\LentLocked'\ \xs'}$ which, with (b) and (c), proves the result.\qed

\medskip
\noindent
{\bf Proof of \refToLemma{constrMut}}.
\begin{enumerate}
\item
From \refToLemma{typeStruct} we have that $\AuxTypeCheck{\Gamma}{\LentLocked}{\MutGroup}{\StronglyLocked}{\ConstrCall{\C}{\z_1,\ldots,\z_n}}{\Type{\mutable}{\C}}$ is derived applying rule \rn{T-New}. Therefore  $\AuxTypeCheck{\Gamma}{\LentLocked}{\MutGroup}{\StronglyLocked}{\z_i}{\Type{\mu_i}{\C_i}}$ $\Range{i}{1}{n}$.
If $\T_i=\mutable$, then $\z_i\not\in\LentLocked$ and therefore $\z_i\in\xs$. If $\T_i=\imm$, from 
\refToLemma{typeVars}, we have $\AuxTypeCheck{\Gamma}{\LentLocked}{\MutGroup}{\StronglyLocked}{\z_i}{\Type{\mu}{\C}}$ with $\mu\leq\imm$.
\item 
Let  $\deriv:\AuxTypeCheck{\Gamma}{\LentLocked}{\emptyset}{\domGeqMut(\Gamma) }{\ConstrCall{\C}{\z_1,\ldots,\z_n}}{\Type{\readable}{\C}}$. There is a sub-derivation $\deriv'$ of $\deriv$ ending with
an application of \rn{T-New}, which can be either the premise of rule \rn{T-Sub} or of rule \rn{T-Swap}.\\
In the first case, $\AuxTypeCheck{\Gamma}{\LentLocked}{\emptyset}{\domGeqMut(\Gamma) }{{\ConstrCall{\C}{\xs}}}{\Type{\mutable}{\C}}$. 
Therefore, from clause 1. of this lemma, since the current mutable group is empty, we have that
for all $i\in1..n$, $\TypeCheck{\Gamma}{\LentLocked}{\StronglyLocked}{\z_i}{\Type{\imm}{\C_i}}$.\\
In the second case,
$\AuxTypeCheck{\Gamma}{\LentLocked'}{\MutGroup'}{\domGeqMut(\Gamma) }{{\ConstrCall{\C}{\z_1,\ldots,\z_n}}}{\Type{\mutable}{\C}}$ for some 
$\LentLocked'\ \xs'=\LentLocked$.  Assume that, for some $i$, $1\leq i\leq n$,  $\T_i=\Type{\mu_i}{\C_i}$ with
$\mu_i\geq\mutable$, and therefore, $\z_i\in\domGeqMut(\Gamma)$. 
Rule \rn{T-Var} is not be applicable to derive a type for $\z_i$. (Rule \rn{T-UnRst} is not applicable to unlock
variables, since the variable has type $\Type{\mu_i}{\C_i}$ with $\mu_i\geq\mutable$.)
Therefore, for all $i\in1..n$, we have that
$\Gamma(\z_i)=\Type{\mu}{\C_i}$ with $\mu\leq\imm$. 
\end{enumerate}
\qed

\medskip
\noindent
{\bf Proof of \refToLemma{blockMut}}. Let  $\AuxTypeCheck{\Gamma}{\LentLocked}{\MutGroup}{\StronglyLocked}{\Block{\dvs}{\cOrx}}{\Type{\mutable}{\C}}$ where $\cOrx=\x$ or $\cOrx=\ConstrCall{\C}{\_}$. From \refToLemma{inversionBlock}
\begin{enumerate}[(a)]
\item  $\AuxTypeCheck{{\SubstFun{\Gamma}{\TypeEnv{\dvs}}}}{{\LentLocked'}}{\MutGroup'}{\StronglyLocked'}{\cOrx}{\Type{\mu'}{\C}}$ 
\item $\AuxDecsOK{\SubstFun{\Gamma}{\TypeEnv{\dvs}}}{\LentLocked'}{\MutGroup'}{\StronglyLocked'}{\Dec{\Type{\mu_i}{\C_i}}{\z_i}{\stVal_i}}$  $\Range{i}{1}{n}$
\item $\LessEq{(\LentLocked\ \xs){\setminus}\dom{\TypeEnv{\dvs}}}{\LentLocked'\ \xs'}$ and
\item $\mu'\leq\mutable$.
\end{enumerate}
We may assume that $\mu'=\mutable$, since $\mu'=\capsule$ would imply that (a) was obtained by a \rn{T-Capsule} rule and so $\TypeCheck{{\SubstFun{\Gamma}{\TypeEnv{\dvs}}}}{{\LentLocked'\ xs'}}{\StronglyLocked'}{\cOrx}{\Type{\mutable}{\C}}$.\\
From the definition of well-formed right-values of 
\refToFigure{wellformed} we have that $\dvs=\Reduct{\dvs}{\FV{\cOrx}}$, that is $\Range{j}{1}{n}$ we have that $\connected{\dvs}{\FV{\cOrx}}{\z_j}$.\\
We want to prove that $\Range{j}{1}{n}$, if $\connected{\dvs}{\FV{\cOrx}}{\z_j}$ and 
$\SubstFun{\Gamma}{\TypeEnv{\dvs}}(z_j)=\Type{\mutable}{\C_j}$, then 
\begin{itemize}
\item $\z_j\in\xs'$, and
\item if $\z\in\FV{\stVal_j}$ and $\Gamma[\TypeEnv{\dvs}](z)=\Type{\mutable}{\D}$, then $\z\in\xs'$, otherwise 
 $\Gamma[\TypeEnv{\dvs}](z)=\Type{\mu}{\D}$ with $\mu\leq\imm$.
\end{itemize}
By definition of connected, either $\z_j\in\FV{\cOrx}$ or 
 $\z_j\in\FV{\stVal_i}$ and $\connected{\dvs}{\FV{\cOrx}}{\z_i}$.\\
If \underline{$\z_j\in\FV{\cOrx}$ and $\cOrx=\ConstrCall{\C}{\_}$} by (a), (d) and \refToLemma{constrMut}.1 we have that either $\z_j\in\xs'$
or $\Gamma[\TypeEnv{\dvs}](z_j)=\Type{\mu}{\D}$ with $\mu\leq\imm$. \\
If \underline{$\z_j\in\FV{\cOrx}$ and $\cOrx=\z_j$} then, from (a) and (d) we get $\z_j\in\xs'$ ($\z_j\in\LentLocked'$ would
imply $\AuxTypeCheck{{\SubstFun{\Gamma}{\TypeEnv{\dvs}}}}{{\LentLocked'}}{\MutGroup'}{\StronglyLocked'}{\cOrx}{\Type{\lent}{\C}}$ ). \\
If \underline{$\z_j\in\FV{\stVal_i}$ and $\connected{\dvs}{\FV{\cOrx}}{\z_i}$}, by inductive hypothesis
on the connection relation, we have that 
\begin{enumerate}[(1)]
\item $\z_i\in\xs'$, and
\item if $\z\in\FV{\stVal_i}$ and $\Gamma[\TypeEnv{\dvs}](z)=\Type{\mutable}{\D}$, then $\z\in\xs'$, otherwise 
 $\Gamma[\TypeEnv{\dvs}](z)=\Type{\mu}{\D}$ with $\mu\leq\imm$.
\end{enumerate}
From (b) and Definition \ref{def:wellTypedDefsNew} we derive that
$\AuxTypeCheck{\Gamma[\TypeEnv{\dvs}]}{\LentLocked}{\xs'}{\StronglyLocked}{\stVal_i}{\Type{\mutable}{\C_i}}$.
If  $\z_j$ is such that $\SubstFun{\Gamma}{\TypeEnv{\dvs}}(z_j)=\Type{\mutable}{\C_j}$, from (2), $\z_j\in\xs'$.
Again, from (b) and Definition \ref{def:wellTypedDefsNew} we derive that
$\AuxTypeCheck{\Gamma[\TypeEnv{\dvs}]}{\LentLocked}{\xs'}{\StronglyLocked}{\stVal_j}{\Type{\mutable}{\C_j}}$.
Let $\z\in\FV{\stVal_j}$. If $\stVal_j=\ConstrCall{\D}{\_}$
then from \refToLemma{constrMut}.1 we derive the result. 
If $\stVal_j=\Block{\dvs'}{\cOrx'}$, then by inductive
hypothesis on $\Block{\dvs'}{\cOrx'}$ we have that either $\z\in\xs'$ or $\Gamma[\TypeEnv{\dvs}](z)=\Type{\mu}{\D}$ with $\mu\leq\imm$.
\qed

\label{sect:proof-sound}

\noindent
{\bf Proof of \refToLemma{subCtx}}. By induction on $\genCtx$. 
For $\genCtx=\emptyctx$ clause 1 derives from \refToLemma{nonStructural}, and clause 2 is obvious. \\
If $\genCtx=\Block{\decs'\ \Dec{\T}{\x}{\genCtxP}\ \decs''}{\val}$, then $\TypeCheck{\Gamma}{\LentLocked}{\StronglyLocked}{\Block{\decs'\ \Dec{\T_x}{\x}{\GenCtxP{\e}}\ \decs''}{\val}}{\T}$. Let $\decs$ be $\decs'\ \Dec{\T_x}{\x}{\GenCtxP{\e}}\ \decs''$.  From \refToLemma{inversionBlock}, we have that $\DecsOK{\SubstFun{\Gamma}{\TypeEnv{\decs}}}{\LentLocked'} {\StronglyLocked'}{\decs}$ where
\begin{enumerate}[(a)]
  \item $\LessEq{(\ \LentLocked\ \xs\ ){\setminus}\dom{\TypeEnv{\decs}}}{\LentLocked'\ \xs'}$.
\end{enumerate}
From $\DecsOK{\SubstFun{\Gamma}{\TypeEnv{\decs}}}{\LentLocked'}{\StronglyLocked'}{\decs}$ we get 
$\DecsOK{\SubstFun{\Gamma}{\TypeEnv{\decs}}}{\LentLocked'}{\StronglyLocked'}{\Dec{\Dec{\Type{\mu_x}{\C_x}}{\x}{\GenCtxP{\e}}}{\x}{\GenCtxP{\e}}}$.\\
We have two cases: either $\mu_x\not=\lent$ and
\begin{itemize}
\item $\deriv_x:\AuxTypeCheck{\SubstFun{\Gamma}{\TypeEnv{\decs}}}{\LentLocked'}{\MutGroup'}{\StronglyLocked'}{{\GenCtxP{\e}}}{\SubstFun{\Gamma}{\TypeEnv{\decs}}(x)}$
\end{itemize}
 or $\mu_x\not=\lent$ and
\begin{itemize}
\item $\deriv'_x:\AuxTypeCheck{\SubstFun{\Gamma}{\TypeEnv{\decs}}}{\LentLocked''}{\MutGroup''}{\StronglyLocked'}{{\GenCtxP{\e}}}{\SubstFun{\Gamma}{\TypeEnv{\decs}}(x)}$ with $\x\in\xs''$ and 
\item $\LentLocked'\ \xs'=\LentLocked''\ \xs''$.
\end{itemize}
In both cases, applying the inductive hypothesis to $\genCtxP$ we derive that 
$\deriv':\TypeCheck{\SubstFun{\SubstFun{\Gamma}{\TypeEnv{\decs}}}{\TypeEnv{\ctx}}}{\LentLocked'''}{\StronglyLocked'''}{{\e}}{\T''}$ for some
$\LentLocked'''$, $\StronglyLocked'''$ and $\T''$ such that 
\begin{enumerate} [(A)]
  \item $\LessEq{(\LentLocked'\ \xs'=\LentLocked''\ \xs''){\setminus}\dom{\TypeEnv{\genCtxP}}}{\LentLocked'''\ \xs'''}$, and
  \item the last rule applied in $\deriv'$ is a structural rule.
\end{enumerate}
Since $\TypeEnv{\genCtx}=\SubstFun{\TypeEnv{\decs}}{\TypeEnv{\genCtxP}}$, from (a), (A) and transitivity of $\Extends{}{}{}$ we derive 
\begin{center}
$\LessEq{(\LentLocked\ \xs){\setminus}\dom{\TypeEnv{\genCtx}}}{\LentLocked'''\ \xs'''}$.
\end{center}
This proves clause 1.\\
Let $\e'$ is such that 
$\deriv'':\TypeCheck{\SubstFun{\SubstFun{\Gamma}{\TypeEnv{\decs}}}{\TypeEnv{\ctx}}}{\LentLocked''}{\StronglyLocked''}{{\e'}}{\T''}$. By induction hypothesis on $\genCtxP$, substituting $\deriv'$ with $\deriv''$ in $\deriv_x$ (or $\deriv'_x$) we obtain $\DecsOK{\SubstFun{\Gamma}{\TypeEnv{\decs}}}{\LentLocked'}{\StronglyLocked'}{\Dec{\Dec{\Type{\mu_x}{\C_x}}{\x}{\GenCtxP{\e}}}{\x}{\GenCtxP{\e}}}$. Applying rule \rn{T-block} we get  a derivation for
$\TypeCheck{\Gamma}{\LentLocked}{\StronglyLocked}{\Block{\decs'\ \Dec{\T_x}{\x}{\GenCtxP{\e'}}\ \decs''}{\val}}{\T}$,
which proves clause 2.\\
The proofs for the other general contexts similar and easier.
 \qed

\medskip
\noindent
{\bf Proof of \refToLemma{fieldAccess}}.  
\begin{enumerate}
  \item From \refToLemma{subCtx}.1, taking $\genCtx=\FieldAccess{\emptyctx}{\f_i}$, we have that 
  \begin{enumerate} [(a)]
\item  $\AuxTypeCheck{\Gamma}{\LentLocked'}{\MutGroup'}{\StronglyLocked'}{{\ConstrCall{\C}{\xs}}}{\Type{\mutable}{\C}}$ (ending with with an application of \rn{T-New}) and
\item 
 $\LentLocked\ \xs=\LentLocked'\ \xs'$.
\end{enumerate}
From (a) we get $\TypeCheck{{\TypeEnv{\ctx}}}{\LentLocked'}{\StronglyLocked'}{\x_i}{\Type{\mu_i}{\C_i}}$.\\
If $\mu_i=\imm$, then from \refToLemma{typeVars} we have that $\TypeCheck{\Gamma}{\LentLocked}{\StronglyLocked}{\x_i}{\Type{\mu_i}{\C_i}}$. From rule \rn{T-Field-Access} derive that $\mu=\mu_i$. Therefore, from (b) we get $\TypeCheck{\Gamma}{\LentLocked}{\StronglyLocked}{\x_i}{\Type{\mu}{\C_i}}$.\\
Let $\mu_i=\mutable$. If $\x_i\in\xs$, from (a) we derive  $\x_i\in\xs'$. Therefore, from (b)
$\LentLocked=\LentLocked'$. Since $\mu\geq\mutable$ (with an application of
\rn{T-Sub} if needed) we derive  $\TypeCheck{{\TypeEnv{\ctx}}}{\LentLocked}{\StronglyLocked}{\x_i}{\Type{\mu}{\C_i}}$.
If $\x_i\in\LentLocked$, then before rule \rn{T-New} we have to apply \rn{T-Swap}  (to exchange the group containing $\x_i$ with $\xs$). Since rule \rn{T-New} produces a type with $\mutable$ modifier, applying rule \rn{T-Swap} 
we get a $\lent$ modifier. Therefore from (b) also $\mu=\lent$ and
$\TypeCheck{\Gamma}{\LentLocked}{\StronglyLocked}{\x_i}{\Type{\mu}{\C_i}}$. 
 
 \item If $\stVal=\ConstrCall{\C}{\xs}$, the result follows from clause 1 of this lemma. \\
 Let $\stVal=\Block{\dvs}{\cOrx}$ where 
$\dvs=\Dec{\T_1}{\z_1}{\stVal_1}\cdots\Dec{\T_n}{\z_n}{\stVal_n}$, and  
$\cOrx=\ConstrCall{\C}{\xs}$ or $\cOrx=x$ and $\T_i=\Type{\mu_i}{\C_i}$.
If $\AuxTypeCheck{\Gamma}{\LentLocked}{\MutGroup}{\StronglyLocked}{\FieldAccess{\stVal}{\f_i}}{\Type{\mu'}{C_i}}$,
from rule \rn{T-Field-Access}, we have that 
  \begin{enumerate} [(a)]
\item  $\AuxTypeCheck{\Gamma}{\LentLocked'}{\MutGroup'}{\StronglyLocked'}{\Block{\dvs}{\cOrx}}{\Type{\mu}{\C}}$, and
\item if $\mu_i\neq\mutable$ then $\mu'=\mu_i$, otherwise $\mu'=\mu$
\end{enumerate}
From \refToLemma{inversionBlock} we have
  \begin{enumerate} [(a)]\addtocounter{enumii}{2}
\item  $\AuxDecsOK{\SubstFun{\Gamma}{\TypeEnv{\decs}}}{\LentLocked'}{\MutGroup'}{\StronglyLocked'}{\dvs}$
\item  $\AuxTypeCheck{{\SubstFun{\Gamma}{\TypeEnv{\dvs}}}}{{\LentLocked'}}{\MutGroup'}{\StronglyLocked'}{\cOrx}
{\Type{\mu''}{\C}}$ 
\item $\LessEq{(\LentLocked\ \xs){\setminus}\dom{\TypeEnv{\decs}}}{\LentLocked'\ \xs'}$ and
\item $\mu\not=\imm$ and $\mu\not=\capsule$ implies $\mu''\leq\mu$. 
\end{enumerate}
If $\cOrx=\ConstrCall{\C}{\xs}$, then  $\fieldOf{\stVal}{i}=\Block{\dvs}{\x_i}$.
From (b), (c) and rule \rn{T-Block}
 \begin{enumerate} [(a)]\addtocounter{enumii}{6}
\item  $\AuxTypeCheck{\Gamma}{\LentLocked'}{\MutGroup'}{\StronglyLocked'}{\Block{\dvs}{\x_i}}{\Type{\mu_i}{\C_i}}$ 
\end{enumerate}
If \underline{$\mu_i\neq\mutable$}, from (b) the result holds.\\
If \underline{$\mu_i=\mutable$}, and $\mu\leq\imm$, then (a) is obtained by either rule \rn{T-Capsule} or \rn{T-imm}. 
The same recovery rule can be applied to (g). If $\mu\geq\mutable$, result is obtained by (g) applying rule \rn{T-sub},
if needed.\\
If $\cOrx=\x$, then  $\fieldOf{\stVal}{i}=\Block{\dvs}{\val}$ if $\dvs(\x)=\Dec{\_}{\x}{\stVal'}$ and $\fieldOf{\stVal'}{i}=\val$.
In this case the result follow by induction on $\stVal'$.
\item By case analysis on the result of $\fieldOf{\stVal}{i}$, using the proof of the previous case we can see that, typing depends only on $\Gamma$ and therefore the result holds.
\end{enumerate}
\qed

\medskip
\noindent
{\bf Proof of \refToLemma{fieldAssign}}.
From rule  \rn{T-Field-Assign}, we have that
$\AuxTypeCheck{{\Gamma}}{\LentLocked}{\MutGroup}{\StronglyLocked}{\x}{\mutable\,\C}$ for some $\C$
and $\AuxTypeCheck{{\Gamma}}{\LentLocked}{\MutGroup}{\StronglyLocked}{\valPrime}{\Type{\mu}{\_}}$ where $\mu=\mutable$ or $\mu=\imm$. Therefore, from \refToLemma{typeVars} we have that $\Gamma(\x)\leq\Type{\mutable}{\C}$, and 
$\x\in\xs$.
Since $\mu=\mutable$ or $\mu=\imm$, from Canonical Form Theorem.2 and 3 we have that for all  $\y\in\FV{\valPrime}$, $\y\in\xs$ or $\TypeCheck{{\Gamma}}{\LentLocked}{\StronglyLocked}{\y}{\Type{\mu}{\_}}$ with $\mu\leq\imm$.
\qed

\medskip
\noindent
{\bf Proof of  \refToTheorem{subjectReductionShort}} (rules \rn{field-assign}, and \rn{mut-move})\\
Consider \underline{rule \rn{field-assign}}.
Let   $\decs'=\dvs\ \Dec{\Type{\mu_x}{\C}}{\x}{\ConstrCall{\C}{\xs}}\ \Dec{\T_z}{\z}{\ctx_z[{\FieldAssign{\x}{\f}{{\valPrime}}}]}\ \decs$, 
\begin{enumerate} [(1)]
\item $\e=\Ctx{\Block{\dvs\ \Dec{\Type{\mu_x}{\C}}{\x}{\ConstrCall{\C}{\xs}}\ \Dec{\T_z}{\z}{\ctx_z[{\FieldAssign{\x}{\f}{{\valPrime}}}]}\ \decs}{\val}}$, and 
\item $\e'=\Ctx{\Block{\dvs\ \Dec{\Type{\mu_x}{\C}}{\x}{\ConstrCall{\C}{\xs'}}\ \Dec{\T_z}{\z}{\ctx_z[\valPrime]}\ \decs}{\val}}$, 
\end{enumerate}
where $\fields{\C}=\Field{\T_1}{\f_1}\ldots\Field{\T_n}{\f_n}$ with $\f=\f_i$ and
\begin{enumerate}[(1)]\addtocounter{enumi}{2}
\item ${\mu\geq\mutable}$,  
\item ${\noCapture{\x}{\HB{\ctx_z}},\noCapture{\valPrime}{\HB{\ctx_z}}}$ and
\item $\xs'$ is obtained by $\xs$ replacing $\x_i$ with $\valPrime$.
\end{enumerate}
Let $\Gamma'=\TypeEnv{\ctx}[\TypeEnv{\decs'}]$, from (1) and Lemma \ref{lemma:subCtx}.1 for some
$\T'$, $\LentLocked$ and $\StronglyLocked$
\begin{enumerate} [(a)]
\item $\AuxTypeCheck{\Gamma'[\TypeEnv{\ctx_z}]}{\LentLocked}{\MutGroup}{\StronglyLocked}{\FieldAssign{\x}{\f}{{\valPrime}}}{\T'}$ and
  \item the last rule applied in the derivation is \rn{T-Field-assign}.  
 \end{enumerate}
From (1), \refToLemma{subCtx}.1 and \refToLemma{inversionBlock}, for some $\LentLocked'$ and $\StronglyLocked'$
\begin{enumerate} [(a)]\addtocounter{enumi}{2}
\item $\AuxDecsOK{\Gamma'}{\LentLocked'}{\MutGroup'}{\StronglyLocked'}{\Dec{\Type{\mu_x}{\C}}{\x}{\ConstrCall{\C}{\xs}}}$, i.e., 
\begin{enumerate}[i.]
\item $\AuxTypeCheck{\Gamma'}{\LentLocked'}{\MutGroup'}{\StronglyLocked'}{\ConstrCall{\C}{\xs}}{\Gamma'(\x)}$ if $\mu_x\not=\lent$
\item $\AuxTypeCheck{\Gamma'}{\LentLocked_x}{\MutGroup_x}{\StronglyLocked'}{\ConstrCall{\C}{\xs}}{\Gamma'(\x)}$ if $\mu_x=\lent$ and $\LentLocked'\ \xs'=\LentLocked_x\ \xs_x$ where $\x\in\xs_x$
\end{enumerate}
\item 
 $\LessEq{(\LentLocked'\ \xs'){\setminus}\dom{\TypeEnv{\ctx_z}}}{\LentLocked\ \xs}$.
\end{enumerate}
From (a) and rule \rn{T-Field-filed -Assign} we get $\T'=\T_i$
\begin{enumerate} [(a)]\addtocounter{enumi}{4}
\item $\AuxTypeCheck{\Gamma'[\TypeEnv{\ctx_z}]}{\LentLocked}{\MutGroup}{\StronglyLocked}{{{\valPrime}}}{\T'}$
\item $\x\in\xs$, and
\item  from \refToLemma{fieldAssign},  and (4): for all
$\y\in\FV{\valPrime}$ such that $\TypeCheck{\Gamma'[\TypeEnv{\ctx_z}]}{\LentLocked}{\StronglyLocked}{\y}{\Type{\mu}{\_}}$ with $\mu\geq\mutable$ we have that $\y\in\xs$
\end{enumerate}
From (c).i and (c).ii and rule \rn{T-New}, we have that $\AuxTypeCheck{\Gamma'}{\LentLocked'}{\MutGroup'}{\StronglyLocked'}{\x_i}{\T'}$ (if $\mu_x\not=\lent$) or  $\AuxTypeCheck{\Gamma'}{\LentLocked_x}{\MutGroup_x}{\StronglyLocked'}{\x_i}{\T'}$ (otherwise). 
Note that, in both cases $\x$ is in the current mutable group. Therefore, from (d), (g),
(e), (4), and \refToLemma{weakening} we have that
$\AuxTypeCheck{\Gamma'}{\LentLocked'}{\MutGroup'}{\StronglyLocked'}{\valPrime}{\T'}$ (if $\mu_x\not=\lent$) or  $\AuxTypeCheck{\Gamma'}{\LentLocked_x}{\MutGroup_x}{\StronglyLocked'}{\valPrime}{\T'}$ (otherwise).
From rule \rn{T-New}, (5) and \refToLemma{subCtx}.2 
\begin{enumerate} [(a)]\addtocounter{enumi}{7}
\item $\AuxDecsOK{\Gamma'}{\LentLocked'}{\MutGroup'}{\StronglyLocked'}{\Dec{\Type{\mu_x}{\C}}{\x}{\ConstrCall{\C}{\xs'}}}$, 
\end{enumerate}
Let $\decs''=\dvs\ \Dec{\Type{\mu_x}{\C}}{\x}{\ConstrCall{\C}{\xs'}}\ \Dec{\T_z}{\z}{\ctx_z[{\FieldAssign{\x}{\f}{{\valPrime}}}]}\ \decs$, $\ctxP=\ctx[\TypeEnv{\decs'}[\ctx_z]]$, and $\ctx''=\ctx[\TypeEnv{\decs''}[\ctx_z]]$. By definition of type context 
$\TypeEnv{\ctx''}=\TypeEnv{\ctx'}=\Gamma'[\TypeEnv{\ctx_z}]$. Therefore, from (h), (a), and (e) 
\begin{itemize}
\item $\AuxTypeCheck{\TypeEnv{\ctx''}}{\LentLocked}{\MutGroup}{\StronglyLocked}{\FieldAssign{\x}{\f}{{\valPrime}}}{\T'}$, and 
\item $\AuxTypeCheck{\TypeEnv{\ctx''}}{\LentLocked}{\MutGroup}{\StronglyLocked}{{{\valPrime}}}{\T'}$.
\end{itemize}
From \refToLemma{subCtx}.2 we get $\TypeCheckGround{\e'}{\T}$.

\medskip\noindent
Consider \underline{rule \rn{mut-move}}.
In this case  
\begin{enumerate}[(1)]
\item $\e=\Ctx{\Block{\dvs'\ \Dec{\Type{\mu_x}{\C}}{\x}{\Block{\dvs\ \dvs''}{\val}}\ \decs'}{\val'}}$, and 
\item $\e'=\Ctx{\Block{\dvs'\ \dvs\ \Dec{\Type{\mu_x}{\C}}{\x}{\Block{\dvs''}{\val}}\ \decs'}{\val'}}$, 
\end{enumerate}
where 
\begin{enumerate}[(1)]\addtocounter{enumi}{2}
\item $\mu_x \geq \mutable$,  
\item $\noCapture{\Block{\dvs'\ \decs'}{\val'}}{\dom{\dvs}}$ and 
\item $\noCapture{\dvs}{\dom{\dvs''}}$. 
\end{enumerate}
Let $\decs=\dvs'\ \Dec{\Type{\mu}{\C}}{\x}{\Block{\dvs\ \dvs''}{\val}}\ \decs'$. From $\IsWellTyped{{\e}:\T}$ and \refToLemma{subCtx} we get that, for some $\T_b$, $\LentLocked''$, $\xs''$, and $\StronglyLocked''$,
\begin{itemize}
  \item [$(\ast)$]$\AuxTypeCheck{\TypeEnv{\ctx}}{\LentLocked''}{\MutGroup''}{\StronglyLocked''}{\Block{\decs}{\val'}}{\T_b}$
\end{itemize}
Let $\Gamma=\TypeEnv{\dvs'},\TypeEnv{\decs'}\TypeDec{\Type{\mu}{\C}}{x}$. 
From $(\ast)$ and  \refToLemma{inversionBlock} for some $\T'$, $\mu'$, $\LentLocked'$, $\xs'$, and $\StronglyLocked'$
\begin{enumerate} [(A)]
\item $\AuxDecsOK{{\SubstFun{\TypeEnv{\ctx}}{\Gamma}}}{\LentLocked'}{\MutGroup'}{\StronglyLocked'}{\dvs'\ \decs'}$,
\item $\AuxDecsOK{{\SubstFun{\TypeEnv{\ctx}}{\Gamma}}}{\LentLocked'}{\MutGroup'}{\StronglyLocked'}{{\Block{\dvs\ \dvs''}{\val}}}$, i.e.
\begin{enumerate}[i.]
\item $\AuxTypeCheck{{\SubstFun{\TypeEnv{\ctx}}{\Gamma}}}{\LentLocked'}{\MutGroup'}{\StronglyLocked'}{{\Block{\dvs\ \dvs''}{\val}}}{{\SubstFun{\TypeEnv{\ctx}}{\Gamma}}(\x)}$ if $\mu_x\neq\lent$
\item $\AuxTypeCheck{{\SubstFun{\TypeEnv{\ctx}}{\Gamma}}}{\LentLocked_x}{\MutGroup_x}{\StronglyLocked'}{{\Block{\dvs\ \dvs''}{\val}}}{{\SubstFun{\TypeEnv{\ctx}}{\Gamma}}(\x)}$ if $\mu_x=\lent$ and $\LentLocked'\ \xs'=\LentLocked_x\ \xs_x$ where $\x\in\xs_x$
\end{enumerate}
\item $\AuxTypeCheck{\SubstFun{\TypeEnv{\ctx}}{\Gamma}}{\LentLocked'}{\MutGroup'}{\StronglyLocked'}{\val'}{\T'}$ 
\item $\LessEq{(\LentLocked''\ \xs''){\setminus}\dom{\Gamma}}{\LentLocked'\ \xs'}$
\end{enumerate}
Let $\Gamma'=\TypeEnv{\dvs\,\dvs'}$. From (B), and \refToLemma{inversionBlock}, for some $\T''$, $\LentLocked$, and $\StronglyLocked$
we have that
\begin{enumerate} [(a)]
\item $\DecsOK{{\SubstFun{\TypeEnv{\ctx}}{\SubstFun{\Gamma}{\Gamma'}}}}{\LentLocked}{\StronglyLocked}{\dvs}$,
\item $\DecsOK{{\SubstFun{\TypeEnv{\ctx}}{\SubstFun{\Gamma}{\Gamma'}}}}{\LentLocked}{\StronglyLocked}{\dvs''}$, and
\item $\TypeCheck{{\SubstFun{\TypeEnv{\ctx}}{\SubstFun{\Gamma}{\Gamma'}}}}{\LentLocked}{\StronglyLocked}{\val}{\Type{\mu}{\C}}
$ with $\mu\leq\mu_x$
\end{enumerate}
From (4) and the fact that we can assume that $x\not\in\dom{\dvs}$, we derive
$\SubstFun{\Gamma}{\TypeEnv{\dvs\,\dvs''}}=\SubstFun{\Gamma,\TypeEnv{\dvs}}{\TypeEnv{\dvs''}}$.
Therefore, from (b), (c), and rule \rn{T-block}, followed by rule \rn{T-sub} (in case $\mu<\mu_x$) we derive
\begin{itemize}
\item [(C1)] $\TypeCheck{{\SubstFun{\TypeEnv{\ctx}}{\Gamma,\TypeEnv{\dvs}}}}{\LentLocked'}{\StronglyLocked'}{{\Block{\dvs''}{\val}}}{\Type{\mu_x}{\C}}
$
\end{itemize}
From (a), (5), and \refToLemma{weakening}, we have
\begin{itemize}
\item [(a1)] $\DecsOK{{\SubstFun{\TypeEnv{\ctx}}{{\Gamma,\TypeEnv{\dvs}}}}}{\LentLocked'}{\StronglyLocked'}{\dvs}$
\end{itemize}
From (4), {the fact that, for well-formedness of
declarations $x\not\in\dom{\dvs}$}, and \refToLemma{weakening}, we derive
\begin{itemize}
\item [(A1)] $\DecsOK{{\SubstFun{\TypeEnv{\ctx}}{\Gamma,\TypeEnv{\dvs}}}}{\LentLocked'}{\StronglyLocked'}{\dvs'}$,
\item [(B1)] $\DecsOK{{\SubstFun{\TypeEnv{\ctx}}{\Gamma,\TypeEnv{\dvs}}}}{\LentLocked'}{\StronglyLocked'}{\decs'}$, and
\item [(D1)] $\TypeCheck{\SubstFun{\TypeEnv{\ctx}}{\Gamma,\TypeEnv{\dvs}}}{\LentLocked'}{\StronglyLocked'}{\val'}{\T'}$
\end{itemize}
Therefore, from (A1), (B1), (C1), (D1), (a1), (E) and (F), applying rule \rn{T-lock}, we derive
\begin{center}
$\TypeCheck{{\TypeEnv{\ctx}}}{\LentLocked''}{\StronglyLocked''}{\Block{\dvs'\ \dvs\ \Dec{\Type{\mu}{\C}}{\x}{\Block{\decs}{\val}}\ \decs'}{\val'}}{\T_b}$.
\end{center}
From \refToLemma{subCtx}.2, we derive $\TypeCheckGround{\e'}{\T}$.
\qed

\medskip
\noindent
{\bf Proof of \refToLemma{decomposition}}.
By induction on $\e$. \\
If $\e=\x$ or $\e=\ConstrCall{\C}{{\xs}}$, then $\e$ is a well-formed value.\\
If $\e$ is  $\FieldAccess{{\val}}{\f}$, or $\MethCall{{\val}}{\m}{{\vals}}$,
or $\FieldAssign{{\val}}{\f}{\val}$, then with $\ctx=\emptyctx$ and $\preRedex=\e$
we have that $\e=\Ctx{\preRedex}$.\\
If $\e=\ConstrCall{\C}{{\vals}}$ and
$\exists\val\in\{\vals\}\ \notRef{\val}$, then applying congruence \rn{New} and \rn{Body} we get
$\congruence{\e}{\Block{\Dec{\T_1}{y_1}{\val_1}\cdots\Dec{\T_n}{y_n}{\val_n}}{\ConstrCall{\C}{{\xs}}}}$.
Therefore, either for all $i$, $1\leq i\leq n$, $\WFdv{\Dec{\T}{\y_i}{\val_i}}$, or there is
an $i$, $1\leq i\leq n$,  such that $\not\WFdv{\Dec{\T}{\y_i}{\val_i}}$. In the first case the lemma holds, and in the
second with $\ctx=\emptyctx$ and $\preRedex=\Block{\Dec{\T_1}{y_1}{\val_1}\cdots\Dec{\T_n}{y_n}{\val_n}}{\ConstrCall{\C}{{\xs}}}$
we have that $\e=\Ctx{\preRedex}$. \\
Let us now consider blocks.\\
If $\e=\Block{\dvs}{\val}$, since $\Block{\dvs}{\val}$ is a value,
from Proposition \ref{lemma:congruenceValue}, either $\congruence{\Block{\dvs}{\val}}{\x}$ or 
$\congruence{\Block{\dvs}{\val}}{\stVal}$ for some $\WFrv{\stVal}$. In the first case, 
and if $\stVal=\ConstrCall{\C}{{\xs}}$, then the lemma holds. 
In the second, let $\stVal=\Block{\dvs'}{\cOrx}$; either for all $\dv\in\dvs'$
we have that $\WFdv{\dv}$, or there is $\dv'\in\dvs'$ such that $\not\WFdv{\dv'}$, in which
case with $\ctx=\emptyctx$ and $\preRedex=\Block{\dvs'}{\cOrx}$ we have $\congruence{\e}{\Ctx{\preRedex}}$. \\
If $\e=\Block{\dvs\ \Dec{\T}{\x}{\e'}\ \decs}{\val}$ with $\WFdv{\dvs}$, we have that 
either $\e'=\val'$ and $\not\WFdv{\Dec{\T}{\x}{\val'}}$ or $\e'$ is not a value.\\
In the first case, define $\ctx=\emptyctx$ and $\preRedex=\e$, we derive that $\e=\Ctx{\preRedex}$.
In the second,
by induction hypothesis on $\e'$, we have that
$\e'=\CtxP{\preRedex}$ for some $\ctxP$ and $\rho$. Define $\ctx=\Block{\dvs\ \Dec{\T}{\x}{\ctxP}\ \decs}{\val}$.
We get that $\e=\Ctx{\rho}$.\qed

\end{document}